\documentclass[preprint,12pt]{elsarticle}

\usepackage{amssymb}
\usepackage{amsmath}
\usepackage{mathtools}

\usepackage{graphicx,color}
\usepackage{amsthm}
\usepackage{aliascnt}
\usepackage[colorlinks]{hyperref}
\usepackage{framed,ascmac}
\usepackage[linesnumbered,ruled]{algorithm2e}
\usepackage{multicol,multirow}
\usepackage{cite}
\usepackage{tikz}
\usetikzlibrary{arrows,positioning,automata,calc}
\usetikzlibrary{arrows.meta}
\tikzset{
  implication/.style={
    -{Implies[length=4mm,width=3mm]}
  }
}
\usepackage[nameinlink,capitalize,noabbrev]{cleveref}
\usepackage{mathrsfs}
\usepackage[all]{xy}
\usepackage{comment}

\usepackage{graphicx}

\renewcommand{\Join}{%
  \mathop{\scalebox{1.15}{$\bowtie$}}\displaylimits}

\newcommand{\supp}{\mathop{\rm supp}}

\newcommand{\cl}{{\rm cl}}
\newcommand{\conv}{{\rm conv}}

\newcommand{\proj}{\pi}
\newcommand{\avgd}{m^{\to}}
\newcommand{\median}{{\rm median}}

\renewcommand{\mid}{\,:\,}

\newcommand{\be}{\subseteq}

\newcommand{\bb}[1]{{\mathbb{#1}}}

\newtheorem{theorem}{Theorem}[section]
\newaliascnt{lemma}{theorem}
\newtheorem{lemma}[lemma]{Lemma}
\aliascntresetthe{lemma}

\newaliascnt{proposition}{theorem}
\newtheorem{proposition}[proposition]{Proposition}
\aliascntresetthe{proposition}

\newaliascnt{corollary}{theorem}
\newtheorem{corollary}[corollary]{Corollary}
\aliascntresetthe{corollary}

\newaliascnt{definition}{theorem}
\newtheorem{definition}[definition]{Definition}
\aliascntresetthe{definition}

\newaliascnt{example}{theorem}
\newtheorem{example}[example]{Example}
\aliascntresetthe{example}

\newaliascnt{remark}{theorem}
\newtheorem{remark}[remark]{Remark}
\aliascntresetthe{remark}

\newaliascnt{claim}{theorem}

\aliascntresetthe{claim}

\crefname{theorem}{Theorem}{Theorems}
\Crefname{theorem}{Theorem}{Theorems}

\crefname{lemma}{Lemma}{Lemmas}
\Crefname{lemma}{Lemma}{Lemmas}

\crefname{proposition}{Proposition}{Propositions}
\Crefname{proposition}{Proposition}{Propositions}

\crefname{corollary}{Corollary}{Corollaries}
\Crefname{corollary}{Corollary}{Corollaries}

\crefname{definition}{Definition}{Definitions}
\Crefname{definition}{Definition}{Definitions}

\crefname{example}{Example}{Examples}
\Crefname{example}{Example}{Examples}

\crefname{remark}{Remark}{Remarks}
\Crefname{remark}{Remark}{Remarks}

\crefname{claim}{Claim}{Claims}
\Crefname{claim}{Claim}{Claims}

\crefname{section}{Section}{Sections}
\Crefname{section}{Section}{Sections}
\crefname{subsection}{Subsection}{Subsections}
\Crefname{subsection}{Subsection}{Subsections}
\crefname{equation}{Equation}{Equations}
\Crefname{equation}{Equation}{Equations}
\crefname{algorithm}{Algorithm}{Algorithms}
\Crefname{algorithm}{Algorithm}{Algorithms}
\crefname{appendix}{Appendix}{Appendices}
\Crefname{appendix}{Appendix}{Appendices}

\journal{Journal of Combinatorial Theory, Series A}

\begin{document}

\begin{frontmatter}



\title{UTVPI-representable integer point sets:
discrete convexity, polymorphisms, and pairwise closure}


\author[inst1]{Kei Kimura\texorpdfstring{\corref{cor1}}{}}\ead{kimura.kei.993@m.kyushu-u.ac.jp}
\author[inst2]{Kazuhisa Makino}
\author[inst3]{Shota Yamada}
\author[inst1,inst4]{Ryo Yoshizumi}

\affiliation[inst1]{organization={Kyushu University},
            city={Fukuoka},
            country={Japan}}
\affiliation[inst2]{organization={Kyoto University},
            city={Kyoto},
            country={Japan}}
\affiliation[inst3]{organization={National Institute of Advanced Industrial Science and Technology},
            city={Tokyo},
            country={Japan}}
\affiliation[inst4]{organization={Present address: NTT Social Informatics Laboratories},
            city={Kanagawa},
            country={Japan}}

\cortext[cor1]{Corresponding author.}

\begin{abstract}
We study subsets of the integer lattice represented by single-variable-per-inequality (SVPI), difference-constraint (DC), unit two-variable-per-inequality (UTVPI), and two-variable-per-inequality (TVPI) systems. We relate five viewpoints: inequality representation, discrete convexity, polymorphisms, reconstruction from two-coordinate projections, and fixed points of closure operators.

  Our central result completely characterizes UTVPI-representability. For every set $S\subseteq\mathbb Z^n$ with $n>1$, \[ \begin{aligned} &S\text{ is UTVPI-representable}\\ &\;\Longleftrightarrow\; S\text{ is closed under the directed midpoint and median operations}\\ &\;\Longleftrightarrow\; S\text{ is integrally convex and $2$-decomposable}. \end{aligned} \] The median condition may instead be replaced by closedness under some majority operation, and the same class is the fixed-point class of a pairwise directed-midpoint closure operator. Thus, all five viewpoints yield equivalent characterizations of UTVPI-representability. In particular, $2$-decomposability is exactly the global condition needed to lift the known two-dimensional equivalence between integral convexity and UTVPI-representability to arbitrary dimension.
  
  This theorem is embedded in a broader pairwise-closure theory. For a family $F$ of operations, we define a closure operator by closing every two-coordinate projection under $F$ and joining the resulting sets. Its fixed points are precisely the sets that are both $2$-decomposable and $F$-closed, and we establish a local-to-global criterion for such characterizations. A closed-convex-hull analogue characterizes TVPI-representability. We also characterize SVPI-representability by natural multioperations, prove limitations of operation-based characterizations for several related classes, and determine the complete inclusion hierarchies in the general, Boolean, and two-dimensional settings.

\end{abstract}


\begin{keyword}
UTVPI system \sep integral convexity \sep polymorphism \sep
$2$-decomposability \sep pairwise closure \sep TVPI system
\end{keyword}

\end{frontmatter}


\section{Introduction}
\label{sec:introduction}

A subset of the integer lattice can be studied from five complementary
structural viewpoints: inequality representation, preservation under
coordinatewise operations, discrete convexity, reconstruction from
two-coordinate projections, and fixed-point properties under closure
operators. These viewpoints originate in different areas and are
usually studied separately. Our central result brings them together by
giving equivalent characterizations of integer sets representable by
unit two-variable-per-inequality (UTVPI) systems. This unification is
of independent mathematical interest, as it reveals a common structure
underlying these apparently different descriptions. We also develop a
common framework that explains which aspects of this correspondence
extend to related inequality classes and where such extensions fail.

Among these viewpoints, the interaction between representation and
preservation is particularly well developed. It is fundamental to the
theories of Boolean satisfiability and constraint satisfaction, and
has also proved useful in constraint-based knowledge representation
and reasoning, particularly in qualitative temporal and spatial
reasoning~\citep{BK10,BJMM18,BK21}. A basic example is provided by
Boolean relations: Horn and bijunctive relations are characterized by
closedness under the coordinatewise minimum and majority operations,
respectively~\citep{McK43,Hor51,Sch78}. More generally, polymorphisms
provide an algebraic language for describing solution sets and play a
central role in the structural and algorithmic theory of constraint
satisfaction problems. In this paper, we investigate how this
operation-based perspective interacts with the other four viewpoints
for subsets of the integer lattice represented by systems of linear
inequalities.

The viewpoint of reconstruction from two-coordinate projections is
formalized by the notion of \emph{$2$-decomposability}, whose precise
definition is given in \Cref{subsec:Properties-solutions}. Informally,
a set is $2$-decomposable if membership in the set is completely
determined by its two-coordinate projections. Thus,
$2$-decomposability expresses a local-to-global principle based on
pairwise information. In finite-domain CSPs, closedness under a
majority operation implies $2$-decomposability; more generally,
near-unanimity polymorphisms imply bounded
decomposability~\citep{JCC98}. These results suggest a close connection
between pairwise reconstruction and preservation under coordinatewise
operations, and motivate our unified study of these properties together
with discrete convexity, inequality representability, and fixed-point
characterizations by closure operators.

Against this background, we focus on four classes of systems of linear
inequalities: single-variable-per-inequality (SVPI),
difference-constraint (DC), UTVPI, and
two-variable-per-inequality (TVPI) systems. Their definitions are given
in \Cref{subsec:Polyhedral-representation}. Informally, an SVPI
inequality has at most one nonzero coefficient; a DC inequality has
all its coefficients in $\{0,-1,+1\}$, with at most one coefficient
equal to $+1$ and at most one coefficient equal to $-1$; a UTVPI
inequality has at most two nonzero coefficients, each belonging to
$\{0,-1,+1\}$; and a TVPI inequality has at most two nonzero
coefficients, which may take arbitrary values. These four
representability classes are nested in the order listed above. Among
them, DC, UTVPI, and TVPI systems have been extensively studied in
integer programming and combinatorial
optimization~\citep{Sch91,JMS94,Sub04,LaM05,SuW17,DMZ18,KN24}.

Several known results provide individual pieces of the picture.
DC-representability is characterized by closedness under both the upper
and lower midpoint operations~\citep{Murota03}; in two dimensions,
UTVPI-representability is equivalent to integral
convexity~\citep{MMTT19}; every TVPI-representable set is
median-closed~\citep{BoM18b}; integral convexity is preserved under
coordinate projections~\citep{MoM19}; and closedness under a majority
operation implies $2$-decomposability~\citep{JCC98}. 
These results, however, do not provide a complete higher-dimensional
characterization of UTVPI-representability. To the best of our
knowledge, the precise global condition needed to extend the
two-dimensional characterization to arbitrary dimension had not
previously been identified.

\subsection*{Main results}
Let $\mathbb Z$ and $\mathbb R$ denote the sets of integers and real
numbers, respectively. Throughout this paper, let $n$ be an integer
greater than $1$, and let 
$
V=\{1,\ldots,n\}.
$
Unless otherwise specified, $S$ denotes a subset of $\mathbb Z^V$.

The main results of this paper can be grouped into five interrelated
parts. Among them, the characterization theorem for
UTVPI-representability in \textup{(b)} is the central result of the
paper, while \textup{(a)} provides the general pairwise-closure
framework in which its fixed-point formulation naturally arises.

\medskip
\noindent\textup{(a)}
We develop a general pairwise-closure framework, characterize the
fixed points of the resulting closure operators, and establish a
local-to-global criterion for operation-closed classes. We also
develop a closed-convex-hull analogue that yields a fixed-point
characterization of TVPI-representability.
 
\medskip
\noindent\textup{(b)}
We establish equivalent characterizations of UTVPI-representability
in terms of preservation under coordinatewise operations, integral
convexity together with $2$-decomposability, and a fixed-point
property.

\medskip
\noindent\textup{(c)}
We show that SVPI-representability cannot be characterized by ordinary
operations and instead characterize it using two natural binary multioperations. 

\medskip
\noindent\textup{(d)}
We establish impossibility results showing that analogous
operation-based characterizations do not extend to several other
classes considered in the paper.

\medskip
\noindent\textup{(e)}
We determine the complete inclusion hierarchies among the classes
considered in the paper in the general, Boolean, and two-dimensional
settings.

\medskip

We now describe these five parts in more detail.

\medskip
\noindent\textup{(a): Pairwise closure and TVPI-representability.}
For a family $F$ of operations on $\mathbb Z$, let
$\cl_F$ denote closure under $F$, and define
\[
\Phi_F(S)
=
\Join_{W\in\binom{V}{2}}
\cl_F\bigl(\proj_W(S)\bigr),
\]
where $\proj_W(S)$ denotes the projection of $S$ onto $W$, and
$\Join$ denotes the join operation defined in
\Cref{subsec:Properties-solutions}.
This construction defines a closure operator whose fixed points are
characterized by
\[
S=\Phi_F(S)
\quad\Longleftrightarrow\quad
S\text{ is both $2$-decomposable and $F$-closed}.
\]
We further establish a local-to-global criterion for a prescribed
majority operation $g$: a two-dimensional condition determines
whether, in every dimension, the fixed points of $\Phi_F$ are
precisely the sets closed under $F\cup\{g\}$. This criterion yields,
in a unified way, fixed-point characterizations for sets closed under
a prescribed majority operation and for DC-representable sets.
Combined with the UTVPI characterization in \textup{(b)}, this
framework also identifies the UTVPI-representable sets as precisely
the fixed points of $\Phi_{\avgd}$, where $\avgd$ denotes the directed
midpoint operation introduced in discrete convex
analysis~\citep{TaT21}.\footnote{The term \emph{directed discrete
midpoint operation} is used in \citep{TaT21}; we abbreviate it to
\emph{directed midpoint operation}.}

We also introduce a closed-convex-hull analogue of this pairwise
closure construction and prove that its fixed points are precisely the
TVPI-representable sets. Moreover, if every two-coordinate projection
of $S$ is closed-hole-free, then $2$-decomposability,
TVPI-representability, and closedness under some majority operation
are equivalent. Thus, the pairwise-closure framework also explains how
pairwise convex information governs global TVPI-representability.

\medskip
\noindent\textup{(b): Central characterization of
UTVPI-representability.}
Our central theorem gives the following equivalent characterizations:
\begin{align}
&S\text{ is UTVPI-representable}\label{eq--11}\\
&\quad\Longleftrightarrow\quad
S\text{ is closed under both $\avgd$ and the median operation}\label{eq--22}\\
&\quad\Longleftrightarrow\quad
S\text{ is integrally convex and $2$-decomposable}.\label{eq--33}
\end{align}
The median condition may equivalently be replaced by closedness under
some majority operation. Together with the fixed-point result in
\textup{(a)}, the theorem links five viewpoints on
UTVPI-representability: inequality representation, closedness under
operations, integral convexity, $2$-decomposability, and fixed points
of pairwise closure.

Among these implications,
\eqref{eq--22}~$\Rightarrow$~\eqref{eq--33}
is an immediate consequence of known results, whereas
\eqref{eq--33}~$\Rightarrow$~\eqref{eq--11}
is obtained by combining known results with the lifting argument in the proof of \Cref{thm:main}. 
The principal new ingredient in
\eqref{eq--11}~$\Rightarrow$~\eqref{eq--22}
is the directed-midpoint closedness of UTVPI-representable sets;
their median-closedness had already been established. Consequently,
$2$-decomposability is exactly the additional global condition needed
to extend the two-dimensional characterization to arbitrary dimension.
The complete hierarchy in \textup{(e)} places this characterization
in a broader intersection-theoretic context and reveals a contrasting
behavior for TVPI-representability.

\medskip
\noindent\textup{(c): Multioperation characterizations of
SVPI-representability.}
The SVPI class exhibits a different phenomenon.
We show that no
family of ordinary operations characterizes SVPI-representability by
closedness. We then introduce two natural binary multioperations, one
associated with the integer box spanned by two points and the other
with the integer neighborhood of their midpoint, and show that
closedness under either multioperation characterizes
SVPI-representability.

As is clear from the definition of SVPI systems, SVPI-representable
sets have a Cartesian-product structure. Consequently, independently
of the midpoint-neighborhood characterization, they can be
reconstructed from their one-coordinate projections alone, in
contrast to the pairwise constructions in \textup{(a)}. The box
characterization also follows naturally from the same structure. The
midpoint-neighborhood characterization, by contrast, requires an
additional argument.

\medskip
\noindent\textup{(d): Limits of operation-based
characterizations.}
We next identify the limits of analogous operation-based
characterizations for TVPI-representable, closed-hole-free, hole-free,
$2$-decomposable, and integrally convex sets. The first four classes
are not preserved under coordinate projections. Consequently, none of
them admits a characterization, uniform over all dimensions, by
closedness under a fixed family of total multioperations.

For integrally convex sets, the impossibility result is stronger. For
every $n\geq3$, the class of integrally convex subsets of
$\mathbb Z^n$ is not closed under intersections and therefore cannot
be characterized by closedness under any family of multioperations on
$\mathbb Z$, whether or not they are total. Since ordinary operations
are singleton-valued total multioperations, both impossibility results
apply, in particular, to ordinary operations.

\medskip
\noindent\textup{(e): Complete inclusion hierarchies.}
Finally, we determine all inclusion relations among the
representability, convexity, and operation-closed classes considered
in the paper.
 We give the complete Hasse diagram in the general
setting for $n\geq3$ and determine the corresponding hierarchies over
the Boolean domain and in dimension two, where several classes
coincide. 
For $n\geq3$, we also obtain the complete list of nontrivial pairwise
intersection characterizations of UTVPI-representability among the
classes considered in this paper. In contrast, no analogous pairwise
intersection characterization exists for TVPI-representability.

\subsection*{Related work}

Operation-based characterizations over finite domains and their
connections with decomposability and local consistency are classical;
see, e.g., \citep{McK43,Hor51,Sch78,JCG95,JCC98,CJJ00}. Related
representation--preservation correspondences over infinite domains
include semilinear Horn representations, tropical convexity, and
median-closed semilinear relations~\citep{BoM18a,BoM18b}. In discrete
convex analysis, DC systems are closely related to
$\mathrm{L}^{\natural}$-convexity and rounded-midpoint
operations~\citep{Murota03}; directed discrete midpoint convexity was
introduced in~\citep{TaT21}; and integral convexity and its behavior
under projection have been studied in, e.g.,
\citep{MMTT19,MoM19,MT23}.

UTVPI systems form a well-studied class lying between DC and general
TVPI systems. Their feasibility, structural properties, and
algorithmic applications have been studied extensively; see, among
others,
\citep{Sch91,GuP92,JMS94,Sub04,LaM05,Min06,BKP11,BOS13,UpC13,
SuW17,DMZ18,KS21,KN24}. They are also related to the
Edmonds--Johnson property~\citep{EJ73,GS86,CGZ07,DZ09,DMZ18}.
The present work is complementary to this literature, focusing on
structural characterizations of the integer solution sets of UTVPI
and related systems.

\subsection*{Organization of the paper}

The remainder of this paper is organized as follows.
\Cref{sec:preliminaries} introduces $2$-decomposability, closedness
under operations, midpoint operations, notions of discrete convexity,
and representability by SVPI, DC, UTVPI, and TVPI systems.
\Cref{sec:2-decomp-closure} develops pairwise closure operators,
characterizes their fixed points, and establishes the TVPI fixed-point
characterization and related local-to-global results.
\Cref{sec:polyhedra-closure} 
presents five equivalent
characterizations of UTVPI-representability, describes SVPI-representability in terms of multioperations and a
one-dimensional reconstruction formula, and
determines the limits of analogous operation-based characterizations
for the remaining classes.
\Cref{sec:class-hierarchy} determines the complete inclusion
hierarchies among the classes considered in this paper in the general,
Boolean, and two-dimensional settings. 
In the general setting, the section also identifies all nontrivial
pairs of these classes whose intersection is precisely the UTVPI class
and shows that no analogous pair exists for the TVPI class.
 The separating examples needed to establish
the strict inclusions and incomparabilities in these hierarchies are
collected in the appendix.

\section{Preliminaries}
\label{sec:preliminaries}
In this section, we introduce the basic notions and notation used throughout the paper, including $2$-decomposability, closedness, closure, polymorphisms, convexity, and inequality representability.

\subsection{\texorpdfstring{$2$}{2}-decomposability}
\label{subsec:Properties-solutions}
For a subset $W \subseteq V$ and 
 a vector 
$x = (x_1,\dots,x_n) \in \mathbb{Z}^V$, let 
$\proj_{W}(x)$ denote the projection of $x$ onto $W$, that is,
\[
  \proj_{W}(x)
    = (x_j \mid j \in W) \ \in \ \mathbb{Z}^W.
\]
Here and throughout the paper, we use the natural identification of
$\mathbb Z^W$ with $\mathbb Z^{|W|}$.\footnote{More precisely, whenever
$W=\{i_1,\ldots,i_r\}$ with $i_1<\cdots<i_r$, we identify
$(x_j\mid j\in W)$ with $(x_{i_1},\ldots,x_{i_r})\in\mathbb Z^r$.}
For a set $S \subseteq \mathbb{Z}^V$, we extend this notation by defining
\[
  \proj_{W}(S)
    = \{ \proj_{W}(x) \mid x \in S\} \ 
    \subseteq \ \mathbb{Z}^W.
\]
For each  set $W \in \binom{V}{2}$, let 
$T_W$ be  a subset of  $\mathbb{Z}^W$.
We define the \emph{join} of the family $\{T_W \mid W\in\binom{V}{2}\}$ by

\[
  \Join_{W \in \binom{V}{2}} T_W
    = \Bigl\{ x \in \mathbb{Z}^V 
        \mid \proj_{W}(x) \in T_{W}
        \text{ for all } W \in  \binom{V}{2}\Bigr\}.
\]
It is immediate from the definition that \begin{equation}
\label{eq-111} S \subseteq \Join_{W \in 
\binom{V}{2}}\proj_{W}(S)
\end{equation}
for any $S \subseteq \mathbb{Z}^V$. 
For example, let $V=\{1,2,3\}$ and
$
S=\{(0,0,0),(0,1,1)$, $(1,0,1),(1,1,0)\}
   \subseteq \mathbb{Z}^3.
$
Then
\[
\proj_{\{1,2\}}(S)
=\proj_{\{1,3\}}(S)
=\proj_{\{2,3\}}(S)
=\{0,1\}^2, 
\]
and hence 
$
\Join_{W\in\binom{V}{2}}\proj_W(S)
=
\{0,1\}^3
$. Therefore, we have  
\[
S
\subsetneq
\Join_{W\in\binom{V}{2}}\proj_W(S).
\]
The sets for which equality holds are called $2$-decomposable.
\begin{definition}[$2$-decomposability~\citep{JCC98}]
A set $S \subseteq \mathbb{Z}^V$ is \emph{$2$-decomposable} if 
$S = \Join_{W\in\binom{V}{2}}\proj_W(S)$.
\end{definition}

As discussed in the introduction, $2$-decomposability plays a
fundamental role in the theory of constraint satisfaction problems.

\subsection{Closure systems and closed sets}
\label{subsec-closure}

A family $\mathcal{C}\subseteq 2^{\mathbb{Z}^V}$ is called a
\emph{closure system} if it is closed under arbitrary intersections, that is,
\[
\bigcap_{D \in \mathcal{D}}D\in\mathcal{C}
\]
for every subfamily $\mathcal{D}\subseteq\mathcal{C}$.
The members of $\mathcal{C}$ are called \emph{$\mathcal{C}$-closed sets}.
We next introduce the notion of closedness under operations.

Let $k$ be a positive integer. 
For a $k$-ary operation $f:\mathbb{Z}^k \to \mathbb{Z}$ and vectors $x^{(1)},\dots , x^{(k)} \in \mathbb{Z}^V$, define $f(x^{(1)},\dots , x^{(k)})$ componentwise by 
\[f(x^{(1)},\dots , x^{(k)}) = \bigl(f(x_j^{(1)},\dots, x_j^{(k)}) \mid j \in V \bigr),\] where $x^{(i)} = (x_1^{(i)},\dots , x_n^{(i)})$ for each $i=1,\dots,k$.

\begin{definition}[Closedness under operations]
Let $S$ be a subset of $\mathbb{Z}^n$. 
For an operation $f:\mathbb{Z}^k \to \mathbb{Z}$,  we say that 
 $S$ is  \emph{closed under} $f$ $($or simply \emph{$f$-closed}$)$ 
if  $f(x^{(1)},\dots , x^{(k)}) \in S$  for all $x^{(1)},\dots , x^{(k)} \in S$. 

More generally, 
for a family of operations $F$, 
we say that a set $S\subseteq\mathbb Z^V$ is 
\emph{closed under $F$} $($or simply \emph{$F$-closed}$)$
if it is closed under every operation in $F$.
\end{definition}
By definition, 
the family of all $F$-closed subsets of 
$\mathbb Z^V$ forms a closure system.
Hence, for every set $S \subseteq \mathbb Z^V$, the $F$-closure of $S$  is given  by   
\[\cl_F(S)=\bigcap
\{
T\subseteq\mathbb{Z}^V
\mid
T\supseteq S
\text{ and } T \text{ is } F\text{-closed}
\},\]
that is, $\cl_F(S)$ is the smallest $F$-closed set containing $S$.
It is well known that  $\cl_F$ is indeed a \emph{closure operator}; it is extensive 
(i.e., $S \subseteq \cl_F(S)$), monotone (i.e., $S \subseteq T$ implies 
$\cl_F(S) \subseteq \cl_F(T)$), and idempotent 
(i.e., $\cl_F(\cl_F(S)) = \cl_F(S)$).
For simplicity, we write $\cl_f(S)$ 
instead of $\cl_{\{f\}}(S)$.

Closedness under operations can be viewed from two perspectives:
we may either fix an operation and consider the sets closed under it,
or fix a set and consider the operations under which it is closed.
The latter leads to the following definition.
An operation $f$ is called a \emph{polymorphism} of $S$ if $S$ is $f$-closed.
Majority and median operations are standard examples of polymorphisms that have been extensively studied in the literature.

A ternary operation $f:\mathbb{Z}^3\to\mathbb{Z}$ is called a
\emph{majority} if, for all $x,y\in\mathbb{Z}$,
\[
f(x,x,y)=f(x,y,x)=f(y,x,x)=x.
\]
In other words, a majority operation returns the majority value whenever one exists; otherwise, its value may be chosen arbitrarily. For example, every majority operation satisfies
$
f(1,2,1)=1$,
whereas the value of $f(1,2,3)$  may be any integer.  
The  operation $\median: \mathbb{Z}^3 \to \mathbb{Z}$, which returns the median  of its three arguments, 
is a canonical example of a majority operation.

The following lemma is an immediate consequence of a theorem of Jeavons--Cohen--Cooper relating $2$-decomposability to closedness under operations \citep{JCC98}.\footnote{Although Theorem~3.5 in \citep{JCC98} is stated for finite domains, Section~4.4 in \citep{JCC98} observes that it extends to infinite domains as well.}

\begin{lemma}[\citep{JCC98}]
\label{thm:near-unanimity-closed-is-decompo}
Every set $S \subseteq \mathbb{Z}^n$ that is closed under some majority operation is $2$-decomposable.
\end{lemma}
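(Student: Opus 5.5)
We have to show that if $S$ is closed under a majority operation $f$, then every $x\in\Join_{W\in\binom{V}{2}}\proj_W(S)$ already lies in $S$. The inclusion $S\subseteq\Join_{W}\proj_W(S)$ is \eqref{eq-111}, so only the reverse inclusion needs proof. The plan is to establish, by induction on $k$, the stronger statement: for every $U\subseteq V$ with $|U|=k\ge 2$ there is a point $y\in S$ with $\proj_U(y)=\proj_U(x)$. Taking $U=V$ then gives $x\in S$. The base case $k=2$ is exactly the hypothesis $\proj_W(x)\in\proj_W(S)$ for all $W\in\binom{V}{2}$.

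For the inductive step, let $|U|=k\ge3$ and pick three distinct coordinates $i_1,i_2,i_3\in U$. For $t=1,2,3$ the set $U\setminus\{i_t\}$ has $k-1\ge2$ elements, so the induction hypothesis gives $y^{(t)}\in S$ agreeing with $x$ on $U\setminus\{i_t\}$. Set $z=f(y^{(1)},y^{(2)},y^{(3)})$, computed coordinatewise; then $z\in S$ because $S$ is $f$-closed. Now fix $j\in U$. If $j\notin\{i_1,i_2,i_3\}$, all three $y^{(t)}_j$ equal $x_j$, and a majority operation is idempotent ($f(a,a,a)=a$ follows from $f(a,a,b)=a$ with $b=a$), so $z_j=x_j$. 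If $j=i_s$, the two vectors $y^{(t)}$ with $t\ne s$ both have $j$-th coordinate $x_j$. Whatever the third value is, the majority identities give $z_j=x_j$. Hence $\proj_U(z)=\proj_U(x)$, which completes the induction.

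No step is a serious obstacle. The only points to check are that the induction needs $k-1\ge2$, which is why we pick three coordinates only when $k\ge3$, and that the argument never uses finiteness of the domain. This is consistent with the remark that the result of \citep{JCC98} extends to infinite domains. The proof is purely a local-to-global patching argument: three partial witnesses, each correct off one distinguished coordinate, are combined by the majority operation into a single witness that is correct everywhere on $U$.
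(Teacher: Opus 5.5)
Your proof is correct. The induction on $|U|$ is complete: it combines three witnesses, each agreeing with $x$ off one of three chosen coordinates, using the majority identities. You are also right that it needs only $k-1\ge 2$ and never uses finiteness of the domain.

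The paper gives no proof of this lemma. It cites Theorem~3.5 of Jeavons--Cohen--Cooper and relies on their remark that the result extends to infinite domains. Your argument is the standard Baker--Pixley-style patching argument behind that theorem, specialized to ternary near-unanimity (majority) operations. Writing it out makes the lemma self-contained and checks the infinite-domain case directly, rather than through the paper's footnote.
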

In general, the converse of
\Cref{thm:near-unanimity-closed-is-decompo} does not hold.
In fact, a stronger non-characterizability result for the class of
$2$-decomposable sets is proved in
\Cref{thm:limits-operation-characterizations}~\textup{(i)}.

We next introduce several midpoint-related operations.
Note that, for two integers $x$ and $y$, the midpoint
$
\frac{x+y}{2}
$
need not be integral. This gives rise to several natural integral midpoint operations.
Binary operations $m^+$ and $m^-$ 
are called the
\emph{upper midpoint} and the
\emph{lower midpoint}, respectively, and are defined by
\[
m^+(x,y)=\bigl\lceil \tfrac{x+y}{2} \bigr\rceil \mbox{ and }m^-(x,y)=\bigl\lfloor \tfrac{x+y}{2} \bigr\rfloor
\]for any integers $x$ and $y$. 
Another example is the
\emph{directed midpoint} operation~\citep{TaT21}, defined for
any integers $x$ and $y$ by
\[
\avgd(x,y)=
\begin{cases}
\left\lceil\dfrac{x+y}{2}\right\rceil,
& \text{if }x\ge y,\\[8pt]
\left\lfloor\dfrac{x+y}{2}\right\rfloor,
& \text{if }x<y.
\end{cases}
\]
In other words, it rounds $(x+y)/2$ toward $x$ whenever $(x+y)/2$ is not integral.
These midpoint operations are related through their induced closure
properties. In particular, closedness under both $m^+$ and $m^-$
implies $\avgd$-closedness~\citep[Corollary~2]{TaT21}.

\subsection{Convexity and inequality representations}
\label{subsec:Polyhedral-representation}
In this subsection, we next introduce the notions of hole-free and integrally convex sets, followed by convex sets represented by systems of linear inequalities.

For a set $S\subseteq\mathbb{R}^n$, let $\conv(S)$ denote the
\emph{convex hull} of $S$, that is, the intersection of all convex sets
containing $S$. We denote by $\overline{\conv}(S)$ the
\emph{closed convex hull} of $S$, that is, the intersection of all
(topologically) closed convex sets containing $S$.
Both $\conv$ and $\overline{\conv}$ are closure operators on $2^{\mathbb{R}^n}$. 
A set $S\subseteq\mathbb{Z}^V$ is  called \emph{hole-free} \citep{Murota03}  if
$
S=\conv(S)\cap\mathbb{Z}^V $, 
and \emph{closed-hole-free} if
$
S=\overline{\conv}(S)\cap\mathbb{Z}^V $. 
Equivalently, $S$ coincides with the set of  integer points in its convex hull and closed convex hull, respectively.
Since every $S \subseteq  \mathbb{Z}^V$ satisfies $S \subseteq \conv(S)\cap\mathbb{Z}^V \subseteq \overline{\conv}(S)\cap\mathbb{Z}^V$ by definition, every closed-hole-free set is  hole-free. 
Although the converse fails in general, it holds whenever $S$ is finite.

For $x \in \mathbb{R}^V$, the \emph{integer neighborhood} of $x$ is
\[
  N(x) = \{ z \in \mathbb{Z}^V \mid |z_j - x_j| < 1 \text{ for } j \in V \}.
\]
A set $S \subseteq \mathbb{Z}^V$ is called \emph{integrally convex} if 
 every $x\in\conv(S)$
satisfies  $x \in \conv\bigl(S\cap N(x)\bigr)$.
By definition, every integrally convex set is hole-free, since
$N(x)=\{x\}$ for every $x\in\mathbb Z^V$.
Moreover, it is known that the convex hull of every integrally convex set
is closed~\citep{MT20}. Hence, every integrally convex set is closed-hole-free.
Consequently, the corresponding classes of sets satisfy
\begin{equation}
\label{eq-convexrelations}
\text{integrally convex} \ \, 
\subsetneq \, \ 
\text{closed-hole-free}
\ \, \subsetneq \ \, 
\text{hole-free}.
\end{equation}

The following theorem shows that it suffices to consider only midpoints in the definition of integral convexity.
\begin{lemma}[Theorem 2.1 in~\citep{MT23}]
\label{lemma:IntConvSet-characterization}
A set $S \be \bb{Z}^V$ is integrally convex if and only if 
\begin{align}
\frac{y+z}{2} \in \conv\left(S\cap N\left(\frac{y+z}{2}\right)\right)
\end{align}
for every $y,z \in S$.
\end{lemma}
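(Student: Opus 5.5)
The ``only if'' direction is immediate. For $y,z\in S$ the midpoint $\frac{y+z}{2}$ lies in $\conv(S)$, so the definition of integral convexity applied to $x=\frac{y+z}{2}$ gives the displayed condition. All the work is in the ``if'' direction.

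For the converse, I would introduce the set
\[
K=\bigl\{x\in\mathbb R^V \mid x\in\conv\bigl(S\cap N(x)\bigr)\bigr\}.
\]
Integral convexity of $S$ says precisely that $\conv(S)\subseteq K$. Since $S\subseteq K$ trivially (because $N(z)=\{z\}$ for integral $z$), it suffices to prove that $K$ is convex.

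First I would describe $K$ cell by cell. Partition $\mathbb R^V$ into the relatively open faces $\sigma$ of the integer unit-cube subdivision. On each such face $N(x)$ is constant: it consists of the integer vertices of the smallest closed integer cube containing $\sigma$. Hence $K\cap\sigma=\conv(S\cap\overline\sigma)\cap\sigma$. This shows that $K$ is the union of the polytopes $P_C=\conv(S\cap C)$ over closed integer unit cubes $C$. Consequently $K$ is closed, and it is a locally finite union of polytopes.

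The key step is to prove that $K$ is convex, and I would do this by a local-to-global argument of Tietze--Nakajima type: a closed, connected, locally convex subset of $\mathbb R^n$ is convex. The step I expect to be the main obstacle is local convexity at a point $x\in K$, i.e.\ $K\cap U$ is convex for a small neighbourhood $U$ of $x$. Near $x$, $K$ is the union of the finitely many polytopes $P_C$ over cubes $C\ni x$. The danger is a reflex edge where two such polytopes meet without their union being convex. Such a defect should be witnessed by two integer points $y,z\in S$ lying in adjacent cubes whose midpoint satisfies $\frac{y+z}{2}\notin\conv\bigl(S\cap N(\frac{y+z}{2})\bigr)$. So I would argue by contradiction: assume local convexity fails at $x$, pick a supporting-hyperplane witness (a linear functional exposing a non-convex corner of $K\cap U$), and extract from the vertices of the relevant cubes two points $y,z\in S$, with $\|y-z\|_\infty\le 2$, whose midpoint lies strictly on the wrong side. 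This contradicts the hypothesis. The bookkeeping for choosing $y$ and $z$ so that the midpoint falls into the offending face is where I expect the real difficulty.

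Connectedness I would prove by induction on $\|y-z\|_\infty$ for $y,z\in S$. Let $w=\frac{y+z}{2}$. The hypothesis gives $w\in K$, so $w\in\conv(S\cap N(w))$, and $N(w)$ is nonempty. Pick any $u\in S\cap N(w)$. Since $|u_j-w_j|<1$ for all $j$ and $y-w=w-z$, one checks $\|y-u\|_\infty,\|u-z\|_\infty<\|y-z\|_\infty$ once $\|y-z\|_\infty\ge 2$. The base case $\|y-z\|_\infty\le1$ holds because then $y$ and $z$ lie in a common unit cube, so the segment $[y,z]$ lies in $P_C\subseteq K$.

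This shows that all points of $S$ lie in one path component of $K$. Each $P_C$ is a polytope spanned by points of $S$, so every point of $K$ is joined to $S$ within $K$, and hence $K$ is connected. Tietze--Nakajima then yields that $K$ is convex, so $\conv(S)\subseteq K$, which is integral convexity.
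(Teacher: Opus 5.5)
Your outer framework is sound: the ``only if'' direction, the reduction to convexity of $K$, the identity $K=\bigcup_C\conv(S\cap C)$ (so $K$ is closed), the connectivity induction (for $D=\|y-z\|_\infty\ge 2$ one indeed has $|y_j-u_j|<D/2+1\le D$), and the appeal to Tietze--Nakajima are all correct.

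\textbf{The gap} is the step that carries all the content: local convexity of $K$ is asserted, not proved. At a lattice point $x\in S$, the set $K$ near $x$ is $x+\bigcup_{C\ni x}T_C$, where $T_C$ is the cone generated by $(S\cap C)-x$. Local convexity means that this union of cones is closed under addition. For two generators $g,h$, your hypothesis only gives $g+h\in T_{C''}$ for one cube $C''\ni x$, because $N\bigl(x+\frac{g+h}{2}\bigr)-x$ lies in a single orthant unit cube. It says nothing directly about sums of conic combinations of several generators from different cubes. Converting such a failure into a single pair $y,z\in S$ with an uncovered midpoint is exactly the missing argument. The obvious attempt---re-expand $g+h$ inside $T_{C''}$ and repeat---needs a termination or potential argument that you have not supplied. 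Non-lattice points $x$ are also not addressed separately. Finally, local convexity at $x$ would itself follow from the lemma applied to $S\cap\bigl(x+\{-1,0,1\}^V\bigr)$, which satisfies the same hypothesis. So the topological reduction moves the difficulty into a $3\times\cdots\times 3$ block rather than removing it.

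The paper gives no proof of this lemma; it is quoted from the cited source. A self-contained argument that bypasses topology is a potential argument:
\begin{enumerate}
\item[(1)] Fix $x\in\conv(S)$ and let $B$ be the integer box spanned by the support of one convex representation of $x$ by points of $S$. Note that $N\bigl(\frac{y+z}{2}\bigr)\subseteq B$ for all $y,z\in B$.
\item[(2)] Among all convex representations $x=\sum_{p\in S\cap B}\lambda_p p$ (a nonempty compact polytope of weight vectors), choose one minimizing $\Phi(\lambda)=\sum_p\lambda_p\|p\|_2^2$.
\item[(3)] Suppose two support points $y,z$ had $\|y-z\|_\infty\ge 2$. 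Write $\frac{y+z}{2}=\sum_k\nu_k u^{(k)}$ with $u^{(k)}\in S\cap N\bigl(\frac{y+z}{2}\bigr)$. Move weight $\mu=\min\{\lambda_y,\lambda_z\}$ from each of $y,z$ to weights $2\mu\nu_k$ on the $u^{(k)}$.
\item[(4)] In coordinate $j$, with $d=y_j-z_j$, $\Phi$ changes by $-\mu d^2/2$ if $d$ is even and by $-\mu(d^2-1)/2$ if $d$ is odd, since $u^{(k)}_j$ equals the midpoint coordinate or differs from it by exactly $\frac12$. These changes are never positive and are negative when $|d|\ge 2$, contradicting minimality.
\item[(5)] Hence the support of the minimizer has pairwise $\ell_\infty$-distances at most $1$, so in each coordinate $j$ its values lie in some $\{a_j,a_j+1\}$. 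Since $x_j$ is a convex combination of these values with positive weights, every support point lies in $N(x)$, and $x\in\conv(S\cap N(x))$.
\end{enumerate}
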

We next define convex sets defined by linear systems. 
Let
\begin{eqnarray}
\mathcal{L}&=&\{a^{(i)}x\le b^{(i)}\mid i\in I\}
\label{eq-linearsystem}
\end{eqnarray}
be a \emph{linear system} (i.e., a set of linear inequalities), 
indexed by a possibly infinite set $I$, where $a^{(i)}\in\mathbb{R}^V$ and $b^{(i)}\in\mathbb{R}$ for every $i\in I$.
For $a\in\mathbb{R}^V$, let
\begin{eqnarray*}
\supp(a)&=&\{j\in V\mid a_j\neq 0\}.
\end{eqnarray*}

\begin{definition}
Let $\mathcal{L}$ be a linear system given in $($\ref{eq-linearsystem}$)$. Then  it is called 
\begin{description}
\setlength{\itemsep}{-1pt}
\setlength{\parsep}{0pt}
\item[{\rm TVPI}] if $|\supp(a^{(i)})|\leq 2$ for every $i\in I$;

\item[{\rm  UTVPI}] if it is TVPI and   
$a^{(i)}\in \{0,-1,+1\}^V$ for every $i\in I$;

\item[{\rm DC}] if it is UTVPI and  
 every $a^{(i)}$ has at most one
component equal to $+1$, at most one component equal to $-1$;

\item[{\rm SVPI}] if $|\supp(a^{(i)})|\leq 1$ for every $i\in I$.
\end{description}
Here {\rm TVPI}, {\rm UTVPI}, {\rm DC}, and {\rm SVPI} stand for
\emph{two-variable-per-inequality},
\emph{unit two-variable-per-inequality},
\emph{difference constraint}, and
\emph{single-variable-per-inequality}, respectively.
\end{definition}
By convention, every SVPI system $\mathcal{L}$ is assumed to satisfy
that 
$a^{(i)}\in \{0,-1,+1\}^V$ for every $i\in I$. 
Accordingly, we regard every SVPI system as a UTVPI system.
Hereafter, we use TVPI, UTVPI, DC, and SVPI to denote the corresponding classes of linear systems. Then 
\[
\mathrm{SVPI}
\subsetneq
\mathrm{DC}
\subsetneq
\mathrm{UTVPI}
\subsetneq
\mathrm{TVPI}.
\]
We say that  $S \subseteq \mathbb{Z}^V$ is
linearly representable 
(resp., \emph{TVPI-representable}, \emph{UTVPI-representable},
\emph{DC-representable},
\emph{SVPI-representable})
if there exists a  linear (resp., TVPI, UTVPI, DC, SVPI) system $\mathcal{L}$ such that 
\begin{eqnarray*}
S&=&\{ x\in \mathbb{Z}^V \mid x \mbox{ satisfies }  \mathcal{L}\}. 
\end{eqnarray*}
Such a system $\mathcal{L}$ is called 
a \emph{linear} (resp., TVPI, UTVPI, DC, SVPI) \emph{representation} of $S$.

Although a UTVPI system $\mathcal L$ may contain infinitely many
inequalities, it is always equivalent to a finite UTVPI system.
Indeed, for each coefficient vector $a$, let $B_a$ be the set of
right-hand sides of the inequalities in $\mathcal L$ having
coefficient vector $a$. If $B_a\neq\emptyset$ and
$\inf B_a>-\infty$, then all such inequalities can be replaced by
the single inequality
\[
a x\leq\inf B_a.
\]
If $\inf B_a=-\infty$ for some $a$, then $\mathcal L$ is infeasible.
Since there are only $O(n^2)$ possible UTVPI coefficient vectors,
this yields an equivalent finite UTVPI system.

Consequently, every UTVPI system defines a possibly empty polyhedron
\[
P
=
\left\{
x\in\mathbb R^V
\ \middle|\
x\text{ satisfies }\mathcal L
\right\}.
\]
Thus, every UTVPI-representable set is the set of integer points of a
UTVPI polyhedron, that is,
\[
S=P\cap\mathbb Z^V.
\]

The same argument also applies to SVPI and DC systems.
Thus, every SVPI, DC, and UTVPI system is equivalent to a finite system of the same type, and hence defines a polyhedron.
In contrast, this is no longer true for TVPI systems.
Allowing infinitely many inequalities is essential only for general linear and TVPI systems.

A classical result in convex analysis gives the following geometric characterization of linear representability. 
\begin{proposition}[Theorem 11.5 in~\citep{Roc97}]
\label{prop:linear-holefree}
A set $S\subseteq\mathbb Z^V$ is linearly representable
 if and only if it is closed-hole-free. 
\end{proposition}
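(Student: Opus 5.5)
The statement is Rockafellar's separation theorem, specialized to integer point sets. I would prove the two directions separately, both through the closed convex hull $\overline{\conv}(S)$.

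\emph{Linearly representable $\Rightarrow$ closed-hole-free.} Suppose $S=\{x\in\mathbb Z^V\mid a^{(i)}x\le b^{(i)},\ i\in I\}$. Let $P=\{x\in\mathbb R^V\mid a^{(i)}x\le b^{(i)},\ i\in I\}$.
\begin{enumerate}
\item $P$ is closed and convex, since it is an intersection of closed half-spaces (an empty intersection giving $\mathbb R^V$).
\item $S=P\cap\mathbb Z^V\subseteq P$, so minimality of the closed convex hull gives $\overline{\conv}(S)\subseteq P$.
\item Hence $\overline{\conv}(S)\cap\mathbb Z^V\subseteq P\cap\mathbb Z^V=S$.
\item The reverse inclusion $S\subseteq\overline{\conv}(S)\cap\mathbb Z^V$ always holds.
\end{enumerate}
So $S$ is closed-hole-free.

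\emph{Closed-hole-free $\Rightarrow$ linearly representable.} Set $C=\overline{\conv}(S)$, a closed convex subset of $\mathbb R^V$. The key input is the classical fact that every closed convex set is the intersection of the closed half-spaces containing it (Theorem~11.5 in~\citep{Roc97}).
\begin{enumerate}
\item If $C=\emptyset$, then $S=\emptyset$, and the single inequality $0\cdot x\le -1$ represents $S$.
\item If $C=\mathbb R^V$, the empty system works.
\item Otherwise, take as $\mathcal L$ the family of all inequalities $ax\le b$ whose half-space contains $C$. By the theorem, the real solution set of $\mathcal L$ is exactly $C$.
\item Its integer solution set is therefore $C\cap\mathbb Z^V=S$, by the closed-hole-free assumption.
\end{enumerate}
Such a family $\mathcal L$ is generally infinite. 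This is exactly why the definition allows an arbitrary index set $I$.

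\emph{Main obstacle.} The only substantive step is the separation theorem: each point outside $C$ is strictly separated from $C$ by a hyperplane. I would invoke Rockafellar directly. If a self-contained argument were needed, I would project the outside point onto the nonempty closed convex set $C$, which gives a unique nearest point. The separating hyperplane is then the one through the nearest point, normal to the difference vector.

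The remaining care concerns degenerate cases, which the conventions above settle: the empty set and the whole space, and the fact that closedness of the hull (not mere convexity) is what makes the representation exact.
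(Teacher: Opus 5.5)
Your proof is correct and takes the same approach as the paper, which gives no argument beyond citing Rockafellar's Theorem~11.5 (every closed convex set is the intersection of the closed half-spaces containing it). You simply spell out the routine reduction to that theorem in both directions; your separate handling of $\overline{\conv}(S)=\emptyset$ and $\overline{\conv}(S)=\mathbb R^V$ is harmless, though not strictly needed.
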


The following proposition shows that, in dimension \emph{two}, integral convexity, which is stronger than closed-hole-freeness, also admits a complete characterization in terms of systems of linear inequalities.

\begin{proposition}[{\citep[Proposition 2.1]{MMTT19}}]
\label{prop:two-dim-IntConv=UTVPI}
A set $S \subseteq \mathbb{Z}^2$ is  UTVPI-representable if and only if it is integrally convex.
\end{proposition}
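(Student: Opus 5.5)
The plan is to prove the two directions of the statement separately. For a set $S\subseteq\mathbb Z^2$, the only UTVPI coefficient vectors are $\pm e_1$, $\pm e_2$, and $\pm e_1\pm e_2$.

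\emph{UTVPI-representable $\Rightarrow$ integrally convex.} Write $S=P\cap\mathbb Z^2$, where $P$ is a UTVPI polyhedron (a finite system suffices, as noted above). By \Cref{lemma:IntConvSet-characterization}, it is enough to take $y,z\in S$ with midpoint $x=(y+z)/2$ and show that $x\in\conv(S\cap N(x))$. Since $x\in P$, we split on the parity pattern of $y+z$.

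\begin{itemize}
\item If both coordinates of $x$ are integers, then $N(x)=\{x\}$ and $x\in P\cap\mathbb Z^2=S$, so we are done.
\item If exactly one coordinate is a half-integer, say $x=(a+\tfrac12,b)$, then $N(x)=\{(a,b),(a+1,b)\}$. We must show that both points lie in $P$. Every inequality has integer coefficients in $\{0,\pm1\}$. We may also round right-hand sides down to integers without changing $S$, so we may assume $P$ is described by integral right-hand sides. For an inequality $\alpha_1x_1+\alpha_2x_2\le\beta$ satisfied by $x$, the value at $(a,b)$ or $(a+1,b)$ differs from the value at $x$ by $\pm\tfrac12\alpha_1$. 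Since the value at $x$ lies in $\tfrac12\mathbb Z$, is at most $\beta\in\mathbb Z$, and has fractional part $\tfrac12$ whenever $\alpha_1\ne0$, it is at most $\beta-\tfrac12$. Hence both neighbors satisfy the inequality.
\item If both coordinates are half-integers, $x=(a+\tfrac12,b+\tfrac12)$, then $N(x)$ consists of the four corners of a unit square. For $\alpha_1x_1+\alpha_2x_2$ with $\alpha\in\{\pm e_1\pm e_2\}$, the value at $x$ is an integer. For $\pm e_i$ it is a half-integer, so those constraints hold at all four corners by the same rounding argument. Each diagonal-type constraint $\pm x_1\pm x_2\le\beta$ with integral value at $x$ excludes at most the one or two corners where its value is $(\text{value at }x)+1$. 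The corners where it takes value $(\text{value at }x)-1$ and $(\text{value at }x)$ satisfy it. The case check is then that $x$ is the midpoint of one of the two diagonals of the square, and that some diagonal has both endpoints feasible. A constraint of type $x_1+x_2$ can only cut the corner $(a+1,b+1)$, which lies on the diagonal with $(a,b)$. A constraint of type $-x_1-x_2$ cuts only $(a,b)$, on the same diagonal. Constraints of type $\pm(x_1-x_2)$ cut only $(a+1,b)$ or $(a,b+1)$, which lie on the other diagonal. Therefore, if both diagonals were destroyed, constraints of both sign families would have to be tight at $x$. I expect this step to be the main obstacle, and it is where the existence of $y,z$ must be used. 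Since $y,z\in S$ have midpoint $x$ and $y-z$ has both coordinates odd, the segment $[y,z]$ has direction $(\pm1,\pm1)\cdot$odd, or more generally some odd vector. I would argue that a tight constraint, say $x_1+x_2\le\beta$ tight at $x$ but violated at $(a+1,b+1)$, forces $y$ and $z$ to have equal values of $x_1+x_2$, so $y-z$ is parallel to $(1,-1)$. Similarly, tightness of an $x_1-x_2$ type constraint forces $y-z$ parallel to $(1,1)$, and both cannot hold. Then the surviving diagonal endpoints lie in $S$, and $x$ is their midpoint.
\end{itemize}

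\emph{Integrally convex $\Rightarrow$ UTVPI-representable.} Let $S$ be integrally convex and put $P=\conv(S)$. The paper notes that $P$ is closed and that $S=P\cap\mathbb Z^2$. It therefore suffices to show that every edge direction of $P$ (equivalently, every facet normal in the plane) is one of $(1,0)$, $(0,1)$, $(1,\pm1)$. Then $P$ is cut out by possibly infinitely many UTVPI inequalities, which reduce to finitely many as explained above.

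Take an edge of $P$. I would use the fact that integral convexity implies that $\conv(S)$ is the union of the convex hulls $\conv(S\cap\text{unit square})$, a standard consequence of the neighborhood condition. Hence the boundary of $P$ near any boundary point is made of edges of such local hulls. Since each local hull is spanned by vertices of a unit square, its edges are sides or diagonals of the square, with directions $(1,0)$, $(0,1)$, or $(1,\pm1)$. Every edge of $P$ therefore has one of these directions, and its supporting inequality has a UTVPI normal. The supporting line's right-hand side can be taken to be an integer, because the line contains integer points of $S$.

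This gives the UTVPI representation.
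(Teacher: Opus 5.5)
The paper gives no proof of this proposition; it imports it from \citep{MMTT19}. Your argument therefore has to stand on its own, and it does: both directions are correct.

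The useful comparison is with what the paper's own lemmas already give. For UTVPI $\Rightarrow$ integrally convex, \Cref{prop:UTVPI-median-avg-closed} and \Cref{lem:avgd=>int-conv}, neither of which uses this proposition, prove the implication in every dimension: $\avgd(y,z)$ and $\avgd(z,y)$ lie in $S\cap N(\frac{y+z}{2})$ and average to $\frac{y+z}{2}$. Your parity case analysis is the planar shadow of this. In the both-half-integer case, $\avgd$ selects the diagonal of the unit square whose sign pattern matches that of $y-z$. Your observation that a tight $\pm(x_1+x_2)$ constraint forces $y-z\parallel(1,-1)$, and a tight $\pm(x_1-x_2)$ constraint forces $y-z\parallel(1,1)$, is exactly why that diagonal is never cut. \Cref{lem:avgd-preserves-sum-inequality} packages the same tightness argument uniformly.

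For the converse the paper offers nothing of its own. Your edge-direction argument via $\conv(S)=\bigcup_Q\conv(S\cap Q)$, with $Q$ ranging over vertex sets of unit lattice squares, is the natural route. Its genuinely two-dimensional ingredient is that the normal to each of $(1,0),(0,1),(1,\pm1)$ is again a UTVPI vector. In dimension three, a facet spanned by $\{0,\pm1\}$ edge directions can have a non-UTVPI normal, e.g.\ $x_1+x_2+x_3\le 2$ for $\{0,1\}^3\setminus\{(1,1,1)\}$. This is why \Cref{thm:main} needs $2$-decomposability.

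Two steps deserve one more sentence each.

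First, the claim that it suffices for every edge direction to be UTVPI is false for general closed convex sets, since a disk has no edges. What makes it true here is your local-hull decomposition. For every bounded box $B$, only finitely many unit squares meet $B$, so $P\cap B$ is a convex finite union of polytopes and hence a polytope. Thus $\partial P$ consists of edges and vertices. Each of the at most eight possible UTVPI outward normals supports at most one edge, so $P$ is the intersection of its finitely many edge half-planes.

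Second, treat separately the case where $P$ has empty interior. The empty set and a single point are trivial. If $P$ spans a line, the same local argument shows that its direction is UTVPI, so the line is $\{x : n\cdot x=c\}$ with $n$ UTVPI. Any ends of the segment or ray are then cut off by bounds on a coordinate that is not constant along it.

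Two minor points. Your remark that $[y,z]$ has direction $(\pm1,\pm1)$ times an odd number is wrong as stated, since only the two coordinates of $y-z$ are odd, but nothing later depends on it. Integral right-hand sides are unnecessary, because real ones are allowed.
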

The proposition above shows that, in dimension two, integral convexity
exactly characterizes UTVPI-representability. In higher dimensions,
however, integral convexity alone is no longer sufficient. Our main
theorem identifies $2$-decomposability as the precise additional global
condition: a set is UTVPI-representable if and only if it is both
integrally convex and $2$-decomposable; see \Cref{thm:main}.

The following two results connect inequality representability with the
operation-based viewpoint introduced in \Cref{subsec-closure}.

\begin{theorem}[{\citep[Section~5.5]{Murota03}}]
\label{thm:DC-midpoint-characterization}
A set $S\subseteq\mathbb Z^V$ is DC-representable if and only if it is
closed under both $m^+$ and $m^-$.
\end{theorem}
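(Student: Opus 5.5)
The statement is Murota's characterization: $S$ is DC-representable if and only if it is closed under both $m^+$ and $m^-$. The plan is to prove the two directions separately.

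\emph{Necessity.} Assume $S=\{x\in\mathbb Z^V\mid \mathcal L\}$ with $\mathcal L$ a DC system, which we may take finite. It suffices to show that each single DC inequality is preserved by $m^+$ and $m^-$. For $x_j\le b$, integrality of the data lets us replace $b$ by $\lfloor b\rfloor$. Monotonicity of the ceiling then gives
\[
\lceil (x_j+y_j)/2\rceil\le\lceil \lfloor b\rfloor\rceil=\lfloor b\rfloor,
\]
and the same bound holds for the floor. The case $-x_j\le b$ is symmetric. For $x_i-x_j\le b$, again take $b\in\mathbb Z$. With $u=x_i+y_i$ and $v=x_j+y_j$ we have $u-v\le 2b$. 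We need $\lceil u/2\rceil-\lceil v/2\rceil\le b$, and likewise with floors. If $u$ is odd and $v$ is even, then $u-v\le 2b-1$, which yields the bound. The other parity cases are immediate. This is a routine case check.

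\emph{Sufficiency.} Assume $S$ is closed under $m^+$ and $m^-$. Take $\mathcal L$ to consist of all valid inequalities $x_i-x_j\le \sup_{x\in S}(x_i-x_j)$ and $\pm x_j\le\sup_S(\pm x_j)$, keeping only the finite ones. Let $T$ be the integer solution set of $\mathcal L$; clearly $S\subseteq T$. We must show $T\subseteq S$.

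Fix $z\in T$. The natural route goes through the $\mathrm L^\natural$-convexity machinery, but I would try a direct argument. Set $\phi(x)=\|x-z\|_\infty$. Choose $x\in S$ minimizing $\phi$, breaking ties by minimizing the number of coordinates attaining the maximum.

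Suppose $\phi(x)=d>0$. Let $A=\{j\mid x_j-z_j=d\}$ and $B=\{j\mid z_j-x_j=d\}$. Consider $A\neq\emptyset$. For each $i\in A$ we need a point $y\in S$ with $y_i<x_i$ in a controlled way. The validity of the difference constraints at $z$ is used here: for $i\in A$ and $j\notin A$, some $y\in S$ has $y_i-y_j\le z_i-z_j$, although not necessarily with $y_j=x_j$. Taking $m^-(x,y)$ or $m^+(x,y)$ moves halfway toward $y$. The hope is to construct a single $y$ that handles all of $A$ at once, using closure under $m^+$ and $m^-$ of appropriate combinations. Closure under $m^\pm$ for sets implies closure under $\max$ and $\min$ via iterated midpoints plus boundedness arguments. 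That claim needs care: I expect to prove that $x\vee y$ and $x\wedge y$ are in $S$ using repeated midpoints with sequences $m^+(x,\cdot)$ and integrality, which is a standard step in discrete convex analysis. Given lattice closure, take
\[
y=\bigvee_{j\notin A}\ \bigwedge_{i\in A} y^{(i,j)},
\]
and then move $x$ toward $y$ with a midpoint. This strictly reduces $\phi$ or the tie count, a contradiction. Hence $d=0$ and $z\in S$.

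The main obstacle is sufficiency, specifically establishing lattice closure ($\min$ and $\max$) from midpoint closure, together with the combinatorial construction of the improving point $y$. If this direct route gets messy, I would fall back on citing the known equivalence between discrete midpoint convexity and $\mathrm L^\natural$-convex sets, whose polyhedral description is exactly a DC system. This is the argument of \citep[Section~5.5]{Murota03}.
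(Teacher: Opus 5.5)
Your necessity argument is fine. So is your claim that $m^\pm$-closedness gives closedness under $\vee$ and $\wedge$. Argue by strong induction on $k=\|p-q\|_\infty$: for $k\le1$, $m^+(p,q)=p\vee q$. For $k\ge2$, set $u=m^+(p,q)$. Then $\|p-u\|_\infty,\|q-u\|_\infty\le\lceil k/2\rceil<k$ and $\|(p\vee u)-(q\vee u)\|_\infty\le\lfloor k/2\rfloor$, while $(p\vee u)\vee(q\vee u)=p\vee q$. The meet case is symmetric using $m^-$.

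The genuine gap is the step you describe as ``the hope'': nothing you state makes $y=\bigvee_{j\notin A}\bigwedge_{i\in A}y^{(i,j)}$ an improving point. The witnesses only satisfy $y^{(i,j)}_i-y^{(i,j)}_j\le z_i-z_j$. This condition is invariant under $y\mapsto y+\alpha\mathbf 1$, but $S$ is not. So you have no bound on $y_i-z_i$ at all, $y$ may lie far from $z$, and then $m^-(x,y)$ can be farther than $d$ from $z$. The max--min trick you are imitating works only after each witness is normalized to agree with $z$ at a pivot coordinate, and that normalization is a translation.

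Even a good $y$, one with $\|y-z\|_\infty\le d$ and $y_i\le z_i+d-1$ on $A$, need not decrease your tie-break count. The point $w=m^-(x,y)$ clears $A$, but it gains a new coordinate with $z_j-w_j=d$ whenever $x_j-z_j=-d+1$ and $y_j-z_j=-d$. Repairing such coordinates with $m^+$ can recreate $A$. Separately, for $S=\emptyset$ your system $\mathcal L$ is empty and $T=\mathbb Z^V$, so that case must be handled on its own.

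The missing ingredient is a translation mechanism, and the cleanest way to get it is to lift. Let $\tilde S=\{(t,p)\in\mathbb Z\times\mathbb Z^V : p-t\mathbf 1\in S\}$, with coordinates indexed by $\{0\}\cup V$. Write $p'=p-t\mathbf 1$ and $q'=q-s\mathbf 1$. Then $m^+(p,q)-m^+(t,s)\mathbf 1$ equals $m^+(p',q')$ if $t+s$ is even and $m^-(p',q')$ if $t+s$ is odd, and symmetrically for $m^-$. Hence $\tilde S$ is $m^\pm$-closed, so it is a lattice by the lemma above, and it is invariant under $\pm\mathbf 1$.

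Moreover, $\sup_{\tilde S}(\tilde x_i-\tilde x_j)=\sup_S(x_i-x_j)$, $\sup_{\tilde S}(\tilde x_i-\tilde x_0)=\sup_S x_i$ and $\sup_{\tilde S}(\tilde x_0-\tilde x_i)=\sup_S(-x_i)$. Therefore $z\in T$ exactly when $\tilde z=(0,z)$ satisfies every difference inequality valid for $\tilde S$. For each ordered pair $k\neq l$, pick $w\in\tilde S$ with $w_l-w_k\ge\tilde z_l-\tilde z_k$; this is possible because an integer-valued supremum is attained when finite. Translate $w$ so that $w_k=\tilde z_k$. Then $P^{(k)}=\bigvee_{l\neq k}w^{(k,l)}$ satisfies $P^{(k)}\ge\tilde z$ and $P^{(k)}_k=\tilde z_k$. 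Hence $\bigwedge_kP^{(k)}=\tilde z\in\tilde S$, that is, $z\in S$. This replaces your potential argument entirely.

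Your fallback of citing \citep[Section~5.5]{Murota03} is exactly what the paper does: it gives no proof of this theorem and relies on that reference.
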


\begin{proposition}[{\citep[Theorem~3.2]{BoM18b}}]
\label{prop:TVPI-is-median-closed}
Every TVPI-representable set is median-closed.%
\footnote{Although \citep[Theorem~3.2]{BoM18b} is stated for finite
systems, the same proof applies to possibly infinite systems.}
\end{proposition}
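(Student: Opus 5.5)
We verify the defining inequalities one at a time. This works equally well for infinite systems, since closedness only requires that the coordinatewise median of three members of $S$ satisfies every inequality of the system individually. Let $\mathcal L=\{a^{(i)}x\le b^{(i)}\mid i\in I\}$ be a TVPI representation of $S$. Fix $x,y,z\in S$ and put $w=\median(x,y,z)$, computed coordinatewise. Then $w\in\mathbb Z^V$, and it suffices to show $a^{(i)}w\le b^{(i)}$ for each fixed $i\in I$. Write this inequality as $\alpha w_j+\beta w_k\le b$, where $\supp(a^{(i)})\subseteq\{j,k\}$; if the support has size at most one, we simply take $\beta=0$ and $k$ arbitrary.

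First we reduce to nonnegative coefficients. If $\alpha<0$, replace the $j$-th coordinate of every vector by its negation. Since $\median(-p,-q,-r)=-\median(p,q,r)$, this turns the coordinatewise median of the transformed points into the transformed median. It also turns the coefficient $\alpha$ into $-\alpha>0$. The same is done for $k$ if $\beta<0$. Hence we may assume $\alpha,\beta\ge 0$. Then the left-hand side $\alpha s+\beta t$ is nondecreasing in each of $s$ and $t$.

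The key step is a pigeonhole argument. Since $w_j$ is the median of $x_j,y_j,z_j$, at least two of the three points $u\in\{x,y,z\}$ satisfy $u_j\ge w_j$. Likewise, at least two satisfy $u_k\ge w_k$. Two subsets of size at least two in a three-element set must intersect, so some $u\in\{x,y,z\}$ satisfies both $u_j\ge w_j$ and $u_k\ge w_k$. By monotonicity and $u\in S$,
\[
\alpha w_j+\beta w_k\le \alpha u_j+\beta u_k\le b .
\]
Thus $w$ satisfies every inequality of $\mathcal L$, so $w\in S$.

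I expect no step to be a serious obstacle. The only point needing care is the sign normalization: it must be done separately for each inequality and justified by the fact that the median commutes with negation in each coordinate. The pigeonhole step also genuinely uses that at most two variables appear in each inequality; with three variables, the three "at least two out of three" sets need not share a common point.
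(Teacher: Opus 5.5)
Your proof is correct. Normalizing signs per inequality, then noting that the two ``at least two out of three'' index sets must intersect, yields an input point that dominates the median in both relevant coordinates. Since each inequality is checked separately, the cardinality of $I$ never enters.

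The paper gives no argument of its own here. It cites \citep[Theorem~3.2]{BoM18b} and asserts in a footnote that the finite-system proof carries over to infinite systems. Your self-contained argument makes that extension explicit.

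One wording fix: when $x,y,z$ are not distinct, count over the three argument positions rather than over the set $\{x,y,z\}$.
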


\section{Pairwise closure and local-to-global structure}
\label{sec:2-decomp-closure}
In this section, we develop pairwise closure constructions and study
their interaction with $2$-decomposability. We first construct closure
operators from the closures of pairwise projections. We then characterize
their fixed points and establish local-to-global principles for closure
under operations. Finally, we specialize the framework to the closure
induced by closed convex hulls.


\subsection{Pairwise closure operators}
\label{subsec:pairwise-closure}
We begin with an elementary identity. 
For a $k$-ary operation
$f:\mathbb Z^k\to\mathbb Z$, a subset $W\subseteq V$, and vectors
$x^{(1)},\ldots,x^{(k)}\in\mathbb Z^V$, componentwise application of
$f$ commutes with projection:

\begin{equation}
\label{eq-proj}
\proj_W\bigl(f(x^{(1)},\ldots,x^{(k)})\bigr)
=
f\bigl(
\proj_W(x^{(1)}),\ldots,\proj_W(x^{(k)})
\bigr).
\end{equation}
Together with the definition of the join, this identity yields the
following preservation properties.

\begin{lemma}
\label{lemma:proj-closed}
Let $F$ be a family of operations on $\mathbb Z$.
Then the following statements hold.
\begin{enumerate}
\setlength{\itemsep}{4pt}
\setlength{\parsep}{0pt}
\item[\textup{(i)}]
If a set $S\subseteq\mathbb Z^V$ is $F$-closed, then
$\proj_W(S)$ is $F$-closed for every $W\subseteq V$.

\item[\textup{(ii)}]
If $T_W\subseteq\mathbb Z^W$ is $F$-closed for every
$W\in\binom{V}{2}$, then
$
\Join_{W\in\binom{V}{2}}T_W
$
is $F$-closed.

\item[\textup{(iii)}] If $S\subseteq\mathbb Z^V$ is $F$-closed, then
$
\Join_{W\in\binom{V}{2}}\proj_W(S)
$
is also $F$-closed.
\end{enumerate}
\end{lemma}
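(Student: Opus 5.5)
All three parts reduce to the commutation identity \eqref{eq-proj}. I will prove (i) and (ii) directly and then obtain (iii) by combining them. Throughout, I fix an operation $f\in F$ of arity $k$; since $F$-closedness means closedness under each member of $F$ separately, it suffices to treat $f$.

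For (i), I take $y^{(1)},\ldots,y^{(k)}\in\proj_W(S)$ and choose preimages $x^{(i)}\in S$ with $\proj_W(x^{(i)})=y^{(i)}$. Since $S$ is $f$-closed, $f(x^{(1)},\ldots,x^{(k)})\in S$. By \eqref{eq-proj},
\[
f(y^{(1)},\ldots,y^{(k)})=\proj_W\bigl(f(x^{(1)},\ldots,x^{(k)})\bigr)\in\proj_W(S).
\]
Hence $\proj_W(S)$ is $f$-closed.

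For (ii), I take $x^{(1)},\ldots,x^{(k)}$ in the join $\Join_{W\in\binom V2}T_W$ and fix $W\in\binom V2$. By definition of the join, $\proj_W(x^{(i)})\in T_W$ for every $i$. By \eqref{eq-proj} and the $f$-closedness of $T_W$,
\[
\proj_W\bigl(f(x^{(1)},\ldots,x^{(k)})\bigr)=f\bigl(\proj_W(x^{(1)}),\ldots,\proj_W(x^{(k)})\bigr)\in T_W.
\]
Since $W$ was arbitrary, $f(x^{(1)},\ldots,x^{(k)})$ lies in the join.

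For (iii), I set $T_W=\proj_W(S)$ for each $W\in\binom V2$. Part (i) shows that each $T_W$ is $F$-closed, and part (ii) then shows that their join is $F$-closed.

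I see no real obstacle. The only point needing care is in (i): the preimages $x^{(i)}$ must be chosen in $S$, which is possible by the definition of $\proj_W(S)$.
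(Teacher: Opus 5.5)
Your proposal is correct and follows essentially the same argument as the paper: you fix an arbitrary $f\in F$, prove (i) by lifting to preimages in $S$ and applying the commutation identity \eqref{eq-proj}, and prove (ii) by checking each $W\in\binom V2$ separately. You then obtain (iii) by setting $T_W=\proj_W(S)$ and combining (i) and (ii), exactly as the paper does.
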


\begin{proof}
It suffices to prove each statement for an arbitrary operation $f\in F$,
since a set is $F$-closed if and only if it is $f$-closed for every
$f\in F$. Fix an arbitrary $k$-ary operation $f\in F$.

For \textup{(i)}, let
$
\proj_W(x^{(1)}),\ldots,\proj_W(x^{(k)})
$
be arbitrary vectors in $ \proj_W(S)$, with $x^{(1)},\ldots,x^{(k)}\in S$.
Since $S$ is $f$-closed, we have
$
f(x^{(1)},\ldots,x^{(k)})\in S.
$
Hence, by~\eqref{eq-proj},
\[
\begin{aligned}
f\bigl(
\proj_W(x^{(1)}),\ldots,\proj_W(x^{(k)})
\bigr)
&=
\proj_W\bigl(f(x^{(1)},\ldots,x^{(k)})\bigr)\
&\in \  \proj_W(S),
\end{aligned}
\]
implying that $\proj_W(S)$ is $f$-closed.

For \textup{(ii)}, let
$
x^{(1)},\ldots,x^{(k)}
$ be vectors in 
$\Join_{W\in\binom{V}{2}}T_W$. 
For every $W\in\binom{V}{2}$ and every $\ell=1,\ldots,k$, we have
$\proj_W(x^{(\ell)})\in T_W$. Since $T_W$ is $f$-closed, \eqref{eq-proj} gives
\[
\proj_W\bigl(f(x^{(1)},\ldots,x^{(k)})\bigr) \ = \  f\bigl(
\proj_W(x^{(1)}),\ldots,\proj_W(x^{(k)})
\bigr)
\ \in \ T_W 
\]
for every $W\in\binom{V}{2}$, which implies
\[
f(x^{(1)},\ldots,x^{(k)})
\in
\Join_{W\in\binom{V}{2}}T_W.
\]
Hence, the join is $f$-closed.

Finally, \textup{(iii)} follows from \textup{(i)} and \textup{(ii)}
by setting
$T_W=\proj_W(S)$ for all $W\in\binom{V}{2}$.
\end{proof}

We now use these preservation properties to construct a closure operator
from the closures of pairwise projections.

Let $F$ be a family of operations on $\mathbb Z$. Define
$
\Phi_F:2^{\mathbb Z^V}\to 2^{\mathbb Z^V}
$
by
\[
\Phi_F(S)
=
\Join_{W\in\binom{V}{2}}
\cl_F\bigl(\proj_W(S)\bigr).
\]
When $F=\{f\}$, we simply write $\Phi_f$ for $\Phi_{\{f\}}$.

\begin{lemma}
\label{lemma:2-decompo-closure-operator}
For a family $F$ of operations on $\mathbb Z$, $\Phi_F$ is a closure operator on $2^{\mathbb Z^V}$.
\end{lemma}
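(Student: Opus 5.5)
We verify the three defining properties of a closure operator (extensive, monotone, idempotent) directly from the corresponding properties of $\cl_F$, together with \eqref{eq-111} and \Cref{lemma:proj-closed}.

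\emph{Extensivity.} By \eqref{eq-111}, $S\subseteq\Join_{W}\proj_W(S)$. Since $\cl_F$ is extensive, $\proj_W(S)\subseteq\cl_F(\proj_W(S))$ for every $W\in\binom{V}{2}$. The join is monotone in each component: if $T_W\subseteq T'_W$ for all $W$, then $\Join_W T_W\subseteq\Join_W T'_W$. Hence $S\subseteq\Phi_F(S)$.

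\emph{Monotonicity.} If $S\subseteq S'$, then $\proj_W(S)\subseteq\proj_W(S')$. Monotonicity of $\cl_F$ then gives $\cl_F(\proj_W(S))\subseteq\cl_F(\proj_W(S'))$, and monotonicity of the join gives $\Phi_F(S)\subseteq\Phi_F(S')$.

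\emph{Idempotence.} This is the only step needing an argument. By extensivity, $\Phi_F(S)\subseteq\Phi_F(\Phi_F(S))$, so it suffices to show the reverse inclusion. Set $T_W=\cl_F(\proj_W(S))$ and $X=\Phi_F(S)=\Join_W T_W$. For every $W\in\binom{V}{2}$ we have $\proj_W(X)\subseteq T_W$, directly from the definition of the join. Since $T_W$ is $F$-closed, $\cl_F(\proj_W(X))\subseteq\cl_F(T_W)=T_W$ by monotonicity and idempotence of $\cl_F$. Monotonicity of the join then yields $\Phi_F(X)\subseteq\Join_W T_W=X$.

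Note that we only need $\proj_W(X)\subseteq T_W$, not equality. Equality can fail, since an element of $T_W$ need not extend to a point of the join. Using only the inclusion avoids this issue entirely. \Cref{lemma:proj-closed}(ii) also tells us that $X$ is $F$-closed, but the idempotence argument does not require this.
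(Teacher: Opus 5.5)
Your proposal is correct and follows essentially the same route as the paper: extensivity and monotonicity are inherited from $\cl_F$ through the monotonicity of the join. Idempotence follows from the inclusion $\proj_W(\Phi_F(S))\subseteq\cl_F(\proj_W(S))$ together with idempotence of $\cl_F$; as you note, only this inclusion is needed, not equality.
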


\begin{proof}
We verify that $\Phi_F$ is extensive, monotone, and idempotent.

\emph{Extensivity.}
For every $W\in\binom{V}{2}$, we have
$\proj_W(S)\subseteq\cl_F\bigl(\proj_W(S)\bigr)$.
It therefore follows from the definition of the join that
$S\subseteq\Phi_F(S)$.

\emph{Monotonicity.}
Let $S_1, S_2$  be subsets  of ${\mathbb Z}^V$ such that  $S_1\subseteq S_2$. Then 
$
\proj_W(S_1)\subseteq\proj_W(S_2)
$
for every $W\in\binom{V}{2}$. By the monotonicity of $\cl_F$,
\[
\cl_F\bigl(\proj_W(S_1)\bigr)
\subseteq
\cl_F\bigl(\proj_W(S_2)\bigr).
\]
By taking the joins over all $W\in\binom{V}{2}$ we obtain 
$
\Phi_F(S_1)\subseteq\Phi_F(S_2).
$

\emph{Idempotence.}
By extensivity and monotonicity, we have 
$
\Phi_F(S)\subseteq\Phi_F\bigl(\Phi_F(S)\bigr).
$
For the reverse inclusion, observe that
$
\proj_W\bigl(\Phi_F(S)\bigr)
\subseteq
\cl_F\bigl(\proj_W(S)\bigr)
$ for every
$W\in\binom{V}{2}$. 
Therefore,
\[
\begin{aligned}
\cl_F\bigl(\proj_W(\Phi_F(S))\bigr)
\ \subseteq \ 
\cl_F\bigl(\cl_F(\proj_W(S))\bigr)
\ =\ 
\cl_F\bigl(\proj_W(S)\bigr).
\end{aligned}
\]
Taking the joins over all $W\in\binom{V}{2}$ yields
$
\Phi_F\bigl(\Phi_F(S)\bigr)
\subseteq
\Phi_F(S).
$
Thus, $\Phi_F$ is idempotent.
\end{proof}

\subsection{Fixed points and local-to-global characterizations}
\label{subsec:operation-induced-closure}

We first characterize the fixed points of $\Phi_F$ in terms of
$2$-decomposability and $F$-closedness. We then establish a
local-to-global criterion characterizing when these fixed points are
precisely the sets closed under $F$ and a prescribed majority operation.

\begin{proposition}[Fixed points of $\Phi_F$]
\label{prop:fixed-points-Phi-f}
Let $F$ be a family of operations on $\mathbb Z$, and let
$S\subseteq\mathbb Z^V$. Then the following statements are equivalent:
\begin{enumerate}
\setlength{\itemsep}{4pt}
\setlength{\parsep}{0pt}

\item[\textup{(i)}]
$S$ is a fixed point of $\Phi_F$, that is,
$
S=\Phi_F(S).
$

\item[\textup{(ii)}]
$S$ is both $2$-decomposable and $F$-closed.
\end{enumerate}
\end{proposition}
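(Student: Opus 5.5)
We prove the two implications separately. Both rely only on \Cref{lemma:proj-closed}, the extensivity of $\Phi_F$ from \Cref{lemma:2-decompo-closure-operator}, and the elementary inclusion~\eqref{eq-111}.

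For (ii)$\Rightarrow$(i), assume $S$ is $2$-decomposable and $F$-closed. By \Cref{lemma:proj-closed}\textup{(i)}, each projection $\proj_W(S)$ with $W\in\binom{V}{2}$ is $F$-closed. Hence $\cl_F(\proj_W(S))=\proj_W(S)$, since $\cl_F$ returns the smallest $F$-closed superset. Therefore
\[
\Phi_F(S)=\Join_{W\in\binom{V}{2}}\proj_W(S)=S,
\]
where the last equality is exactly $2$-decomposability.

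For (i)$\Rightarrow$(ii), assume $S=\Phi_F(S)$.

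\emph{$F$-closedness.} Each $\cl_F(\proj_W(S))$ is $F$-closed by definition of the closure. By \Cref{lemma:proj-closed}\textup{(ii)}, their join $\Phi_F(S)=S$ is $F$-closed.

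\emph{$2$-decomposability.} By~\eqref{eq-111}, it suffices to show $\Join_W\proj_W(S)\subseteq S$. Let $x\in\mathbb Z^V$ satisfy $\proj_W(x)\in\proj_W(S)$ for every $W\in\binom{V}{2}$. Since $\proj_W(S)\subseteq\cl_F(\proj_W(S))$, we also have $\proj_W(x)\in\cl_F(\proj_W(S))$ for every $W$. Hence $x\in\Phi_F(S)=S$.

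No step here poses a real obstacle. The only care needed is to apply $F$-closedness of projections in the direction (ii)$\Rightarrow$(i), so that the closure of each projection collapses to the projection itself.
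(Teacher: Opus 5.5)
Your proof is correct and follows essentially the same route as the paper. For (ii)$\Rightarrow$(i) you collapse each $\cl_F(\proj_W(S))$ to $\proj_W(S)$ via \Cref{lemma:proj-closed}\textup{(i)}; for (i)$\Rightarrow$(ii) you get $F$-closedness from \Cref{lemma:proj-closed}\textup{(ii)} and $2$-decomposability from the chain $S\subseteq\Join_{W\in\binom{V}{2}}\proj_W(S)\subseteq\Phi_F(S)=S$, which you verify elementwise.
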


\begin{proof}
Suppose first that $S=\Phi_F(S)$. Since
\[
S
\subseteq
\Join_{W\in\binom{V}{2}}\proj_W(S)
\subseteq
\Phi_F(S)
=
S,
\]
we have
\[
S
=
\Join_{W\in\binom{V}{2}}\proj_W(S),
\]
and hence $S$ is $2$-decomposable. Moreover, since each set
$\cl_F(\proj_W(S))$ is $F$-closed, 
\Cref{lemma:proj-closed}\textup{(ii)} implies that
$\Phi_F(S)$, and hence $S$, is $F$-closed.

Conversely, suppose that $S$ is both $2$-decomposable and $F$-closed.
By \Cref{lemma:proj-closed}\textup{(i)}, every
$\proj_W(S)$ is $F$-closed. Hence,
$\cl_F\bigl(\proj_W(S)\bigr)=\proj_W(S)$ 
for every $W\in\binom{V}{2}.
$
Therefore,
\[
\Phi_F(S)
=
\Join_{W\in\binom{V}{2}}\proj_W(S)
=
S.
\]
\end{proof}

\begin{corollary}
\label{thm:2-decompo-closure-f}
Let $F$ be a family of operations on $\mathbb Z$, and let
$S\subseteq\mathbb Z^V$ be $2$-decomposable.
Then the following statements are equivalent:
\begin{enumerate}
\setlength{\itemsep}{4pt}
\setlength{\parsep}{0pt}

\item[\textup{(i)}]
$S$ is $F$-closed.

\item[\textup{(ii)}]
$\proj_W(S)$ is $F$-closed for every $W\in\binom{V}{2}$.

\item[\textup{(iii)}]
$S=\Phi_F(S)$.
\end{enumerate}
\end{corollary}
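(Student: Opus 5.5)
The plan is to prove the cycle of implications (i)$\Rightarrow$(ii)$\Rightarrow$(iii)$\Rightarrow$(i). Each step follows directly from \Cref{lemma:proj-closed} and \Cref{prop:fixed-points-Phi-f}, with the standing hypothesis that $S$ is $2$-decomposable supplying the global reconstruction.

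For (i)$\Rightarrow$(ii), I will invoke \Cref{lemma:proj-closed}\textup{(i)} with $W\in\binom{V}{2}$: every projection of an $F$-closed set is $F$-closed. This step does not use $2$-decomposability.

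For (ii)$\Rightarrow$(iii), the assumption that each $\proj_W(S)$ is $F$-closed gives $\cl_F(\proj_W(S))=\proj_W(S)$ for every $W\in\binom{V}{2}$, since $\cl_F$ fixes $F$-closed sets. Hence
\[
\Phi_F(S)=\Join_{W\in\binom{V}{2}}\proj_W(S),
\]
and by $2$-decomposability the right-hand side equals $S$.

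For (iii)$\Rightarrow$(i), I will apply \Cref{prop:fixed-points-Phi-f}: a fixed point of $\Phi_F$ is $F$-closed. Alternatively, \Cref{lemma:proj-closed}\textup{(ii)} shows directly that the join of the $F$-closed sets $\cl_F(\proj_W(S))$ is $F$-closed.

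There is no real obstacle. The only point needing care is in (ii)$\Rightarrow$(iii): $2$-decomposability is exactly what turns pairwise $F$-closedness into global $F$-closedness, because without it the join of the projections could strictly contain $S$.
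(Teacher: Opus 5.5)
Your proof is correct and uses essentially the same ingredients as the paper's: \Cref{lemma:proj-closed}, the fact that $\cl_F$ fixes $F$-closed sets, and \Cref{prop:fixed-points-Phi-f}. The paper proves (i)$\Leftrightarrow$(ii) directly, using $S=\Join_{W\in\binom{V}{2}}\proj_W(S)$ together with \Cref{lemma:proj-closed}\textup{(ii)}, and obtains (i)$\Leftrightarrow$(iii) from \Cref{prop:fixed-points-Phi-f}; you arrange the same steps as the cycle (i)$\Rightarrow$(ii)$\Rightarrow$(iii)$\Rightarrow$(i), which is only a cosmetic difference.
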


\begin{proof}
The implication \textup{(i)} $\Rightarrow$ \textup{(ii)} follows from
\Cref{lemma:proj-closed}\textup{(i)}.

Conversely, suppose that \textup{(ii)} holds. Since $S$ is
$2$-decomposable,
\[
S
=
\Join_{W\in\binom{V}{2}}\proj_W(S).
\]
Therefore, \Cref{lemma:proj-closed}\textup{(ii)} implies that
$S$ is $F$-closed, proving
\textup{(ii)} $\Rightarrow$ \textup{(i)}.

Finally, the equivalence of \textup{(i)} and \textup{(iii)}
follows from \Cref{prop:fixed-points-Phi-f}, since $S$ is
$2$-decomposable.
\end{proof}

Since every set closed under a majority operation is
$2$-decomposable by
\Cref{thm:near-unanimity-closed-is-decompo},
\Cref{thm:2-decompo-closure-f} applies, in particular, to every such set.

We next characterize when the fixed points of $\Phi_F$ are precisely
the sets that are $(F\cup\{g\})$-closed for a prescribed majority
operation $g$.

\begin{theorem}
\label{thm:2-decompo-closure-special-g}
Let $F$ be a family of operations on $\mathbb Z$, and let $g$ be a
majority operation. Then the following statements are equivalent:
\begin{enumerate}
\setlength{\itemsep}{4pt}
\setlength{\parsep}{0pt}

\item[\textup{(i)}]
Every $F$-closed subset of $\mathbb Z^2$ is $g$-closed.

\item[\textup{(ii)}]
For every set $S\subseteq\mathbb Z^V$,
$
S=\Phi_F(S)
$
if and only if $S$ is $(F\cup\{g\})$-closed.
\end{enumerate}
\end{theorem}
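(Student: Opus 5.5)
We prove the two implications separately; both reduce to \Cref{prop:fixed-points-Phi-f}, \Cref{thm:near-unanimity-closed-is-decompo}, and \Cref{lemma:proj-closed}.

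For \textup{(i)}~$\Rightarrow$~\textup{(ii)}, fix $S\subseteq\mathbb Z^V$. If $S$ is $(F\cup\{g\})$-closed, then it is $g$-closed. Since $g$ is a majority operation, \Cref{thm:near-unanimity-closed-is-decompo} shows that $S$ is $2$-decomposable. It is also $F$-closed, so \Cref{prop:fixed-points-Phi-f} gives $S=\Phi_F(S)$.

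Conversely, suppose $S=\Phi_F(S)$. By \Cref{prop:fixed-points-Phi-f}, $S$ is $2$-decomposable and $F$-closed, so $S=\Join_{W}\proj_W(S)$. By \Cref{lemma:proj-closed}\textup{(i)}, each $\proj_W(S)$ with $W\in\binom V2$ is an $F$-closed subset of $\mathbb Z^W\cong\mathbb Z^2$. Hypothesis \textup{(i)} then makes each $\proj_W(S)$ $g$-closed. Hence the join, namely $S$ itself, is $g$-closed by \Cref{lemma:proj-closed}\textup{(ii)}. Together with $F$-closedness, $S$ is $(F\cup\{g\})$-closed.

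One small point concerns the identification of $\mathbb Z^W$ with $\mathbb Z^2$. This is a coordinate-order-preserving bijection, and closedness under a componentwise operation is invariant under any coordinate permutation. So hypothesis \textup{(i)} transfers to $\proj_W(S)$ without trouble.

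For \textup{(ii)}~$\Rightarrow$~\textup{(i)}, let $T\subseteq\mathbb Z^2$ be $F$-closed. We lift $T$ to a fixed point of $\Phi_F$ in $\mathbb Z^V$ whose projection onto $\{1,2\}$ is $T$. The natural candidate is
\[
S=\{x\in\mathbb Z^V \mid (x_1,x_2)\in T,\ x_j=0 \text{ for } j\geq 3\}.
\]
If $T=\emptyset$, then $T$ is trivially $g$-closed, so assume $T\neq\emptyset$.

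\emph{$S$ is $F$-closed.} Coordinates $1,2$ stay in $T$ because $T$ is $F$-closed. For coordinates $j\geq3$ we need $f(0,\ldots,0)=0$, which need not hold for an arbitrary $f\in F$. This is the main obstacle. To avoid it, replace the constant $0$ in the padding coordinates by duplicated coordinates:
\[
S=\{x\in\mathbb Z^V \mid (x_1,x_2)\in T,\ x_j=x_2 \text{ for } j\geq 3\}.
\]
For this $S$, applying any $f\in F$ componentwise keeps the equalities $x_j=x_2$, since equal input columns give equal outputs. It also keeps $(x_1,x_2)\in T$. Hence $S$ is $F$-closed.

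\emph{$S$ is $2$-decomposable.} Let $y$ lie in the join of the projections of $S$.
\begin{itemize}
\item The projection onto $\{1,2\}$ gives $(y_1,y_2)\in T$.
\item For $j\geq3$, the projection onto $\{2,j\}$ is the diagonal $\{(a,a)\mid a\in\proj_{\{2\}}(T)\}$, so $y_j=y_2$.
\end{itemize}
Therefore $y\in S$, and $S$ is $2$-decomposable.

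By \Cref{prop:fixed-points-Phi-f}, $S=\Phi_F(S)$. Hypothesis \textup{(ii)} then says $S$ is $g$-closed. By \Cref{lemma:proj-closed}\textup{(i)}, $T=\proj_{\{1,2\}}(S)$ is $g$-closed, which proves \textup{(i)}.
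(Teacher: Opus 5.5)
Your proof is correct and follows essentially the same route as the paper: the direction (i)~$\Rightarrow$~(ii) combines \Cref{prop:fixed-points-Phi-f}, \Cref{thm:near-unanimity-closed-is-decompo}, and \Cref{lemma:proj-closed} as you do, and (ii)~$\Rightarrow$~(i) lifts an $F$-closed $T\subseteq\mathbb Z^2$ to a $2$-decomposable $F$-closed subset of $\mathbb Z^V$ whose projection onto $\{1,2\}$ is $T$. The only real difference is the lift: the paper uses the cylinder $\{x\in\mathbb Z^V\mid (x_1,x_2)\in T\}$, leaving the other coordinates free, which avoids the zero-padding problem just as your duplication $x_j=x_2$ does, and whose $2$-decomposability check needs only the projection onto $\{1,2\}$.
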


\begin{proof}
Suppose first that \textup{(i)} holds, and let
$S\subseteq\mathbb Z^V$.

Assume that $S=\Phi_F(S)$. For every
$W\in\binom{V}{2}$, let
\[
T_W=\cl_F\bigl(\proj_W(S)\bigr).
\]
The set $T_W$ is $F$-closed by definition. Since $|W|=2$,
\textup{(i)} implies that $T_W$ is $g$-closed. 
Thus, $T_W$ is
$(F\cup\{g\})$-closed.

Applying
\Cref{lemma:proj-closed}\textup{(ii)} to the family of operations
$F\cup\{g\}$, we conclude that
\[
S
=
\Phi_F(S)
=
\Join_{W\in\binom{V}{2}}T_W
\]
is $(F\cup\{g\})$-closed.

Conversely, assume that $S$ is $(F\cup\{g\})$-closed.
Then $S$ is $F$-closed and $g$-closed. Since $g$ is a majority
operation, \Cref{thm:near-unanimity-closed-is-decompo} implies that
$S$ is $2$-decomposable. Therefore,
\Cref{prop:fixed-points-Phi-f} yields
$
S=\Phi_F(S),  
$
which proves \textup{(ii)}.

Now suppose that \textup{(ii)} holds.
 Let
$R\subseteq\mathbb Z^2$ be an arbitrary $F$-closed set, and define
\[
S
=
\{\,x\in\mathbb Z^V
\mid
\proj_{\{1,2\}}(x)\in R\,\}.
\]

Then $\proj_{\{1,2\}}(S)=R$. Moreover, $S$ is $F$-closed, since
membership in $S$ is determined solely by the $F$-closed set $R$ on
the first two coordinates.
By \eqref{eq-111}, 
we have
\[
S
\subseteq
\Join_{W\in\binom{V}{2}}\proj_W(S).
\]
For the reverse inclusion, if
$
x\in
\Join_{W\in\binom{V}{2}}\proj_W(S),
$
then
\[
\proj_{\{1,2\}}(x)
\in
\proj_{\{1,2\}}(S)
=
R,
\]
and hence $x\in S$. Therefore, $S$ is $2$-decomposable. By
\Cref{prop:fixed-points-Phi-f},
\[
S=\Phi_F(S).
\]
It follows from \textup{(ii)} that $S$ is
$(F\cup\{g\})$-closed, and in particular $g$-closed. Hence,
\Cref{lemma:proj-closed}\textup{(i)} implies that
\[
R
=
\proj_{\{1,2\}}(S)
\]
is $g$-closed. Since $R$ is an arbitrary $F$-closed set, we have \textup{(i)}, completing the proof. 
\end{proof}

\begin{lemma}
\label{lem:two-dimensional-midpoint-median}
Every $\avgd$-closed subset of $\mathbb Z^2$ is median-closed.
\end{lemma}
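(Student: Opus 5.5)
The plan is to avoid a direct case analysis of coordinatewise medians and instead pass through integral convexity, using the two-dimensional results already recorded in the preliminaries. The chain is
\[
\avgd\text{-closed}
\;\Longrightarrow\;
\text{integrally convex}
\;\Longrightarrow\;
\text{UTVPI-representable}
\;\Longrightarrow\;
\text{median-closed}.
\]
The second implication is \Cref{prop:two-dim-IntConv=UTVPI}, which applies because we are in $\mathbb Z^2$. The third follows from \Cref{prop:TVPI-is-median-closed}, since every UTVPI system is a TVPI system. So the only step that needs an argument is the first one.

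For the first step I would use the midpoint criterion of \Cref{lemma:IntConvSet-characterization}. Fix $y,z\in S$ and put $p=\frac{y+z}{2}$. The key identity is that, for all integers $a,b$,
\[
\avgd(a,b)+\avgd(b,a)=a+b .
\]
To check it, suppose $a\ge b$. Then $b<a$ as well unless $a=b$, so one term rounds $\frac{a+b}{2}$ up and the other rounds it down. If $a=b$, both terms equal $a$. Either way the sum is $a+b$.

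Applying the identity componentwise, the points $u=\avgd(y,z)$ and $w=\avgd(z,y)$ satisfy $\frac{u+w}{2}=p$. Both $u$ and $w$ lie in $S$ by $\avgd$-closedness. Each coordinate of $u$ and of $w$ is $\lceil p_j\rceil$ or $\lfloor p_j\rfloor$, so $|u_j-p_j|<1$ and $|w_j-p_j|<1$, giving $u,w\in N(p)$. Hence $p\in\conv\bigl(S\cap N(p)\bigr)$, and \Cref{lemma:IntConvSet-characterization} shows that $S$ is integrally convex.

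This argument is dimension-free, so it in fact shows that every $\avgd$-closed subset of $\mathbb Z^V$ is integrally convex, which is consistent with~\citep{TaT21}. Only the middle implication is specific to dimension two. I do not expect a real obstacle here. The one point to double-check is the edge case $a=b$ in the rounding identity and the strict inequality in the definition of $N(p)$.

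If a self-contained argument avoiding \Cref{prop:two-dim-IntConv=UTVPI} were wanted, the fallback would be a direct proof. Take $x,y,z\in S$ and suppose, after relabelling, that the median is $(x_1,y_2)$ with $x_1$ between $y_1$ and $z_1$ and $y_2$ between $x_2$ and $z_2$. One would then iterate $\avgd$ between suitable points to shrink toward $(x_1,y_2)$. I expect that route to be messier than the chain above.
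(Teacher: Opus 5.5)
Your proposal is correct and uses the same chain as the paper. $\avgd$-closedness gives integral convexity, which in $\mathbb Z^2$ gives UTVPI-representability by \Cref{prop:two-dim-IntConv=UTVPI}, and hence median-closedness by \Cref{prop:TVPI-is-median-closed}. The only difference is that the paper cites \citep{TaT21} for the first step, whereas you prove it directly from the identity $\avgd(a,b)+\avgd(b,a)=a+b$ and \Cref{lemma:IntConvSet-characterization}; this is exactly the argument the paper gives later in \Cref{lem:avgd=>int-conv}.
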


\begin{proof}
Every $\avgd$-closed subset of $\mathbb Z^2$ is integrally convex
by \citep{TaT21}. Hence, it is UTVPI-representable by
\Cref{prop:two-dim-IntConv=UTVPI}, and therefore median-closed by
\Cref{prop:TVPI-is-median-closed}.
\end{proof}

\begin{corollary}[Concrete fixed-point characterizations] \label{cor:concrete-fixed-point-characterizations} For a set $S\subseteq\mathbb Z^V$, the following statements hold: \begin{enumerate} \setlength{\itemsep}{4pt} \setlength{\parsep}{0pt} \item[\textup{(i)}] For every majority operation $g$, the set $S$ is $g$-closed if and only if \[ S=\Phi_g(S). \] 
 In particular, $S$ is median-closed if and only if $S=\Phi_{\median}(S)$. 
\item[\textup{(ii)}] $S$ is closed under both $\avgd$ and the median operation if and only if \[ S=\Phi_{\avgd}(S). \] \item[\textup{(iii)}] $S$ is DC-representable if and only if \[ S=\Phi_{\{m^+,m^-\}}(S). \] \end{enumerate} 
\end{corollary}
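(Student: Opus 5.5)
All three parts will follow from \Cref{thm:2-decompo-closure-special-g}: for a suitable family $F$ and majority operation $g$, I will check its two-dimensional condition~(i), namely that every $F$-closed subset of $\mathbb Z^2$ is $g$-closed. The theorem then gives, for every $S\subseteq\mathbb Z^V$, that $S=\Phi_F(S)$ if and only if $S$ is $(F\cup\{g\})$-closed.

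For \textup{(i)}, take $F=\{g\}$. Condition~(i) of \Cref{thm:2-decompo-closure-special-g} is trivial, and $F\cup\{g\}=\{g\}$. So $S=\Phi_g(S)$ if and only if $S$ is $g$-closed. The median statement is the special case $g=\median$.

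For \textup{(ii)}, take $F=\{\avgd\}$ and $g=\median$. Condition~(i) is exactly \Cref{lem:two-dimensional-midpoint-median}. The theorem then says that $S=\Phi_{\avgd}(S)$ if and only if $S$ is closed under both $\avgd$ and $\median$.

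For \textup{(iii)}, take $F=\{m^+,m^-\}$ and $g=\median$. To verify condition~(i), let $R\subseteq\mathbb Z^2$ be closed under $m^+$ and $m^-$. Then $R$ is $\avgd$-closed by \citep[Corollary~2]{TaT21}, and hence median-closed by \Cref{lem:two-dimensional-midpoint-median}. (Equivalently: $R$ is DC-representable by \Cref{thm:DC-midpoint-characterization}, so median-closed by \Cref{prop:TVPI-is-median-closed}.) The theorem therefore gives that $S=\Phi_{\{m^+,m^-\}}(S)$ if and only if $S$ is closed under $m^+$, $m^-$ and $\median$.

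It remains to show this triple closure is equivalent to DC-representability. By \Cref{thm:DC-midpoint-characterization}, a DC-representable set is closed under $m^\pm$. A DC system is TVPI, so the set is also median-closed by \Cref{prop:TVPI-is-median-closed}. Conversely, closure under $m^\pm$ alone already gives DC-representability by \Cref{thm:DC-midpoint-characterization}.

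The only point needing care is that the median requirement is redundant given $m^\pm$-closedness. This is exactly the step settled above by \Cref{prop:TVPI-is-median-closed}.
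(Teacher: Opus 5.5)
Your proposal is correct and essentially identical to the paper's proof: each part applies \Cref{thm:2-decompo-closure-special-g} with the same choices of $F$ and $g$. In (iii) you verify the two-dimensional hypothesis via \Cref{thm:DC-midpoint-characterization} and \Cref{prop:TVPI-is-median-closed}, and discharge the redundant median requirement the same way the paper does; your alternative route through $\avgd$-closedness and \Cref{lem:two-dimensional-midpoint-median} is also valid.
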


\begin{proof}
For \textup{(i)}, fix a majority operation $g$ and apply
\Cref{thm:2-decompo-closure-special-g} with $F=\{g\}$.
The hypothesis of the theorem is immediate, since every
$g$-closed subset of $\mathbb Z^2$ is $g$-closed. Hence,
$
S=\Phi_{g}(S)
$
if and only if $S$ is $g$-closed, which proves the first statement.
The particular case $g=\median$ gives the median statement.

For \textup{(ii)}, apply
\Cref{thm:2-decompo-closure-special-g} with
\[
F=\{\avgd\}
\qquad\text{and}\qquad
g=\median.
\]
By \Cref{lem:two-dimensional-midpoint-median}, every
$\avgd$-closed subset of $\mathbb Z^2$ is median-closed. Therefore,
\[
S=\Phi_{\avgd}(S)
\]
if and only if $S$ is
$\{\avgd,\median\}$-closed, that is, closed under both $\avgd$ and
the median operation.

For \textup{(iii)}, apply the same theorem with
\[
F=\{m^+,m^-\}
\qquad\text{and}\qquad
g=\median.
\]
Since $m^+$- and $m^-$-closedness imply
median-closedness by \Cref{thm:DC-midpoint-characterization} and \Cref{prop:TVPI-is-median-closed},
\[
S=\Phi_{\{m^+,m^-\}}(S)
\]
if and only if $S$ is
$\{m^+,m^-,\median\}$-closed. 
Again by~\Cref{thm:DC-midpoint-characterization} and \Cref{prop:TVPI-is-median-closed}, 
the latter is equivalent to $m^+$- and $m^-$-closedness.
By \Cref{thm:DC-midpoint-characterization}, this is equivalent to DC-representability of $S$, which proves \textup{(iii)}.
\end{proof}

In the next section, we show that the equivalent conditions in
\textup{(ii)} characterize UTVPI-representability. Thus, the fixed
points of $\Phi_{\avgd}$ are precisely the UTVPI-representable sets.

\subsection{Closed-convex-hull closure}
\label{subsec:convex-closure-hierarchy}

We finally specialize the preceding pairwise closure framework to the
closure induced by closed convex hulls. For a set
$S\subseteq\mathbb Z^V$, define
\[
\Psi(S)
=
\Join_{W\in\binom{V}{2}}
\left(
\overline{\conv}\bigl(\proj_W(S)\bigr)\cap\mathbb Z^W
\right).
\]
For every $W\in\binom{V}{2}$, the map
\[
T
\longmapsto
\overline{\conv}(T)\cap\mathbb Z^W
\]
is a closure operator on $2^{\mathbb Z^W}$. Hence, the same argument
as in \Cref{lemma:2-decompo-closure-operator} shows that $\Psi$ is a
closure operator on $2^{\mathbb Z^V}$.

We first state a consequence of
\Cref{prop:linear-holefree} that will be used below.

\begin{lemma}
\label{lem:join-closed-hole-free-TVPI}
Let $R_W\subseteq\mathbb Z^W$ be closed-hole-free for every
$W\in\binom{V}{2}$. Then the join
\[
\Join_{W\in\binom{V}{2}}R_W
\]
is TVPI-representable. Consequently, it is median-closed and
closed-hole-free.
\end{lemma}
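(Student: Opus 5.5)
Apply \Cref{prop:linear-holefree} to each two-dimensional set $R_W$, then lift the resulting inequalities to $\mathbb Z^V$.

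\emph{Step 1 (TVPI representation of each piece).} Fix $W=\{i,j\}\in\binom{V}{2}$. Since $R_W\subseteq\mathbb Z^W$ is closed-hole-free, \Cref{prop:linear-holefree} (applied in $\mathbb Z^W\cong\mathbb Z^2$) gives a linear system
\[
\mathcal L_W=\{a^{(k)}y\le b^{(k)}\mid k\in I_W\}
\]
on $\mathbb R^W$ with $R_W=\{y\in\mathbb Z^W\mid y\text{ satisfies }\mathcal L_W\}$. We may take $\mathcal L_W$ to be the family of all closed half-planes containing $\overline{\conv}(R_W)$. If $R_W=\emptyset$, we take the single inequality $0\le -1$.

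\emph{Step 2 (lifting).} Extend each $a^{(k)}\in\mathbb R^W$ to $\tilde a^{(k)}\in\mathbb R^V$ by setting the coordinates outside $W$ to zero. Then $|\supp(\tilde a^{(k)})|\le 2$, and for every $x\in\mathbb R^V$ we have $\tilde a^{(k)}x=a^{(k)}\proj_W(x)$. Let $\mathcal L$ be the union over all $W\in\binom{V}{2}$ of the lifted systems $\tilde{\mathcal L}_W$. This is a TVPI system, possibly infinite, which the definition allows. An integer point $x\in\mathbb Z^V$ satisfies $\mathcal L$ if and only if, for every $W$, the point $\proj_W(x)\in\mathbb Z^W$ satisfies $\mathcal L_W$. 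That holds if and only if $\proj_W(x)\in R_W$ for every $W$, that is, $x\in\Join_{W}R_W$. Hence the join is TVPI-representable.

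\emph{Step 3 (consequences).} Median-closedness follows directly from \Cref{prop:TVPI-is-median-closed}, whose footnote covers infinite systems. A TVPI representation is in particular a linear representation, so \Cref{prop:linear-holefree} gives closed-hole-freeness of the join.

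The argument is mostly bookkeeping. The only points that need care are that \Cref{prop:linear-holefree} may produce infinitely many inequalities, which the TVPI definition permits, and the degenerate empty case in Step 1. I expect no real obstacle.
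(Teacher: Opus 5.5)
Your proposal is correct and follows the paper's proof essentially step for step: obtain a linear representation of each $R_W$ from \Cref{prop:linear-holefree} (it is automatically TVPI since $|W|=2$), lift it to $\mathbb Z^V$ by zero-extension, take the union, and then invoke \Cref{prop:TVPI-is-median-closed} and \Cref{prop:linear-holefree} for the two consequences. Your extra care about infinite systems and the empty case is harmless. The separate empty case is not even needed, since every closed half-plane contains $\emptyset$ and so the family of all such half-planes already has empty intersection.
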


\begin{proof}
By \Cref{prop:linear-holefree}, each $R_W$ is linearly representable.
Since $|W|=2$, every linear inequality in a representation of $R_W$
involves at most two variables, and hence $R_W$ is
TVPI-representable.

For every $W\in\binom{V}{2}$, lift a TVPI representation of $R_W$ to
the coordinates in $W$. The union of the resulting systems is a TVPI
representation of
\[
\Join_{W\in\binom{V}{2}}R_W.
\]
Therefore, the join is median-closed by
\Cref{prop:TVPI-is-median-closed} and closed-hole-free by
\Cref{prop:linear-holefree}.
\end{proof}

The fixed points of $\Psi$ admit the following characterization.

\begin{theorem}[TVPI fixed-point characterization]
\label{thm:pairwise-closed-convex-hull}
For a set $S\subseteq\mathbb Z^V$, the following statements are
equivalent:
\begin{enumerate}
\setlength{\itemsep}{4pt}
\setlength{\parsep}{0pt}

\item[\textup{(i)}]
$S$ is TVPI-representable.

\item[\textup{(ii)}]
$S=\Psi(S)$.
\end{enumerate}
\end{theorem}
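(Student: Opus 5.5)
The plan is to prove the two implications separately, using \Cref{lem:join-closed-hole-free-TVPI} for (ii)$\Rightarrow$(i) and a direct projection argument for (i)$\Rightarrow$(ii).

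\emph{(ii)$\Rightarrow$(i).} Suppose $S=\Psi(S)$ and set
$R_W=\overline{\conv}(\proj_W(S))\cap\mathbb Z^W$ for $W\in\binom{V}{2}$.
Each $R_W$ is closed-hole-free, because $T\mapsto\overline{\conv}(T)\cap\mathbb Z^W$ is idempotent. Indeed, $R_W\subseteq\overline{\conv}(\proj_W(S))$ gives $\overline{\conv}(R_W)=\overline{\conv}(\proj_W(S))$, and hence $\overline{\conv}(R_W)\cap\mathbb Z^W=R_W$. \Cref{lem:join-closed-hole-free-TVPI} then shows that $S=\Join_W R_W$ is TVPI-representable.

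\emph{(i)$\Rightarrow$(ii).} Let $\mathcal L$ be a TVPI representation of $S$. By extensivity, $S\subseteq\Psi(S)$, so only $\Psi(S)\subseteq S$ needs proof. Take $x\in\Psi(S)$ and any inequality $ay\le b$ in $\mathcal L$. Its support lies in some $W\in\binom{V}{2}$; if the support has size at most one, I would pad it to a pair, which is possible since $n>1$. Define $H_W=\{u\in\mathbb R^W\mid a_Wu\le b\}$, where $a_W$ is the restriction of $a$ to $W$. This is a closed half-space (or all of $\mathbb R^W$ or empty), and in particular closed and convex. Every $y\in S$ satisfies $a_W\proj_W(y)\le b$, so $\proj_W(S)\subseteq H_W$ and therefore $\overline{\conv}(\proj_W(S))\subseteq H_W$. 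Since $\proj_W(x)\in\overline{\conv}(\proj_W(S))$, we get $ax=a_W\proj_W(x)\le b$. This holds for every inequality of $\mathcal L$, so $x$ satisfies $\mathcal L$, and since $x$ is integral, $x\in S$.

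The main point to handle carefully is that $\mathcal L$ may be infinite and that the relevant hull is the closed convex hull; this is exactly why each inequality is treated individually as a closed half-space in two coordinates. The degenerate cases also need a remark. If $S=\emptyset$, every projection is empty and $\Psi(S)=\emptyset$, which is consistent; the argument above covers this case as well, since each $H_W$ still contains the empty projection. A zero coefficient vector needs no special treatment either, since an inequality $0\le b$ is satisfied by every point whenever $S\neq\emptyset$.
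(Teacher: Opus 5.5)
Your proof is correct and follows essentially the same route as the paper. The direction (ii)$\Rightarrow$(i) is exactly the paper's argument via \Cref{lem:join-closed-hole-free-TVPI}. In (i)$\Rightarrow$(ii) you test each TVPI inequality separately as a closed half-space in $\mathbb R^W$, whereas the paper groups the inequalities by pairs into sets $R_W$ and invokes \Cref{prop:linear-holefree} to obtain $\overline{\conv}(R_W)\cap\mathbb Z^W=R_W$; this makes your version marginally more self-contained, but the underlying containment argument is the same.
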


\begin{proof}
Suppose first that \textup{(ii)} holds. For every
$W\in\binom{V}{2}$, let
\[
T_W
=
\overline{\conv}\bigl(\proj_W(S)\bigr)\cap\mathbb Z^W.
\]
Since the map
\[
T
\longmapsto
\overline{\conv}(T)\cap\mathbb Z^W
\]
is a closure operator, each $T_W$ is closed-hole-free. Moreover,
\[
S
=
\Psi(S)
=
\Join_{W\in\binom{V}{2}}T_W.
\]
Hence, \Cref{lem:join-closed-hole-free-TVPI} implies that $S$ is
TVPI-representable.

Conversely, suppose that $S$ is TVPI-representable. Fix a TVPI
representation of $S$, and assign each inequality to a set
$W\in\binom{V}{2}$ containing the support of that inequality.
For every $W\in\binom{V}{2}$, let
$R_W\subseteq\mathbb Z^W$ be the set of integer solutions of the
inequalities assigned to $W$, where $R_W=\mathbb Z^W$ if no
inequality is assigned to $W$. Then
\[
S
=
\Join_{W\in\binom{V}{2}}R_W.
\]
Each $R_W$ is linearly representable and hence closed-hole-free by
\Cref{prop:linear-holefree}. Since
$\proj_W(S)\subseteq R_W$, we have
\[
\overline{\conv}\bigl(\proj_W(S)\bigr)\cap\mathbb Z^W
\subseteq
\overline{\conv}(R_W)\cap\mathbb Z^W
=
R_W.
\]
It follows that
\[
\Psi(S)
\subseteq
\Join_{W\in\binom{V}{2}}R_W
=
S.
\]
The reverse inclusion $S\subseteq\Psi(S)$ follows from the definition
of $\Psi$. Therefore, $S=\Psi(S)$.
\end{proof}

Combining the preceding theorem with
\Cref{prop:TVPI-is-median-closed,thm:near-unanimity-closed-is-decompo},
we obtain the following local-to-global characterization.

\begin{corollary}
\label{cor:pairwise-closed-hole-free-TVPI}
Suppose that $\proj_W(S)$ is closed-hole-free for every
$W\in\binom{V}{2}$. Then the following statements are equivalent:
\begin{enumerate}
\setlength{\itemsep}{4pt}
\setlength{\parsep}{0pt}

\item[\textup{(i)}]
$S$ is $2$-decomposable.

\item[\textup{(ii)}]
$S$ is TVPI-representable.

\item[\textup{(iii)}]
$S$ is closed under some majority operation.
\end{enumerate}
\end{corollary}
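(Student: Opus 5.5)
We will prove the cycle of implications (i) $\Rightarrow$ (ii) $\Rightarrow$ (iii) $\Rightarrow$ (i), using the standing hypothesis that every $\proj_W(S)$ is closed-hole-free only in the last step.

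\emph{(i) $\Rightarrow$ (ii).} This is where the hypothesis enters. Since $S$ is $2$-decomposable,
\[
S=\Join_{W\in\binom{V}{2}}\proj_W(S).
\]
Because each $\proj_W(S)$ is closed-hole-free, we have
\[
\proj_W(S)=\overline{\conv}\bigl(\proj_W(S)\bigr)\cap\mathbb Z^W ,
\]
so the join above is exactly $\Psi(S)$. Hence $S=\Psi(S)$. Then \Cref{thm:pairwise-closed-convex-hull} gives TVPI-representability. Alternatively, we can apply \Cref{lem:join-closed-hole-free-TVPI} directly with $R_W=\proj_W(S)$.

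\emph{(ii) $\Rightarrow$ (iii).} By \Cref{prop:TVPI-is-median-closed}, every TVPI-representable set is median-closed. The median operation is a majority operation, so $S$ is closed under some majority operation.

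\emph{(iii) $\Rightarrow$ (i).} This is exactly \Cref{thm:near-unanimity-closed-is-decompo}.

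No step poses a real obstacle. The only point to check carefully is the identification $\Psi(S)=\Join_W\proj_W(S)$ under the closed-hole-free hypothesis, which follows immediately from the definition of closed-hole-freeness. We should also note that the hypothesis is not needed for (ii) $\Rightarrow$ (iii) $\Rightarrow$ (i). In fact, TVPI-representability already forces each $\proj_W(S)$ to be closed-hole-free. To see this, note that $S$ is median-closed and hence, by \Cref{lemma:proj-closed}(i), so is each projection. Also $\proj_W(S)\subseteq\overline{\conv}(\proj_W(S))\cap\mathbb Z^W\subseteq R_W$ as in the proof of \Cref{thm:pairwise-closed-convex-hull}. This remark is not required for the corollary.
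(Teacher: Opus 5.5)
Your proof of the corollary is correct and is essentially the paper's argument. Under the hypothesis, $2$-decomposability gives $S=\Join_{W}\proj_W(S)=\Psi(S)$, so \Cref{thm:pairwise-closed-convex-hull} yields (ii), and (ii)~$\Rightarrow$~(iii)~$\Rightarrow$~(i) come from \Cref{prop:TVPI-is-median-closed} and \Cref{thm:near-unanimity-closed-is-decompo}. One small slip: your opening sentence says the hypothesis is used ``only in the last step'', but it is used in (i)~$\Rightarrow$~(ii), as you then say yourself.

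Your closing ``in fact'' remark, however, is false and should be deleted. TVPI-representability does not force the two-coordinate projections to be closed-hole-free. Take the set $S=\{(0,0,0),(2,2,3)\}$ of \Cref{rem:closed-hole-free-not-preserved-by-projection}. It is TVPI-representable via $0\le x_1\le 2$, $x_1=x_2$, $3x_1=2x_3$ (see the proof of \Cref{thm:limits-operation-characterizations}). Yet $\proj_{\{1,2\}}(S)=\{(0,0),(2,2)\}$ is not even hole-free.

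Neither ingredient of your sketch gives the claimed equality. First, this very projection is median-closed, as is every two-element set, so median-closedness of a projection does not imply hole-freeness. Second, the chain $\proj_W(S)\subseteq\overline{\conv}(\proj_W(S))\cap\mathbb Z^W\subseteq R_W$ only bounds the integer closed hull above by $R_W$. That set may strictly contain $\proj_W(S)$: here $(1,1)\in R_{\{1,2\}}$ for any assignment of the inequalities. So pairwise closed-hole-freeness is a genuine extra assumption in the corollary, even for TVPI-representable sets.
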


\begin{proof}
Under the assumption on the pairwise projections,
$2$-decomposability implies $S=\Psi(S)$. Hence,
\textup{(i)} $\Rightarrow$ \textup{(ii)} follows from
\Cref{thm:pairwise-closed-convex-hull}.
The implication \textup{(ii)} $\Rightarrow$ \textup{(iii)} follows
from \Cref{prop:TVPI-is-median-closed}, since the median is a majority
operation, and \textup{(iii)} $\Rightarrow$ \textup{(i)} follows from
\Cref{thm:near-unanimity-closed-is-decompo}.
\end{proof}

The assumption on the pairwise projections in the preceding
corollary is not implied by closed-hole-freeness of $S$.

\begin{remark}
\label{rem:closed-hole-free-not-preserved-by-projection}
As the following example shows, closed-hole-freeness is not preserved
under projection. Let
\[
S
=
\{\,(0,0,0),(2,2,3)\,\}
\subseteq
\mathbb Z^3.
\]
The line segment joining the two points contains no other integer
point, and hence $S$ is closed-hole-free. However,
\[
\proj_{\{1,2\}}(S)
=
\{\,(0,0),(2,2)\,\}
\]
is not closed-hole-free, since its closed convex hull contains the
integer point $(1,1)$.

This example will also be used in
\Cref{subsec:limits-operation-characterizations}.
\end{remark}

\section{Operation-based characterizations and their limits}
\label{sec:polyhedra-closure}

In this section, we present operation-based characterizations of the
classes introduced above and determine the extent to which such
characterizations are possible.
  We begin by combining the known
operation-based characterization of DC-representability with the
fixed-point characterization established in
\Cref{sec:2-decomp-closure}. We then establish five equivalent  characterizations of UTVPI-representability. For
completeness, the corresponding fixed-point characterization, already
proved in \Cref{sec:2-decomp-closure}, is included in the theorem so
that all equivalent descriptions are presented together. Next, we
show that, although SVPI-representability cannot be characterized by
ordinary operations, it admits equivalent characterizations in terms
of multioperations and a one-dimensional reconstruction formula.
Finally, we establish limits on operation-based characterizations for
the remaining five classes considered in this paper:
TVPI-representable, closed-hole-free, hole-free, $2$-decomposable,
and integrally convex sets.

The following corollary combines the known operation-based
characterization of DC-representability with the fixed-point results
of \Cref{sec:2-decomp-closure}.

\begin{corollary}[Characterizations of DC-representability]
\label{cor:DC-fixed-point-characterization}
For a set $S\subseteq\mathbb Z^V$, the following statements are
equivalent:
\begin{enumerate}
\setlength{\itemsep}{4pt}
\setlength{\parsep}{0pt}

\item[\textup{(i)}]
$S$ is DC-representable.

\item[\textup{(ii)}]
$S$ is closed under both $m^+$ and $m^-$.

\item[\textup{(iii)}]
$
S=\Phi_{\{m^+,m^-\}}(S).
$
\end{enumerate}
\end{corollary}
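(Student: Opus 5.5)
The corollary assembles results already in hand, so the plan is to chain them rather than argue anything new. The equivalence \textup{(i)} $\Leftrightarrow$ \textup{(ii)} is exactly \Cref{thm:DC-midpoint-characterization}, quoted from \citep{Murota03}, and needs no further work.

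For \textup{(ii)} $\Leftrightarrow$ \textup{(iii)} I would use \Cref{cor:concrete-fixed-point-characterizations}\textup{(iii)}. That result already states that $S$ is DC-representable if and only if $S=\Phi_{\{m^+,m^-\}}(S)$. Its proof routes through \Cref{thm:2-decompo-closure-special-g} with $F=\{m^+,m^-\}$ and $g=\median$.

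The one point to check is the hypothesis of that theorem: every $\{m^+,m^-\}$-closed subset of $\mathbb Z^2$ must be median-closed. This follows because such a set is DC-representable by \Cref{thm:DC-midpoint-characterization}, hence TVPI-representable, hence median-closed by \Cref{prop:TVPI-is-median-closed}. The same reasoning in $\mathbb Z^V$ shows that $\{m^+,m^-,\median\}$-closedness coincides with $\{m^+,m^-\}$-closedness. So the fixed points of $\Phi_{\{m^+,m^-\}}$ are exactly the sets satisfying \textup{(ii)}.

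I therefore expect no real obstacle. The only care needed is to cite the median-closedness of DC sets through the chain DC $\subseteq$ TVPI, rather than assuming it directly.
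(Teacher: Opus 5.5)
Your proposal is correct and matches the paper's proof, which simply combines \Cref{thm:DC-midpoint-characterization} for \textup{(i)}$\Leftrightarrow$\textup{(ii)} with \Cref{cor:concrete-fixed-point-characterizations}\textup{(iii)} for the fixed-point equivalence. Your check that $\{m^+,m^-\}$-closed subsets of $\mathbb Z^2$ are median-closed, via DC $\subseteq$ TVPI and \Cref{prop:TVPI-is-median-closed}, is exactly the argument the paper uses inside that corollary.
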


\begin{proof}
This follows from
\Cref{thm:DC-midpoint-characterization,cor:concrete-fixed-point-characterizations} (iii).
\end{proof}

\subsection{Five equivalent characterizations of UTVPI-representability}
\label{subsec:main-results}

We now prove the central theorem of the paper. It characterizes
UTVPI-representability in terms of preservation under coordinatewise
operations, integral convexity together with $2$-decomposability, and
a fixed-point property of the pairwise closure operator introduced in
\Cref{sec:2-decomp-closure}. Thus, the theorem brings together the five
viewpoints highlighted in the Introduction.

\begin{theorem}[Five equivalent characterizations of
UTVPI-representability]
\label{thm:main}
For a set $S\subseteq\mathbb Z^V$, the following statements are
equivalent:
\begin{enumerate}
\setlength{\itemsep}{4pt}
\setlength{\parsep}{0pt}

\item[\textup{(i)}]
$S$ is UTVPI-representable.

\item[\textup{(ii)}]
$S$ is closed under both $\avgd$ and the median operation.

\item[\textup{(iii)}]
$S$ is $\avgd$-closed and closed under some majority operation.

\item[\textup{(iv)}]
$S$ is integrally convex and $2$-decomposable.

\item[\textup{(v)}]
$S=\Phi_{\avgd}(S)$.
\end{enumerate}
\end{theorem}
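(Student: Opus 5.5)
We will prove the cycle (i)$\Rightarrow$(ii)$\Rightarrow$(iii)$\Rightarrow$(iv)$\Rightarrow$(i), and attach (v) through \Cref{cor:concrete-fixed-point-characterizations}\textup{(ii)}, which already gives (ii)$\Leftrightarrow$(v).

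\emph{(ii)$\Rightarrow$(iii)} is trivial, since the median is a majority operation. For \emph{(iii)$\Rightarrow$(iv)}, closedness under a majority operation gives $2$-decomposability by \Cref{thm:near-unanimity-closed-is-decompo}. Integral convexity follows because every $\avgd$-closed set is integrally convex~\citep{TaT21}; this is the same fact used in \Cref{lem:two-dimensional-midpoint-median}, but it holds in every dimension.

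For \emph{(iv)$\Rightarrow$(i)}, the plan is to lift the two-dimensional result. Integral convexity is preserved under coordinate projections~\citep{MoM19}, so each $\proj_W(S)$ with $|W|=2$ is integrally convex. By \Cref{prop:two-dim-IntConv=UTVPI}, each $\proj_W(S)$ therefore has a UTVPI representation $\mathcal L_W$ in the two variables of $W$. Lifting each $\mathcal L_W$ to $\mathbb Z^V$ and taking the union gives a UTVPI system whose integer solution set is $\Join_W\proj_W(S)$. This join equals $S$ by $2$-decomposability, which proves (i).

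For \emph{(i)$\Rightarrow$(ii)}, median-closedness follows from \Cref{prop:TVPI-is-median-closed}, since UTVPI$\subseteq$TVPI. The new ingredient is $\avgd$-closedness of $S=\{x\in\mathbb Z^V: a^{(i)}x\le b^{(i)}\}$ with $a^{(i)}\in\{0,\pm1\}^V$ and $|\supp a^{(i)}|\le2$. We may take each $b^{(i)}$ to be an integer, since the left-hand side is integral on integer points. The intended route is to show that for $x,y\in S$ the point $z=\avgd(x,y)$ satisfies each inequality separately. Write $z_j=\frac{x_j+y_j}{2}+\varepsilon_j$, where $\varepsilon_j=0$ if $x_j+y_j$ is even, and otherwise $\varepsilon_j=+\frac12$ if $x_j>y_j$ and $\varepsilon_j=-\frac12$ if $x_j<y_j$.

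Consider first a single-variable inequality $\pm z_j\le b$. Here $\pm z_j$ is bounded above by $\lceil(\pm x_j\pm y_j)/2\rceil\le b$, since $b$ is an integer.

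For a two-variable inequality $s z_j+t z_k\le b$ with $s,t\in\{\pm1\}$, we have
\[
s z_j+t z_k=\tfrac12\bigl[(s x_j+t x_k)+(s y_j+t y_k)\bigr]+s\varepsilon_j+t\varepsilon_k .
\]
The bracketed average is at most $b$. So the only case needing care is when $s\varepsilon_j+t\varepsilon_k=1$, that is, both coordinates are odd-sum and $s\varepsilon_j=t\varepsilon_k=\frac12$. Write $u=s x_j+t x_k$ and $v=s y_j+t y_k$. In this case the sign conditions give $s(x_j-y_j)>0$ and $t(x_k-y_k)>0$, hence $u>v$. Both odd parities make $u+v$ even, so $u-v\ge2$. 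Therefore $s z_j+t z_k=\frac{u+v}{2}+1\le u\le b$.

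The main obstacle is this parity-and-direction bookkeeping. The approach is to enumerate the sign patterns $(s,t)$ and check that whenever both roundings push the value upward, they do so toward the same endpoint $x$. That is exactly what makes the rounded value bounded by $u$, the value at $x$, which lies in $S$.
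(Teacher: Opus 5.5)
Your proposal is correct and follows the paper's proof essentially step for step: the same cycle (i)$\Rightarrow$(ii)$\Rightarrow$(iii)$\Rightarrow$(iv)$\Rightarrow$(i), the same projection-and-lifting argument via projection-invariance of integral convexity and \Cref{prop:two-dim-IntConv=UTVPI}, and (v) attached through \Cref{cor:concrete-fixed-point-characterizations}\textup{(ii)}. The one difference is cosmetic: for $\avgd$-closedness you track the signs $s,t$ in a direct parity computation, whereas the paper proves a sign-free inequality (\Cref{lem:avgd-preserves-sum-inequality}) and transfers it using $\avgd(\sigma a,\sigma b)=\sigma\avgd(a,b)$; in either version the case where only one rounding goes up (your $s\varepsilon_j+t\varepsilon_k=\tfrac12$) is settled by integrality of both sides, which you should state explicitly rather than leave implicit.
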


The proof uses three auxiliary results. We first recall that
directed-midpoint closedness implies integral convexity. 
We then establish an elementary inequality that yields
directed-midpoint closedness for UTVPI-representable sets. Together
with their known median-closedness, this proves \textup{(ii)}.
 The remaining
implications follow by combining known results on majority operations,
integral convexity under projection, and two-dimensional
UTVPI-representability with a projection-and-lifting argument and the
fixed-point characterization established in
\Cref{sec:2-decomp-closure}.

The following implication is proved in a more general functional
setting in \citep{TaT21}; we include a short proof for completeness.

\begin{lemma}
\label{lem:avgd=>int-conv}
If a set $S\subseteq\mathbb Z^V$ is $\avgd$-closed, then it is
integrally convex.
\end{lemma}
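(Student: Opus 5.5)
The plan is to reduce the statement to the midpoint criterion of \Cref{lemma:IntConvSet-characterization}. By that lemma, it suffices to show that for every pair $y,z\in S$ the midpoint $x=\frac{y+z}{2}$ lies in $\conv\bigl(S\cap N(x)\bigr)$. Since $S$ is $\avgd$-closed, the two natural candidates are
\[
u=\avgd(y,z)\in S \qquad\text{and}\qquad w=\avgd(z,y)\in S.
\]
I will show that both lie in $N(x)$ and that $x=\frac{u+w}{2}$. Then $x$ is a convex combination of two points of $S\cap N(x)$, which is exactly what the lemma requires.

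The verification is coordinatewise; fix $j\in V$ and split into cases.

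If $y_j+z_j$ is even, then $x_j$ is an integer and both roundings are trivial, so $u_j=w_j=x_j$.

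If $y_j+z_j$ is odd, then $y_j\neq z_j$. Suppose first that $y_j>z_j$. By the definition of $\avgd$, we get $u_j=\lceil x_j\rceil=x_j+\tfrac12$. For $w_j=\avgd(z_j,y_j)$ the first argument is now the smaller one, so $w_j=\lfloor x_j\rfloor=x_j-\tfrac12$. The case $y_j<z_j$ is symmetric, with the roles of ceiling and floor exchanged.

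In every case $u_j+w_j=y_j+z_j$, and $|u_j-x_j|\le\tfrac12<1$ and $|w_j-x_j|\le\tfrac12<1$. Hence $u,w\in N(x)$ and $\frac{u+w}{2}=x$.

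Combining these, $x\in\conv\{u,w\}\subseteq\conv\bigl(S\cap N(x)\bigr)$ for all $y,z\in S$, and \Cref{lemma:IntConvSet-characterization} yields that $S$ is integrally convex.

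There is no real obstacle here. The only point needing care is the odd-sum case: one must check that swapping the arguments of $\avgd$ reverses the rounding direction, so that $u_j$ and $w_j$ land on opposite sides of $x_j$ rather than on the same side. This is what makes their average exactly $x$, and it is precisely why the directed midpoint is used here rather than $m^+$ or $m^-$ alone.
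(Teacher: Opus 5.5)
Your proof is correct and follows the paper's argument: you reduce to the midpoint criterion of \Cref{lemma:IntConvSet-characterization} and use the pair $\avgd(y,z),\avgd(z,y)\in S\cap N\bigl(\frac{y+z}{2}\bigr)$, whose sum is $y+z$. Your coordinatewise case check only spells out a step the paper attributes directly to the definition of $\avgd$.
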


\begin{proof}
By \Cref{lemma:IntConvSet-characterization}, it suffices to show that
\[
\frac{x+y}{2}
\in
\conv\left(
S\cap N\left(\frac{x+y}{2}\right)
\right)
\]
for every $x,y\in S$.

Since $S$ is $\avgd$-closed, both $\avgd(x,y)$ and $\avgd(y,x)$
belong to $S$. By the definition of $\avgd$, they belong to
$
N\left(\frac{x+y}{2}\right)$,
and satisfy
$
\avgd(x,y)+\avgd(y,x)=x+y.
$
Therefore,
\[
\frac{x+y}{2}
=
\frac{\avgd(x,y)+\avgd(y,x)}{2}
\in
\conv\left(
S\cap N\left(\frac{x+y}{2}\right)
\right).
\]
Hence, $S$ is integrally convex.
\end{proof}

The following elementary inequality will be used to prove that UTVPI
systems are preserved by the directed midpoint operation.

\begin{lemma}
\label{lem:avgd-preserves-sum-inequality}
Let $a,b,c,d,r\in\mathbb Z$. If
$a+b\leq r$ and
$c+d\leq r$,
then
$\avgd(a,c)+\avgd(b,d)\leq r$.
\end{lemma}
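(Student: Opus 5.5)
We need $\avgd(a,c)+\avgd(b,d)\le r$ whenever $a+b\le r$ and $c+d\le r$. The plan is to reduce to the case where rounding actually matters and then handle the parities directly.

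\emph{Step 1 (basic bound).} First record that $\avgd(x,y)\le \frac{x+y}{2}+\frac12$ always holds. Equality can occur only when $x+y$ is odd and the rounding goes up, that is, when $x>y$. When $x<y$ and $x+y$ is odd, we get $\avgd(x,y)=\frac{x+y}{2}-\frac12$. When $x+y$ is even, $\avgd(x,y)=\frac{x+y}{2}$.

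\emph{Step 2 (easy cases).} Adding the two hypotheses gives
\[
\frac{a+c}{2}+\frac{b+d}{2}\le r .
\]
If both $a+c$ and $b+d$ are even, the claim follows immediately. If exactly one of them is odd, the left side $\avgd(a,c)+\avgd(b,d)$ is an integer that is at most $\frac{a+c+b+d}{2}+\frac12\le r+\frac12$. Since it is an integer, it is at most $r$.

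\emph{Step 3 (main case).} The remaining case is when both $a+c$ and $b+d$ are odd. If at least one of the two pairs rounds down, the bound from Step 1 gives a total at most $\frac{a+b+c+d}{2}\le r$, and we are done. So the only bad case is $a>c$ and $b>d$, with both pairs rounding up. Then
\[
\avgd(a,c)+\avgd(b,d)=\frac{a+b+c+d}{2}+1 .
\]
Here use the hypotheses more sharply. From $a>c$ and $b>d$ we get $c+d<a+b\le r$, so $c+d\le r-1$ by integrality. Hence $a+b+c+d\le 2r-1$. Moreover, both $a+c$ and $b+d$ are odd, so $a+b+c+d$ is even, which forces $a+b+c+d\le 2r-2$. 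Therefore the sum is at most $(r-1)+1=r$.

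The main obstacle is this last case. Averaging the two hypotheses is not enough there; one has to combine the strict inequality $c+d<a+b$, which comes from the rounding directions, with the parity constraint. Everything else is routine case checking.
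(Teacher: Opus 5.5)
Your proof is correct. It isolates the same critical configuration as the paper, namely $a>c$ and $b>d$, where both directed midpoints can round up. The difference is in how you handle that case and in the parity bookkeeping around it.

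The paper splits only on rounding direction. If $a>c$ and $b>d$, it observes that $\avgd(a,c)\le a$ and $\avgd(b,d)\le b$. Then the sum is at most $a+b\le r$, using the first hypothesis alone. Otherwise at least one directed midpoint is at most its exact midpoint. The total is then at most $\frac{a+b+c+d}{2}+\frac12\le r+\frac12$, and integrality finishes.

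So your preliminary split by the parities of $a+c$ and $b+d$ is unnecessary. In the critical case, rather than sharpening the averaged bound with $c+d<a+b$ plus a parity argument, you can simply discard the second hypothesis. Even along your route, $c\le a-1$ and $d\le b-1$ give $c+d\le a+b-2\le r-2$ directly, so the parity step is redundant.

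Your version computes the rounding error exactly in every case, which shows where the slack of $\frac12$ comes from, at the cost of a longer case analysis. The paper's version is shorter and rests on the single observation that $\avgd(x,y)\le x$ whenever $x\ge y$.
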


\begin{proof}
If $a>c$ and $b>d$, then
$\avgd(a,c)\leq a$ and $\avgd(b,d)\leq b$, and the conclusion follows
from $a+b\leq r$.

Otherwise, at least one of the two directed midpoints is rounded
downward, while the other can be greater than its ordinary midpoint by
at most $1/2$. Hence,
\[
\avgd(a,c)+\avgd(b,d)
\leq
\frac{a+c}{2}+\frac{b+d}{2}+\frac12
\leq
r+\frac12.
\]
The left-hand side is an integer, and therefore it is at most $r$.
\end{proof}

\begin{lemma}
\label{prop:UTVPI-median-avg-closed}
Every UTVPI-representable set is closed under both $\avgd$ and the
median operation.
\end{lemma}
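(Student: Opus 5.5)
Median-closedness is already available: every UTVPI system is a TVPI system, so \Cref{prop:TVPI-is-median-closed} shows that every UTVPI-representable set is median-closed. It therefore remains to prove $\avgd$-closedness. Fix a UTVPI representation of $S$ and take $x,y\in S$. Let $z=\avgd(x,y)$. I will check that $z$ satisfies every inequality $ax\le b$ of the system. Since $x$ and $y$ are integral, the left-hand side at $x$ and at $y$ is an integer. Hence I may replace $b$ by $r=\lfloor b\rfloor\in\mathbb Z$ without changing the set of integer solutions, and $x$, $y$ still satisfy the inequality. The case analysis then follows the support of $a$, which has size at most two with entries in $\{0,\pm1\}$.

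For $|\supp(a)|=1$, the inequality reads $\pm x_j\le r$. Since $\avgd(x_j,y_j)$ lies between $x_j$ and $y_j$, it satisfies both $x_j\le r$ and $-x_j\le r$ whenever $x_j$ and $y_j$ do. For $|\supp(a)|=2$, I reduce to the form $u+v\le r$ by negating coordinates. The key observation is the identity $\avgd(-s,-t)=-\avgd(s,t)$. It holds because if $s\ge t$ then $-s\le -t$, and the case $s=t$ is trivial since the midpoint is then integral. When $s>t$, we get $\avgd(-s,-t)=\lfloor(-s-t)/2\rfloor=-\lceil (s+t)/2\rceil=-\avgd(s,t)$, and the case $s<t$ is symmetric. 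So for an inequality $\epsilon_i x_i+\epsilon_j x_j\le r$ with $\epsilon_i,\epsilon_j\in\{\pm1\}$, I set $a=\epsilon_i x_i$, $b=\epsilon_j x_j$, $c=\epsilon_i y_i$, $d=\epsilon_j y_j$. Then $a+b\le r$ and $c+d\le r$, and
\[
\epsilon_i z_i+\epsilon_j z_j=\avgd(a,c)+\avgd(b,d)\le r
\]
by \Cref{lem:avgd-preserves-sum-inequality}. This covers the sums, the differences, and the negated sums uniformly.

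Hence $z$ satisfies every inequality, so $z\in S$, and $S$ is $\avgd$-closed. The only delicate point is the sign-symmetry identity for $\avgd$, in particular the tie case $s=t$ and the correct handling of the direction of rounding. Once that identity is in place, the rest is immediate from \Cref{lem:avgd-preserves-sum-inequality}.
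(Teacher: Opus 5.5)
Your proof is correct and follows essentially the same route as the paper: median-closedness comes from the TVPI result, and for $\avgd$ you round the right-hand side down, use the sign symmetry $\avgd(\sigma a,\sigma b)=\sigma\avgd(a,b)$, and invoke \Cref{lem:avgd-preserves-sum-inequality}. The only cosmetic difference is that the paper covers single-variable (and empty-support) inequalities by allowing a zero coefficient $\sigma$ or $\tau$. You instead handle the single-variable case separately, using that $\avgd(x_j,y_j)$ lies between $x_j$ and $y_j$.
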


\begin{proof}
Let $S$ be represented by a UTVPI system $\mathcal L$.
Since every UTVPI system is a TVPI system, $S$ is
TVPI-representable and hence median-closed by
\Cref{prop:TVPI-is-median-closed}. It remains to prove that $S$ is
$\avgd$-closed.

Let
\[
\sigma x_i+\tau x_j\leq r
\]
be an inequality in $\mathcal L$, where
$\sigma,\tau\in\{0,-1,+1\}$. Since the left-hand side is
integer-valued on $\mathbb Z^V$, we may replace $r$ with
$\lfloor r\rfloor$ and assume that $r\in\mathbb Z$.

Let   $p,q\in S$. Then 
by 
applying \Cref{lem:avgd-preserves-sum-inequality} to
$
\sigma p_i+\tau p_j\leq r$ and 
$\sigma q_i+\tau q_j\leq r$, we obtain 
\[
\avgd(\sigma p_i,\sigma q_i)
+
\avgd( \tau p_j,\tau q_j)
\leq r.
\]
Since $\avgd(\sigma a,\sigma b)
=
\sigma\avgd(a,b)$ 
for all $\sigma\in\{0,-1,+1\}$, 
we have 
\[
\sigma\avgd(p_i,q_i)
+
\tau\avgd(p_j,q_j)
\leq r.
\]
Thus, $\avgd(p,q)$ satisfies every inequality in $\mathcal L$.
Therefore, $\avgd(p,q)\in S$, proving that  $S$ is $\avgd$-closed.
\end{proof}

We are now ready to prove \Cref{thm:main}.

\begin{proof}[Proof of \Cref{thm:main}]
The implication
\textup{(i)}~$\Rightarrow$~\textup{(ii)}
is \Cref{prop:UTVPI-median-avg-closed}. Since the median is a majority
operation, \textup{(ii)} implies \textup{(iii)}.

Suppose that \textup{(iii)} holds. By
\Cref{lem:avgd=>int-conv}, the set $S$ is integrally convex.
Moreover, closedness under a majority operation implies
$2$-decomposability by
\Cref{thm:near-unanimity-closed-is-decompo}. Thus,
\textup{(iv)} holds.

Now suppose that \textup{(iv)} holds. Integral convexity is preserved
under coordinate projections~\citep[Theorem~3.1]{MoM19}. Hence, for
every $W\in\binom{V}{2}$, the set $\proj_W(S)$ is integrally convex.
Since $|W|=2$, it is UTVPI-representable by
\Cref{prop:two-dim-IntConv=UTVPI}. Lifting the UTVPI representations
of these two-coordinate projections to $\mathbb Z^V$ and taking their
union yields a UTVPI system whose set of integer solutions is
\[
\Join_{W\in\binom{V}{2}}\proj_W(S)
=
S,
\]
where the equality follows from $2$-decomposability. Therefore,
\textup{(i)} holds.

Finally, the equivalence of \textup{(ii)} and \textup{(v)} was
established in
\Cref{cor:concrete-fixed-point-characterizations}\textup{(ii)}.
\end{proof}
The consequences of \Cref{thm:main} for intersection
characterizations are examined in detail in
\Cref{subsec:representability-hierarchy}. For $n\geq3$, we obtain
there the complete list of nontrivial pairwise intersection
characterizations of UTVPI-representability among the classes
considered in this paper. This result contrasts with
TVPI-representability, for which no analogous characterization exists
within the same collection of classes.

\subsection{Characterizations of SVPI-representability}
\label{subsec:SVPI-multioperation}

\providecommand{\mnc}{midpoint-neighborhood-closed}
\providecommand{\mncness}{midpoint-neighborhood-closedness}
\newcommand{\mnd}{\mathsf{mnd}}
\newcommand{\boxop}{\mathsf{box}}

The preceding results show that DC- and UTVPI-representable sets admit
characterizations by closedness under ordinary operations, that is,
the single-valued operations introduced in
\Cref{subsec-closure}. The situation is different for
SVPI-representability.
We first explain why ordinary operations do not suffice and then
characterize SVPI-representable sets by box-closedness and
\mncness, interpret these properties in terms of two binary
\emph{multioperations}, and derive a one-dimensional reconstruction formula.

\begin{lemma}
\label{lem:diagonal-closed-under-all-operations}
The diagonal set
\[
\Delta_V
=
\{\,(a,\ldots,a)\in\mathbb Z^V\mid a\in\mathbb Z\,\}
\]
is closed under every operation on $\mathbb Z$.
\end{lemma}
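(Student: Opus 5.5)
The plan is to verify closedness directly from the definition of componentwise application, using the fact that every coordinate of a diagonal vector carries the same value. Fix an arbitrary $k$-ary operation $f:\mathbb Z^k\to\mathbb Z$ and arbitrary vectors $x^{(1)},\ldots,x^{(k)}\in\Delta_V$. Write $x^{(i)}=(a_i,\ldots,a_i)$ with $a_i\in\mathbb Z$ for $i=1,\ldots,k$.

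By the definition of componentwise application in \Cref{subsec-closure}, the $j$-th coordinate of $f(x^{(1)},\ldots,x^{(k)})$ is
\[
f\bigl(x^{(1)}_j,\ldots,x^{(k)}_j\bigr)=f(a_1,\ldots,a_k)
\]
for every $j\in V$. This value does not depend on $j$. Setting $a=f(a_1,\ldots,a_k)\in\mathbb Z$, we obtain
\[
f\bigl(x^{(1)},\ldots,x^{(k)}\bigr)=(a,\ldots,a)\in\Delta_V.
\]
Hence $\Delta_V$ is $f$-closed. Since $f$ was arbitrary, $\Delta_V$ is closed under every operation on $\mathbb Z$.

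There is no real obstacle. The only point to keep in mind is that the argument must work for operations of every arity $k\ge1$, and it does, because nothing in it depends on $k$. The purpose of this lemma appears to be the following: $\Delta_V$ is not SVPI-representable (it is not a Cartesian product when $n>1$), yet it is preserved by all ordinary operations. This is why ordinary operations cannot characterize SVPI-representability. That application is left to the paper's subsequent argument, and the lemma itself only needs the computation above.
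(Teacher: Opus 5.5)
Your proof is correct and matches the paper's argument. Writing each $x^{(i)}=(a_i,\ldots,a_i)$, every coordinate of $f(x^{(1)},\ldots,x^{(k)})$ equals $f(a_1,\ldots,a_k)$, so the image lies in $\Delta_V$ for every arity $k$.
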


\begin{proof}
Let $f:\mathbb Z^k\to\mathbb Z$ be any operation, and let
$a_1,\ldots,a_k\in\mathbb Z$. Then
\[
f\bigl(
(a_1,\ldots,a_1),\ldots,(a_k,\ldots,a_k)
\bigr)
=
\bigl(
f(a_1,\ldots,a_k),\ldots,f(a_1,\ldots,a_k)
\bigr)
\in\Delta_V.
\]
\end{proof}

\begin{proposition}
\label{prop:SVPI-not-total-operation-characterizable}
There is no family $F$ of ordinary operations on $\mathbb Z$ that
characterizes SVPI-representability by closedness; that is, no such
$F$ satisfies
\[
S\text{ is SVPI-representable}
\quad\Longleftrightarrow\quad
S\text{ is }F\text{-closed}
\]
for every $S\subseteq\mathbb Z^V$.
\end{proposition}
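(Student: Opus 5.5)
The plan is to argue by contradiction using \Cref{lem:diagonal-closed-under-all-operations}. Suppose some family $F$ of ordinary operations on $\mathbb Z$ satisfies the equivalence for every $S\subseteq\mathbb Z^V$. The diagonal $\Delta_V$ is closed under every operation on $\mathbb Z$, so in particular it is $F$-closed. The assumed equivalence would then force $\Delta_V$ to be SVPI-representable. It therefore suffices to show that $\Delta_V$ is not SVPI-representable when $n>1$.

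For this, I would use the Cartesian-product structure of SVPI-representable sets. Each SVPI inequality constrains a single coordinate. Hence the integer solution set of an SVPI system $\mathcal L$ has the form $\prod_{j\in V}A_j$, where $A_j\subseteq\mathbb Z$ is the set of integers satisfying the inequalities of $\mathcal L$ supported on $\{j\}$. Inequalities with empty support are either vacuous or make the whole set empty; in the latter case we may take all $A_j=\emptyset$.

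Now suppose $\Delta_V=\prod_j A_j$. Since $(0,\ldots,0)$ and $(1,\ldots,1)$ lie in $\Delta_V$, we get $0,1\in A_j$ for every $j$. Then the point $(0,1,0,\ldots,0)$ lies in the product, using $n\ge 2$. This point is not on the diagonal, which is a contradiction. Hence no such $F$ exists.

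There is no real obstacle here. The only point needing care is the product-form observation, and in particular the handling of inequalities with empty support, which I would dispatch as above.
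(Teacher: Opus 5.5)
Your proposal is correct and follows the paper's approach: the diagonal $\Delta_V$ is closed under every operation by \Cref{lem:diagonal-closed-under-all-operations}, yet it is not SVPI-representable. Your product-form argument, using $0,1\in A_j$ to produce the off-diagonal point $(0,1,0,\ldots,0)$, supplies the detail the paper leaves as ``not difficult to see.''
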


\begin{proof}
By \Cref{lem:diagonal-closed-under-all-operations}, the diagonal set
$\Delta_V$ is $F$-closed for every family $F$ of operations.
However, it is not difficult to see that  $\Delta_V$ is not SVPI-representable. 
\end{proof}

For $x,y\in\mathbb Z^V$, define the \emph{box} spanned by $x$ and $y$
by
\[
[x,y]
=
\left\{
z\in\mathbb Z^V
\ \middle|\
\min\{x_i,y_i\}
\leq z_i\leq
\max\{x_i,y_i\}
\text{ for every }i\in V
\right\}.
\]
Throughout this subsection, $[x,y]$ denotes this integer box.
A set $S\subseteq\mathbb Z^V$ is called \emph{box-closed} if
$[x,y]\subseteq S$
for every $x,y\in S$.
It is called \emph{\mnc} if
$
N\left(\frac{x+y}{2}\right)\subseteq S$
for every $x,y\in S$.

Both properties above can be expressed in terms of multioperations.
Following \citep[Section~4]{Borner08}, a map of the form
\[
f:A^k\longrightarrow 2^A
\]
is called a \emph{multioperation} or a \emph{multifunction}.
In this paper, we restrict attention to \emph{total} multioperations,
that is, multioperations satisfying
\[
f(\boldsymbol{a})\neq\emptyset
\qquad
\text{for every }\boldsymbol{a}\in A^k.
\]

Define the binary multioperations
$\boxop,\mnd:\mathbb Z^2\to 2^{\mathbb Z}$ by
\begin{eqnarray*}
\boxop(a,b)
&=&
\left\{
c\in\mathbb Z
\ \middle|\
\min\{a,b\}\leq c\leq\max\{a,b\}
\right\}\\
\mnd(a,b)
&=&
\left\{
c\in\mathbb Z
\ \middle|\
\left|c-\frac{a+b}{2}\right|<1
\right\}.
\end{eqnarray*}
Their componentwise extensions satisfy
\[
\boxop(x,y)=[x,y]
\ \ \  \text{and}\ \ \ 
\mnd(x,y)
=
N\left(\frac{x+y}{2}\right)
\ \ \  
\text{for all } x,y \in \mathbb{Z}^V.
\]
Thus, $S$ is box-closed if and only if $\boxop$ preserves $S$, whereas
$S$ is \mnc\ if and only if $\mnd$ preserves $S$. In analogy with
ordinary polymorphisms, we call such a preserving multioperation a
\emph{multipolymorphism} of $S$.

Since
\[
\mnd(x,y)\subseteq \boxop(x,y)
\]
for every $x,y\in\mathbb Z^V$, box-closedness immediately implies
\mncness. The converse is the nontrivial direction.

\begin{lemma}[Midpoint neighborhoods generate boxes]
\label{lem:M-closed-box}
Every \mnc\ set is box-closed.
\end{lemma}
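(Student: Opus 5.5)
Let $S$ be midpoint-neighborhood-closed, let $x,y\in S$, and let $z\in[x,y]$. We aim to show $z\in S$. The plan is an induction on the $\ell_\infty$-distance $\|x-y\|_\infty$, or more precisely on the potential $\sum_{i\in V}|x_i-y_i|$, proving the statement ``$[x,y]\subseteq S$ for all $x,y\in S$ with $\sum_i|x_i-y_i|\le k$''.

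\textbf{Base cases.} If every $|x_i-y_i|\le 1$, then $N\bigl(\frac{x+y}{2}\bigr)$ is exactly the set of integer vectors $z$ with $z_i=x_i$ whenever $x_i=y_i$ and $z_i\in\{x_i,y_i\}$ otherwise. Indeed, for half-integral coordinates both neighbours are within distance $1/2$. So $N\bigl(\frac{x+y}{2}\bigr)=[x,y]$, and the claim follows directly from midpoint-neighborhood-closedness.

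\textbf{Inductive step.} Suppose some $|x_i-y_i|\ge2$. Choose any $m\in N\bigl(\frac{x+y}{2}\bigr)$, for instance $m=\avgd(x,y)$; it lies in $S$. For every coordinate $i$, $m_i$ lies between $x_i$ and $y_i$. Moreover $|x_i-m_i|\le|x_i-y_i|$ and $|m_i-y_i|\le|x_i-y_i|$, with strict inequality whenever $|x_i-y_i|\ge2$, since then $|m_i-x_i|\le\frac{|x_i-y_i|}{2}+\frac12<|x_i-y_i|$. Hence both potentials $\sum_i|x_i-m_i|$ and $\sum_i|m_i-y_i|$ are strictly smaller, and by induction $[x,m]\cup[m,y]\subseteq S$.

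\textbf{Main obstacle.} The union $[x,m]\cup[m,y]$ need not contain all of $[x,y]$: a point $z$ may need $z_i$ on the $x$-side of $m_i$ in one coordinate and on the $y$-side in another. To repair this, define $u,v\in[x,y]$ coordinatewise. If $z_i$ lies between $x_i$ and $m_i$, put $u_i=z_i$ and $v_i=m_i$. Otherwise $z_i$ lies between $m_i$ and $y_i$, and we put $u_i=x_i$ and $v_i=z_i$. Then $u\in[x,m]$ and $v\in[m,y]$, so $u,v\in S$. Moreover $z\in[u,v]$, and in each coordinate $[u_i,v_i]$ is contained in one of the two halves. So
\[
\sum_i|u_i-v_i|\le\sum_i\max\{|x_i-m_i|,|m_i-y_i|\}<\sum_i|x_i-y_i|
\]
whenever some coordinate has gap at least $2$. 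If all gaps are at most $1$, the base case applies instead. The strictness needs a small check: for gap-$0$ and gap-$1$ coordinates the $\max$ is at most the gap, and for gap${}\ge2$ coordinates it is strictly less. Since at least one coordinate has gap at least $2$, the total decreases strictly. Induction then gives $[u,v]\subseteq S$, and hence $z\in S$. This completes the plan; the only delicate step is the construction of $u,v$ to straddle $z$.
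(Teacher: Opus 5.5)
Your strategy is sound, but the straddling step contains a slip that breaks it as written. In the second case you set $u_i=x_i$ and $v_i=z_i$. The interval between $u_i$ and $v_i$ then contains $m_i$ and in general is \emph{not} contained in either half. So the bound $\sum_i|u_i-v_i|\le\sum_i\max\{|x_i-m_i|,|m_i-y_i|\}$ fails, and the potential need not decrease at all. For example, take $x=(0,0)$, $y=(2,1)$, $m=(1,1)\in N\bigl((1,\tfrac12)\bigr)$ and $z=(2,0)$: your rule returns $u=x$ and $v=y$. The fix is to put $u_i=m_i$, not $x_i$, in the second case, so that $u$ and $v$ are the coordinatewise nearest points of $z$ in the boxes $[x,m]$ and $[m,y]$. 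Then $u\in[x,m]$, $v\in[m,y]$ and $z\in[u,v]$ still hold. Moreover, each interval between $u_i$ and $v_i$ now lies in a single half, so your inequality holds and the induction closes. In the example this gives $u=(1,0)$ and $v=(2,1)$, a pair handled by your base case.

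With this correction your proof is valid and takes a genuinely different route from the paper. The paper first proves a unit-step property: for $x,y\in S$ with $x_i\neq y_i$, the point $x+\operatorname{sgn}(y_i-x_i)e_i$ lies in $S$. It obtains this by iterating $y^{(t+1)}\in N\bigl(\frac{x+y^{(t)}}{2}\bigr)$, shrinking every coordinate toward $x$ except coordinate $i$, which is frozen once its distance to $x_i$ reaches $1$. It then walks from $x$ to an arbitrary $z\in[x,y]$ one unit step at a time, reapplying the property to the current point and $y$. You instead run a divide-and-conquer strong induction on $\sum_i|x_i-y_i|$. A single point $m$ of the midpoint neighborhood splits the box, and both half-boxes come from the induction hypothesis. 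The remaining points of $[x,y]$ are covered by a third, strictly smaller box spanned by one point from each half. Your base case is the identity $N\bigl(\frac{x+y}{2}\bigr)=[x,y]$ when all gaps are at most $1$. The paper's argument is explicitly constructive and gives the unit-step (local connectivity) property as a by-product. Yours is symmetric in $x$ and $y$ and avoids path bookkeeping, but it puts all the difficulty into the straddling construction, which is exactly where the slip occurred.
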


\begin{proof}
We first establish a unit-step property. Let $S$ be \mnc, let
$x,y\in S$, and fix $i\in V$ with $x_i\neq y_i$. Set
\[
\varepsilon=\operatorname{sgn}(y_i-x_i).
\]
We show that
\[
x+\varepsilon e_i\in S,
\]
where $e_i$ is the $i$th unit vector.

Set $y^{(0)}=y$. Given $y^{(t)}\in S$, choose
\[
y^{(t+1)}
\in
N\left(\frac{x+y^{(t)}}{2}\right)
\]
coordinatewise as follows. For every $j\neq i$, choose
$y_j^{(t+1)}$ so that its distance from $x_j$ is as small as possible.
For $j=i$, make the same choice while
$|y_i^{(t)}-x_i|>1$; once this distance becomes $1$, choose
$y_i^{(t+1)}=y_i^{(t)}$. Such a choice is always possible. Since
$S$ is \mnc, every $y^{(t)}$ belongs to $S$.

For $j\neq i$, every positive distance
$|y_j^{(t)}-x_j|$ strictly decreases. The $i$th distance strictly
decreases until it becomes $1$ and then remains equal to $1$.
Consequently, after finitely many steps,
\[
y^{(t)}=x+\varepsilon e_i,
\]
which proves the unit-step property.

Now let $z\in[x,y]$. Starting from $x$, repeatedly choose a coordinate
in which the current vector differs from $z$ and apply the unit-step
property to the current vector and $y$. Each step changes only the
chosen coordinate by one unit toward $z$. The procedure reaches $z$
after finitely many steps, and every intermediate vector belongs to
$S$. Hence, $z\in S$, proving that $S$ is box-closed.
\end{proof}

\begin{theorem}[Characterizations of SVPI representability]
\label{thm:SVPI-neighborhood-characterization}
For a set $S\subseteq\mathbb Z^V$, the following statements are
equivalent:
\begin{enumerate}
\setlength{\itemsep}{4pt}
\setlength{\parsep}{0pt}

\item[\textup{(i)}]
$S$ is SVPI-representable.

\item[\textup{(ii)}]
$S$ is box-closed.

\item[\textup{(iii)}]
$S$ is \mnc.

\item[\textup{(iv)}]
$
S
=
\prod_{i\in V}
\left(
\conv\bigl(\proj_{\{i\}}(S)\bigr)\cap\mathbb Z
\right).
$
\end{enumerate}
\end{theorem}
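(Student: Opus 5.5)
I would prove the cycle (i) $\Rightarrow$ (ii) $\Rightarrow$ (iv) $\Rightarrow$ (i), and add (ii) $\Leftrightarrow$ (iii) separately.

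For (ii) $\Leftrightarrow$ (iii): one direction is \Cref{lem:M-closed-box}. The other follows from the inclusion $\mnd(x,y)\subseteq\boxop(x,y)$, i.e., $N\bigl(\tfrac{x+y}{2}\bigr)\subseteq[x,y]$. This holds because each coordinate of a point in the midpoint neighborhood is an integer strictly within distance $1$ of $(x_i+y_i)/2$, and such an integer lies between $\min\{x_i,y_i\}$ and $\max\{x_i,y_i\}$.

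For (i) $\Rightarrow$ (ii): let $\mathcal L$ be an SVPI representation of $S$. By the convention on SVPI systems, every inequality has the form $\sigma x_i\le r$ with $\sigma\in\{0,-1,+1\}$. Let $x,y\in S$ and $z\in[x,y]$. If $\sigma=0$, the inequality says $0\le r$, which holds since $S$ is nonempty (it contains $x$). Otherwise $\sigma z_i\le\max\{\sigma x_i,\sigma y_i\}\le r$, because $z_i$ lies between $x_i$ and $y_i$. So $z$ satisfies $\mathcal L$, hence $z\in S$.

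For (ii) $\Rightarrow$ (iv): write $I_i=\conv(\proj_{\{i\}}(S))\cap\mathbb Z$. The inclusion $S\subseteq\prod_i I_i$ is trivial. If $S=\emptyset$, both sides are empty, so assume $S\ne\emptyset$. Take $z\in\prod_i I_i$. For each $i$ there are points $u^{(i)},v^{(i)}\in S$ with $u^{(i)}_i\le z_i\le v^{(i)}_i$, since $I_i$ is the integer interval between $\inf$ and $\sup$ of the projections, and $z_i$, being an integer in this interval, is bracketed by actual projected values. The task is to combine these finitely many witnesses. I proceed coordinate by coordinate. Start with $w^{(0)}\in S$ arbitrary. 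Given $w^{(k-1)}\in S$, first apply box-closedness to $w^{(k-1)}$ and $u^{(k)}$: this gives the point $w'\in S$ that agrees with $w^{(k-1)}$ except that coordinate $k$ is replaced by $u^{(k)}_k$. Then apply box-closedness to $w'$ and $v^{(k)}$, and set coordinate $k$ to $z_k$, leaving the other coordinates unchanged. This yields $w^{(k)}\in S$ agreeing with $z$ on the first $k$ coordinates, and after $n$ steps $w^{(n)}=z\in S$. The key observation is that the box of two points contains any point that takes each coordinate from one of them or between them; in particular it contains the point that keeps all coordinates of the first point except one.

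For (iv) $\Rightarrow$ (i): each $I_i$ is an integer interval, possibly unbounded or empty. It is represented by at most two inequalities $x_i\le\sup$ and $-x_i\le-\inf$; if $I_i$ is empty, use the inequality $0\cdot x\le -1$. The union of these systems is an SVPI representation of the product, which equals $S$.

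I do not expect a serious obstacle. The most delicate points are the careful statement of the coordinate-replacement step in (ii) $\Rightarrow$ (iv), and the bookkeeping for the empty set and for infinite intervals.
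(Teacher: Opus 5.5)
Your proof is correct and follows essentially the same route as the paper: (ii)$\Leftrightarrow$(iii) comes from the inclusion $N\bigl(\frac{x+y}{2}\bigr)\subseteq[x,y]$ together with \Cref{lem:M-closed-box}, and (i), (ii), (iv) are linked through the fact that SVPI solution sets are exactly Cartesian products of integer intervals. Your coordinate-by-coordinate use of box-closedness and your handling of empty and unbounded intervals spell out the details that the paper leaves as an elementary observation.
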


\begin{proof}
The equivalence of \textup{(i)}, \textup{(ii)}, and \textup{(iv)}
follows from the elementary fact that the integer solution sets of
SVPI systems are precisely the Cartesian products of integer
intervals. Indeed, box-closedness forces each one-coordinate
projection to be an integer interval and permits arbitrary
coordinatewise combinations of these projections, yielding exactly
the product in \textup{(iv)}.

The implication \textup{(ii)} $\Rightarrow$ \textup{(iii)} follows
from
\[
N\left(\frac{x+y}{2}\right)\subseteq[x,y]
\qquad
\text{for every }x,y\in\mathbb Z^V.
\]
The converse is \Cref{lem:M-closed-box}.
\end{proof}

\begin{remark}
Condition \textup{(iv)} is the one-dimensional analogue of the
pairwise closure-and-join formulas in
\Cref{sec:2-decomp-closure}: each one-coordinate projection is replaced
by its integer convex hull, and the resulting one-dimensional sets are
recombined by taking their Cartesian product.
\end{remark}

\subsection{Limits of operation-based characterizations}
\label{subsec:limits-operation-characterizations}

The preceding results provide operation-based characterizations of
DC- and UTVPI-representability by ordinary operations and of
SVPI-representability by multioperations. 

Among the classes considered
in this paper, the five remaining classes for which no such
characterization has yet been given are the TVPI-representable,
closed-hole-free, hole-free, $2$-decomposable, and integrally convex
sets.
We now determine whether these classes admit analogous
characterizations.
We show that the first four classes admit no operation-based
characterization that is uniform over all dimensions. For integrally
convex sets, no such characterization exists even in any fixed
dimension $n\geq3$.
Here, a characterization by multioperations is called
\emph{uniform over all dimensions} if there exists a family $F$ of
total multioperations on $\mathbb Z$, independent of the dimension
$n$, such that, for every $n$ and every
$S\subseteq\mathbb Z^n$, the set $S$ belongs to the class under
consideration if and only if it is $F$-closed.

We first note two basic properties of closedness under
multioperations, which apply in particular to ordinary operations.
Recall that, throughout this paper, every multioperation is assumed
to be total; that is, a $k$-ary multioperation on $\mathbb Z$ is a map
\[
f:\mathbb Z^k\longrightarrow
2^{\mathbb Z}\setminus\{\emptyset\}.
\]

\begin{lemma}
\label{lem:preserved-classes-projection-intersection}
Let $F$ be a family of  multioperations on $\mathbb Z$.
Then the following statements hold:
\begin{enumerate}
\setlength{\itemsep}{4pt}
\setlength{\parsep}{0pt}

\item[\textup{(i)}]
If $\mathcal S$ is a family of $F$-closed subsets of
$\mathbb Z^V$, then
$
\bigcap_{T\in\mathcal T}T
$
is $F$-closed for every subfamily
$\mathcal T\subseteq\mathcal S$.

\item[\textup{(ii)}]
If $S\subseteq\mathbb Z^V$ is $F$-closed, then
$\proj_W(S)$ is $F$-closed for every $W\subseteq V$.
\end{enumerate}
\end{lemma}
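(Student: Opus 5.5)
Both parts follow directly from the definitions once closedness under a multioperation is made precise. For a $k$-ary multioperation $f$ and vectors $x^{(1)},\ldots,x^{(k)}\in\mathbb Z^V$, the componentwise extension is the product set
\[
f\bigl(x^{(1)},\ldots,x^{(k)}\bigr)=\prod_{j\in V} f\bigl(x^{(1)}_j,\ldots,x^{(k)}_j\bigr),
\]
and a set $S$ is $f$-closed if this whole set is contained in $S$ whenever all $x^{(\ell)}\in S$. This is the convention already used for $\boxop$ and $\mnd$. As in the proof of \Cref{lemma:proj-closed}, it suffices to treat a single $f\in F$.

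For \textup{(i)}, let $x^{(1)},\ldots,x^{(k)}$ lie in $\bigcap_{T\in\mathcal T}T$. Then they lie in every $T\in\mathcal T$. Since each such $T$ is $f$-closed, the set $f(x^{(1)},\ldots,x^{(k)})$ is contained in every $T$, and hence in the intersection. If $\mathcal T=\emptyset$, the intersection is interpreted as $\mathbb Z^V$, which is trivially $f$-closed.

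For \textup{(ii)}, the key step is the multioperation analogue of \eqref{eq-proj}:
\[
\proj_W\bigl(f(x^{(1)},\ldots,x^{(k)})\bigr)=f\bigl(\proj_W(x^{(1)}),\ldots,\proj_W(x^{(k)})\bigr).
\]
The inclusion ``$\subseteq$'' is immediate from the product structure. The inclusion ``$\supseteq$'' is where totality is used: given $z\in\prod_{j\in W}f(x^{(1)}_j,\ldots,x^{(k)}_j)$, we extend $z$ to a vector of $f(x^{(1)},\ldots,x^{(k)})$ by choosing, for each $j\notin W$, some element of $f(x^{(1)}_j,\ldots,x^{(k)}_j)$. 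Such an element exists because this set is nonempty.

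Now take preimages $x^{(\ell)}\in S$ of arbitrary points $\proj_W(x^{(\ell)})\in\proj_W(S)$. By the displayed identity and the $f$-closedness of $S$,
\[
f\bigl(\proj_W(x^{(1)}),\ldots,\proj_W(x^{(k)})\bigr)=\proj_W\bigl(f(x^{(1)},\ldots,x^{(k)})\bigr)\subseteq\proj_W(S).
\]
Hence $\proj_W(S)$ is $f$-closed. The only point needing care is this surjectivity step; without totality, an empty value outside $W$ would make the left side of the identity empty while the right side could be nonempty, and \textup{(ii)} could fail.
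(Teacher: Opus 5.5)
Your proof is correct and follows the paper's argument. Part (i) is immediate from the definition, and in part (ii) you use totality exactly as the paper does: an arbitrary output on $W$ is extended to a full output vector lying in $S$, which you package as the multioperation analogue of the projection identity. Your remark that (ii) can fail without totality is also accurate.
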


\begin{proof}
Statement \textup{(i)} follows directly from the definition of
$F$-closedness. For \textup{(ii)}, the proof is the same lifting
argument as in \Cref{lemma:proj-closed}\textup{(i)}.
The only additional point is that totality allows an arbitrary output
on $W$ to be extended to a full output vector, which belongs to $S$
by $F$-closedness; hence, the original output belongs to
$\proj_W(S)$.
\end{proof}

The preceding lemma reduces the desired impossibility results to
failures of closure under intersections and coordinate projections.
We establish the required failures by the examples below.

\begin{theorem}[Limits of operation-based characterizations]
\label{thm:limits-operation-characterizations}
The following statements hold:
\begin{enumerate}
\setlength{\itemsep}{4pt}
\setlength{\parsep}{0pt}

\item[\textup{(i)}]
None of the classes of TVPI-representable, closed-hole-free,
hole-free, or $2$-decomposable sets admits a characterization by
multioperations that is uniform over all dimensions.

\item[\textup{(ii)}]
For every fixed $n\geq3$, there is no family $F$ of 
multioperations on $\mathbb Z$ such that
\[
S\text{ is integrally convex}
\quad\Longleftrightarrow\quad
S\text{ is }F\text{-closed}
\]
holds for every $S\subseteq\mathbb Z^n$.
\end{enumerate}
\end{theorem}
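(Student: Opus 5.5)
Both parts reduce to \Cref{lem:preserved-classes-projection-intersection}. If a class were characterized, uniformly over all dimensions, by closedness under a family $F$ of total multioperations, then by part (ii) of that lemma the class would be preserved under coordinate projections. If a class of subsets of $\mathbb Z^n$ for a fixed $n$ were characterized by $F$-closedness, then by part (i) it would be closed under intersections. So for (i) we exhibit, for each of the four classes, a set $S\subseteq\mathbb Z^3$ in the class whose projection onto two coordinates is not in the class. For (ii) we exhibit, for each $n\ge3$, two integrally convex subsets of $\mathbb Z^n$ whose intersection is not integrally convex.

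For (i), the example of \Cref{rem:closed-hole-free-not-preserved-by-projection} is the natural candidate: $S=\{(0,0,0),(2,2,3)\}$.
\begin{itemize}
\item $S$ is closed-hole-free, hence hole-free, since $S$ is finite.
\item The projection $\{(0,0),(2,2)\}$ is neither hole-free nor closed-hole-free, because it misses $(1,1)$.
\item The projection is not TVPI-representable: in dimension two TVPI-representability coincides with linear representability, hence with closed-hole-freeness by \Cref{prop:linear-holefree}.
\item $S$ itself is TVPI-representable. It is $2$-decomposable, since its three pairwise projections $\{(0,0),(2,2)\}$, $\{(0,0),(2,3)\}$, $\{(0,0),(2,3)\}$ already force $x_1=0\Leftrightarrow x_2=0\Leftrightarrow x_3=0$, and any $x$ lying in all projections equals one of the two points. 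Each pairwise projection $R_W$ is contained in a closed-hole-free set $\overline{\conv}(R_W)\cap\mathbb Z^W$, and the join of these sets is contained in the join of the projections when the segments contain no extra integer points. A cleaner route is to use \Cref{thm:pairwise-closed-convex-hull}: $\Psi(S)$ is the set of integer $x$ with $(x_1,x_2)\in\{(t,t):0\le t\le2\}$ and $(x_1,x_3),(x_2,x_3)$ on the segment to $(2,3)$, which forces $x\in S$. Hence $S=\Psi(S)$ and $S$ is TVPI-representable.
\item For $2$-decomposability, note that $\{(0,0),(2,2)\}$ is trivially $2$-decomposable in dimension two, so this example fails for that class. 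A different example is needed.
\end{itemize}

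For the $2$-decomposable class, I would take the example of \S2.1 lifted by one coordinate: $S=\{(0,0,0,0),(0,1,1,1),(1,0,1,2),(1,1,0,3)\}\subseteq\mathbb Z^4$, where the fourth coordinate acts as a distinct label. Any $x$ in the pairwise join has $x_4\in\{0,1,2,3\}$, and the pairs $(x_i,x_4)$ then determine $x$, so $S$ is $2$-decomposable. Its projection onto the first three coordinates is the non-$2$-decomposable set from \S2.1. Since uniformity over all dimensions is required, this counterexample in dimension $4$ suffices.

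For (ii), fix $n\ge3$ and work in the first three coordinates, padding the remaining ones with $0$. I would take $A=\{0,1\}^2\times\{0\}\cup\{(1,1,1)\}$ and $B=\{(0,0,0),(1,1,1),(1,1,0)\}$-type configurations chosen so that $A\cap B=\{(0,0,0),(1,1,1)\}$, with $A$ and $B$ integrally convex. The intersection $\{(0,0,0),(1,1,1)\}$ is not integrally convex: its midpoint $(\tfrac12,\tfrac12,\tfrac12)$ has $N$ equal to the whole unit cube, but the only nearby points of the intersection are the two endpoints themselves. That fails, since the midpoint does lie on the segment between them. So I would instead use a "long" diagonal, for example $A\cap B=\{(0,0,0),(2,1,1)\}$, whose midpoint $(1,\tfrac12,\tfrac12)$ has a neighborhood disjoint from the set. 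Then I would choose $A=\conv\{(0,0,0),(2,1,1),(1,0,0),(1,1,1)\}\cap\mathbb Z^3$ and $B$ a similar set through $(1,1,0),(1,0,1)$, and verify integral convexity of each using \Cref{lemma:IntConvSet-characterization}. Finding and checking two such integrally convex sets is the main obstacle, as it requires a finite check of midpoints.
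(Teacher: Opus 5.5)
Your reduction matches the paper's. Part \textup{(i)} uses failure of projection-preservation together with \Cref{lem:preserved-classes-projection-intersection}\textup{(ii)}, and part \textup{(ii)} uses failure of intersection-closure together with part \textup{(i)} of that lemma. Your part \textup{(i)} is correct:
\begin{itemize}
\item You use the same set $\{(0,0,0),(2,2,3)\}$ as the paper. Certifying its TVPI-representability through $\Psi(S)=S$ and \Cref{thm:pairwise-closed-convex-hull} is a valid substitute for the paper's explicit system $0\le x_1\le2$, $x_1=x_2$, $3x_1=2x_3$. The key fact is that the segment from $(0,0)$ to $(2,3)$ has no interior lattice point, which forces $x_1\in\{0,2\}$.
\item Drop the sentence just before the $\Psi$ argument. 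Its hypothesis (no extra integer points on the segments) fails for $W=\{1,2\}$, where the segment contains $(1,1)$.
\item Closed-hole-freeness of $S$ comes from the segment having no interior lattice point, since $\gcd(2,2,3)=1$. It does not come from finiteness, which only makes $\conv(S)$ closed.
\item Your labelled lift of the set from Section~2.1 is the same device as the paper's set $T$, and it works.
\end{itemize}

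The gap is in \textup{(ii)}. You stop at ``finding and checking two such sets is the main obstacle'' and describe $B$ only loosely, so as written nothing is proved there. Your candidates do work, however, and the check is short.
\begin{itemize}
\item Since $(0,0,0)+(2,1,1)=(1,0,0)+(1,1,1)=(1,1,0)+(1,0,1)$, both convex hulls are parallelograms: $\{(s+t,t,t)\}$ and $\{(s+t,s,t)\}$ with $s,t\in[0,1]$. Their only lattice points are the vertices.
\item Hence $A=\{(0,0,0),(1,0,0),(1,1,1),(2,1,1)\}$, $B=\{(0,0,0),(1,1,0),(1,0,1),(2,1,1)\}$, and $A\cap B=\{(0,0,0),(2,1,1)\}$.
\item In $A$ and in $B$, every pair other than $\{(0,0,0),(2,1,1)\}$ differs by at most $1$ in each coordinate. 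So both points of such a pair lie in $N$ of their midpoint.
\item For the long diagonal, the midpoint $(1,\tfrac12,\tfrac12)$ is also the midpoint of the short diagonal. The endpoints of the short diagonal lie in $N\bigl((1,\tfrac12,\tfrac12)\bigr)=\{1\}\times\{0,1\}^2$.
\item Thus \Cref{lemma:IntConvSet-characterization} gives integral convexity of $A$ and $B$. Meanwhile $N\bigl((1,\tfrac12,\tfrac12)\bigr)$ contains neither point of $A\cap B$, so $A\cap B$ is not integrally convex.
\end{itemize}
These sets are exactly the paper's $I_2$ and $I_1$, taken from Murota--Shioura, with the first two coordinates interchanged.

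You should also make the padding step explicit. For zero extensions to $\mathbb Z^n$, midpoints have zero tail, and the neighborhood factors as the three-dimensional one times $\{0\}^{n-3}$. So both the midpoint criterion for $A$ and $B$ and its failure for $A\cap B$ carry over to every $n\ge3$.
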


\begin{proof}
We first prove \textup{(i)} for the classes of TVPI-representable,
closed-hole-free, and hole-free sets. Let $S$ be the set in
\Cref{rem:closed-hole-free-not-preserved-by-projection}.
As shown there, $S$ is closed-hole-free and hence hole-free, whereas
$\proj_{\{1,2\}}(S)$ is neither closed-hole-free nor hole-free.

Moreover, $S$ is TVPI-representable. Indeed, $S$ is the set of
integer solutions to the TVPI system
\[
0\leq x_1\leq2,
\qquad
x_1=x_2,
\qquad
3x_1=2x_3,
\]
where each equality is represented by two inequalities. On the other
hand, $\proj_{\{1,2\}}(S)$ is not TVPI-representable. Indeed, every
TVPI-representable set is linearly representable and hence
closed-hole-free by \Cref{prop:linear-holefree}. Thus, none of the
classes of TVPI-representable, closed-hole-free, or hole-free sets is
preserved under coordinate projections. By
\Cref{lem:preserved-classes-projection-intersection}~\textup{(ii)},
none of these three classes admits a characterization by multioperations that is uniform over all dimensions.

We next prove \textup{(i)} for the class of $2$-decomposable sets.
Let
\[
T
=
\{\,
(0,0,1,0),\,
(0,1,0,1),\,
(1,0,0,2)
\,\}
\subseteq
\mathbb Z^4.
\]
The fourth coordinates of the three vectors are pairwise distinct.
If
\[
x
\in
\Join_{W\in\binom{\{1,2,3,4\}}{2}}\proj_W(T),
\]
then $x_4\in\{0,1,2\}$. Once $x_4$ is fixed, the pairwise
projections involving the fourth coordinate uniquely determine all
the other coordinates of $x$. Hence, $x\in T$, proving that $T$ is
$2$-decomposable.

However,
\[
\proj_{\{1,2,3\}}(T)
=
\{\,(0,0,1),(0,1,0),(1,0,0)\,\}
\]
is not $2$-decomposable. Indeed, each of its two-coordinate
projections contains $(0,0)$, and hence $(0,0,0)$ belongs to the join
of these projections, whereas
\[
(0,0,0)
\notin
\proj_{\{1,2,3\}}(T).
\]
Thus, the class of $2$-decomposable sets is not preserved under
coordinate projections. Again by
\Cref{lem:preserved-classes-projection-intersection}~\textup{(ii)},
this class admits no characterization by multioperations that
is uniform over all dimensions.

We finally prove \textup{(ii)}. The following example is taken from
\citep[Example~4.4]{MuS01}. Let
\[
\begin{aligned}
I_1
&=
\{\,
(0,0,0),\,
(0,1,1),\,
(1,1,0),\,
(1,2,1)
\,\},\\
I_2
&=
\{\,
(0,0,0),\,
(0,1,0),\,
(1,1,1),\,
(1,2,1)
\,\}.
\end{aligned}
\]
As shown in \citep[Example~2.10]{MT23}, both $I_1$ and $I_2$ are
integrally convex,
whereas
\[
I_1\cap I_2
=
\{\,(0,0,0),(1,2,1)\,\}
\]
is not integrally convex.
These assertions can also be verified directly using
\Cref{lemma:IntConvSet-characterization}.

For $n>3$, define the zero extensions
\[
I_k^{(n)}
=
I_k\times\{0\}^{n-3}
\qquad
(k=1,2).
\]
It follows immediately from the definition that a set is integrally
convex if and only if its zero extension is integrally convex.
Therefore, $I_1^{(n)}$ and $I_2^{(n)}$ are integrally convex, whereas
\[
I_1^{(n)}\cap I_2^{(n)}
=
(I_1\cap I_2)\times\{0\}^{n-3}
\]
is not integrally convex.
Hence, for every $n\geq3$, the class of integrally convex subsets of
$\mathbb Z^n$ is not closed under intersections. Statement
\textup{(ii)} now follows from
\Cref{lem:preserved-classes-projection-intersection}~\textup{(i)}.
\end{proof}

The dimensional restriction in
\Cref{thm:limits-operation-characterizations}~\textup{(ii)} is sharp.
In dimension two, integrally convex sets coincide with
UTVPI-representable sets by
\Cref{prop:two-dim-IntConv=UTVPI}. Hence, by \Cref{thm:main}, a set
$S\subseteq\mathbb Z^2$ is integrally convex if and only if it is
closed under both $\avgd$ and the median operation.

Together with the positive characterizations established in the
preceding subsections, the theorem clarifies the extent to which the
principal classes considered in this paper can be described by
preservation under operations or multioperations.

\section{Relations among representability, convexity, and closure classes}
\label{sec:class-hierarchy}
\newcommand{\Dmp}{\mathsf{DMP}}

This section summarizes the relations among the classes introduced in
the preceding sections. These include classes defined by inequality
representability, discrete convexity, and closedness under operations.
We consider the same collection of classes in the general, Boolean,
and two-dimensional settings, and describe the inclusions,
equivalences, and incomparabilities among them.

For brevity, we use
$\mathsf{SVPI}$, $\mathsf{DC}$, $\mathsf{UTVPI}$,
$\mathsf{TVPI}$, and $\mathsf{Linear}$ to denote the corresponding
representability classes. We write $\Dmp$, $\mathsf{IC}$,
$\mathsf{CHF}$, $\mathsf{HF}$, $\mathsf{Median}$,
$\mathsf{Majority}$, $\mathsf{2Dec}$, and $\mathsf{All}$ for the
classes of $\avgd$-closed, integrally convex, closed-hole-free,
hole-free, median-closed, majority-closed, $2$-decomposable, and
arbitrary sets in the domain under
consideration, respectively. Here,
$\mathsf{Majority}$ denotes the class of sets that are closed under
at least one majority operation.

\subsection{The general setting}
\label{subsec:representability-hierarchy}

The following proposition summarizes the inclusion hierarchy.
We prove below all inclusions represented by the arrows in 
\Cref{fig:representability-hierarchy}. For
$n\geq3$, the strictness of these inclusions and the absence of any
further inclusion relations are established by the separating
examples in \ref{app:hierarchy-witnesses}.

\begin{figure}[t]
\centering
\begin{tikzpicture}[
  every node/.style={
    font=\small,
    align=center,
    inner sep=3pt
  },
  implication/.style={
    double equal sign distance,
    -{Implies[length=3mm]}
  },
  x=1cm,
  y=1cm
]
\node (all) at (0,8.2) {
  $\mathsf{All}$
};

\node (hf) at (-4.2,7.0) {
  $\mathsf{HF}$
};
\node (decomp) at (4.2,7.0) {
  $\mathsf{2Dec}$
};

\node (chf) at (-4.2,5.8) {
  $\mathsf{CHF}=\mathsf{Linear}$
};
\node (majority) at (4.2,5.8) {
  $\mathsf{Majority}$
};

\node (ic) at (-4.8,4.5) {
  $\mathsf{IC}$
};
\node (median) at (3.1,4.5) {
  $\mathsf{Median}$
};

\node (dmp) at (-4.8,3.2) {
  $\Dmp$
};
\node (tvpi) at (0.8,3.5) {
  $\mathsf{TVPI}$
};

\node (utvpi) at (-1.4,2.0) {
  $\mathsf{UTVPI}$
};

\node (dc) at (-1.4,0.8) {
  $\mathsf{DC}$
};

\node (svpi) at (-1.4,-0.5) {
  $\mathsf{SVPI}$
};

\draw[implication] (svpi) -- (dc);
\draw[implication] (dc) -- (utvpi);
\draw[implication] (utvpi) -- (dmp);
\draw[implication] (utvpi) -- (tvpi);
\draw[implication] (dmp) -- (ic);
\draw[implication] (ic) -- (chf);
\draw[implication] (tvpi) -- (chf);
\draw[implication] (tvpi) -- (median);
\draw[implication] (median) -- (majority);
\draw[implication] (majority) -- (decomp);
\draw[implication] (chf) -- (hf);
\draw[implication] (hf) -- (all);
\draw[implication] (decomp) -- (all);
\end{tikzpicture}
\caption{Hasse diagram of the inclusion hierarchy among the classes of
subsets of $\mathbb Z^n$ for $n\geq3$}
\label{fig:representability-hierarchy}
\end{figure}

\begin{proposition}[Inclusion hierarchy in the general setting]
\label{prop:representability-hierarchy}
For every $n>1$, all inclusions displayed in
\Cref{fig:representability-hierarchy} hold. For $n\geq3$, the figure
is the Hasse diagram of the displayed classes under inclusion.\footnote{Each upward arrow
represents a proper inclusion and the only inclusion relations are those implied by directed paths in the diagram.} 
\end{proposition}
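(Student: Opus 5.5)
The proof goes arrow by arrow through \Cref{fig:representability-hierarchy}, justifying each inclusion from earlier results. The strictness claims and the absence of further relations for $n\geq3$ are deferred to the separating examples in the appendix, as the proposition's preamble says. Each arrow is an inclusion between classes of subsets of $\mathbb Z^n$, valid for every $n>1$.

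\emph{The representability chain.} $\mathsf{SVPI}\subseteq\mathsf{DC}\subseteq\mathsf{UTVPI}\subseteq\mathsf{TVPI}$ follows from the nesting of the system classes: any SVPI (resp.\ DC, UTVPI) representation is also a DC (resp.\ UTVPI, TVPI) representation. For SVPI to DC we use the convention that SVPI coefficients lie in $\{0,\pm1\}$.

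\emph{Operation and convexity classes.} The remaining arrows are as follows.
\begin{itemize}
\item[] $\mathsf{UTVPI}\subseteq\Dmp$ is \Cref{prop:UTVPI-median-avg-closed}.
\item[] $\Dmp\subseteq\mathsf{IC}$ is \Cref{lem:avgd=>int-conv}.
\item[] $\mathsf{IC}\subseteq\mathsf{CHF}$ is \eqref{eq-convexrelations}, which rests on the closedness of convex hulls of integrally convex sets \citep{MT20}.
\item[] $\mathsf{CHF}=\mathsf{Linear}$ is \Cref{prop:linear-holefree}; this equality justifies merging the two into one node.
\item[] $\mathsf{TVPI}\subseteq\mathsf{Linear}$ is trivial, since every TVPI system is linear.
\item[] $\mathsf{CHF}\subseteq\mathsf{HF}$ is the elementary chain $S\subseteq\conv(S)\cap\mathbb Z^V\subseteq\overline{\conv}(S)\cap\mathbb Z^V$.
\item[] $\mathsf{TVPI}\subseteq\mathsf{Median}$ is \Cref{prop:TVPI-is-median-closed}.
\item[] $\mathsf{Median}\subseteq\mathsf{Majority}$ holds because the median is a majority operation.
\item[] $\mathsf{Majority}\subseteq\mathsf{2Dec}$ is \Cref{thm:near-unanimity-closed-is-decompo}.
\item[] The arrows into $\mathsf{All}$ are trivial.
\end{itemize}

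\emph{Hasse diagram for $n\geq3$.} This is the main obstacle. Two things must be shown:
\begin{itemize}
\item[] Every arrow is strict. For each arrow $A\to B$ we exhibit a set in $B\setminus A$.
\item[] No inclusion holds beyond those given by directed paths. For each pair $A,B$ with no directed path from $A$ to $B$, we exhibit a set in $A\setminus B$.
\end{itemize}

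These witnesses are reduced to a small set of critical non-inclusions. A witness in $A\setminus B$ refutes $A'\subseteq B'$ whenever $A'$ lies above $A$ (so the witness is in $A'$) and $B'$ lies below $B$ (so the witness is outside $B'$). It therefore suffices to treat maximal-versus-minimal pairs.

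Most witnesses are already available in the paper:
\begin{itemize}
\item[] The $\{0,1\}^3$ parity set of \Cref{subsec:Properties-solutions} is hole-free (all its integer hull points are its vertices) but not $2$-decomposable. This separates the left branch from $\mathsf{2Dec}$ and everything below it.
\item[] The sets $I_1$, $I_2$ of \Cref{thm:limits-operation-characterizations} serve as integrally convex witnesses; one of them should fail to be $2$-decomposable.
\item[] The set of \Cref{rem:closed-hole-free-not-preserved-by-projection} is TVPI and median-closed, but not integrally convex: its midpoint $(1,1,\tfrac32)$ has no neighbours in $S$. This gives $\mathsf{TVPI}\not\subseteq\mathsf{IC}$.
\item[] Non-convex two-point sets such as $\{(0,0,0),(2,0,0)\}$ are median-closed but not hole-free. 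This separates the right branch from the left.
\end{itemize}

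A few witnesses still need to be constructed:
\begin{itemize}
\item[] A $\mathsf{2Dec}$ set that is not majority-closed.
\item[] A majority-closed set that is not median-closed.
\item[] An integrally convex set that is not $\avgd$-closed.
\item[] A UTVPI set that is not DC, for instance $x_1+x_2\le0$.
\end{itemize}
For $\Dmp\not\subseteq\mathsf{Median}$, \Cref{thm:main} forces any such witness to be non-$2$-decomposable. This is the most delicate case, and I would first search small subsets of $\{0,1,2\}^3$.
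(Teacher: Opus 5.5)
Your inclusion half is correct and matches the paper's argument arrow for arrow. The gap is in the Hasse-diagram half. The paper settles it with a complete set of nine separating examples. Your reduction to critical pairs is the right idea, but the witness list you build from it is incomplete, and you misjudge one case.

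The case you call most delicate, $\mathsf{DMP}\not\subseteq\mathsf{Median}$, is immediate. For $a,b\in\{0,1\}$ one has $\avgd(a,b)=a$, so every subset of $\{0,1\}^n$ is $\avgd$-closed. Hence the parity set you already use lies in $\mathsf{DMP}\setminus\mathsf{2Dec}$, which is consistent with your observation that such a witness must fail $2$-decomposability. That single set refutes $\mathsf{DMP}\subseteq\mathsf{B}$ for every $\mathsf{B}\in\{\mathsf{SVPI},\mathsf{DC},\mathsf{UTVPI},\mathsf{TVPI},\mathsf{Median},\mathsf{Majority},\mathsf{2Dec}\}$, so no search in $\{0,1,2\}^3$ is needed. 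As written, however, you certify the parity set only as hole-free. Your claim that it separates the whole left branch from $\mathsf{2Dec}$ is therefore unjustified for $\mathsf{IC}$ and $\mathsf{DMP}$, and it contradicts your later statement that $\mathsf{DMP}\not\subseteq\mathsf{Median}$ is still open.

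Two critical non-inclusions are missing from your list, and none of your witnesses covers them:
\begin{itemize}
\item $\mathsf{DC}\not\subseteq\mathsf{SVPI}$. A witness is $0\le x_1\le x_2\le 1$ with all other coordinates $0$.
\item $\mathsf{HF}\not\subseteq\mathsf{CHF}$. This needs an infinite set whose convex hull is not closed. Finite hole-free sets are closed-hole-free, and your only infinite set, $x_1+x_2\le 0$, is UTVPI. The paper uses $(\mathbb{Z}\times\{0\})\cup\{(0,1)\}$.
\end{itemize}

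The witnesses for $\mathsf{2Dec}\not\subseteq\mathsf{Majority}$, $\mathsf{Majority}\not\subseteq\mathsf{Median}$ and $\mathsf{IC}\not\subseteq\mathsf{DMP}$ are only announced, and they carry the real content. This is especially true of the first, where every majority operation must be excluded. The paper's choices are:
\begin{itemize}
\item $\{(0,1),(0,2),(1,2),(2,0)\}$ for $\mathsf{2Dec}\not\subseteq\mathsf{Majority}$. Any preserving majority $f$ is forced to satisfy $f(0,1,2)=f(1,2,0)=1$. Then $f$ maps $(0,1),(1,2),(2,0)$ to $(1,1)$, which is not in the set.
\item $\{(0,0),(1,2),(2,1)\}$ for $\mathsf{Majority}\not\subseteq\mathsf{Median}$. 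It is preserved by the majority that returns $0$ on pairwise distinct arguments, but its median is $(1,1)$.
\item $\{(-1,1,1),(0,0,1),(0,1,0),(1,0,0)\}$ for $\mathsf{IC}\not\subseteq\mathsf{DMP}$. It is integrally convex, which needs its own argument; the paper triangulates the parallelogram. But $\avgd$ sends $(-1,1,1),(1,0,0)$ to $(0,1,1)$.
\end{itemize}

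Finally, for $n>3$ you should state that low-dimensional witnesses are used via zero extension. You should also check that every membership and non-membership you rely on survives that extension.
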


\begin{proof}
The inclusions
\[
\mathsf{SVPI}
\subseteq
\mathsf{DC}
\subseteq
\mathsf{UTVPI}
\subseteq
\mathsf{TVPI}
\]
follow directly from the definitions of the corresponding systems.

Every UTVPI-representable set is $\avgd$-closed by
\Cref{prop:UTVPI-median-avg-closed}, and hence
\[
\mathsf{UTVPI}
\subseteq
\Dmp
\subseteq
\mathsf{IC} \subseteq
\mathsf{CHF},
\]
where the second and third inclusions follow from
\Cref{lem:avgd=>int-conv} and 
\eqref{eq-convexrelations}, respectively. 
By \Cref{prop:linear-holefree},
\[
\mathsf{CHF}=\mathsf{Linear},
\]
and the inclusions
\[
\mathsf{CHF}
\subseteq
\mathsf{HF}
\subseteq
\mathsf{All}
\]
follow from the definitions.

Every TVPI-representable set is linearly representable and hence
closed-hole-free. It is also median-closed by
\Cref{prop:TVPI-is-median-closed}. Since the median is a majority
operation,
\[
\mathsf{Median}
\subseteq
\mathsf{Majority},
\]
and every set closed under a majority operation is $2$-decomposable by
\Cref{thm:near-unanimity-closed-is-decompo}. Finally,
\[
\mathsf{2Dec}\subseteq\mathsf{All}
\]
is immediate. This proves all the inclusions represented by the arrows
in \Cref{fig:representability-hierarchy}.

For $n\geq3$, the separating examples in
\ref{app:hierarchy-witnesses} show that every arrow is strict and
that all inclusion relations among the displayed classes are exactly
those implied by directed paths. Hence,
\Cref{fig:representability-hierarchy} is the Hasse diagram of the
displayed classes.
\end{proof}

Thus, for $n\geq3$,
\Cref{prop:representability-hierarchy} determines both all inclusions
and all non-inclusions among the displayed classes. We now use this
complete hierarchy to classify all nontrivial pairwise intersection
characterizations of $\mathsf{UTVPI}$.

\begin{corollary}[Complete list of nontrivial pairwise intersection
characterizations of UTVPI]
\label{cor:complete-UTVPI-intersections}
For $n\geq3$, let $\mathsf{A}$ and $\mathsf{B}$ be two of the classes 
displayed in \Cref{fig:representability-hierarchy}, 
neither of which
is equal to $\mathsf{UTVPI}$. Then
\[
\mathsf{A}\cap\mathsf{B}
=
\mathsf{UTVPI}
\]
if and only if, after possibly interchanging $\mathsf{A}$ and
$\mathsf{B}$,
\[
\mathsf{A}\in\{\Dmp,\mathsf{IC}\}
\quad\text{and}\quad
\mathsf{B}\in
\{
\mathsf{TVPI},
\mathsf{Median},
\mathsf{Majority},
\mathsf{2Dec}
\}.
\]
\end{corollary}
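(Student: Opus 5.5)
The plan has two halves: the ``if'' direction comes from \Cref{thm:main} together with the inclusions of \Cref{prop:representability-hierarchy}, and the ``only if'' direction comes from the separating examples in the appendix.

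For the ``if'' direction, take $\mathsf{A}\in\{\Dmp,\mathsf{IC}\}$ and $\mathsf{B}\in\{\mathsf{TVPI},\mathsf{Median},\mathsf{Majority},\mathsf{2Dec}\}$. By the hierarchy, $\mathsf{UTVPI}\subseteq\Dmp\subseteq\mathsf{IC}$ and $\mathsf{UTVPI}\subseteq\mathsf{TVPI}\subseteq\mathsf{Median}\subseteq\mathsf{Majority}\subseteq\mathsf{2Dec}$. Hence
\[
\mathsf{UTVPI}\subseteq\Dmp\cap\mathsf{TVPI}\subseteq\mathsf{A}\cap\mathsf{B}\subseteq\mathsf{IC}\cap\mathsf{2Dec}.
\]
Condition (iv) of \Cref{thm:main} gives $\mathsf{IC}\cap\mathsf{2Dec}=\mathsf{UTVPI}$, so all these intersections equal $\mathsf{UTVPI}$.

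For the ``only if'' direction, suppose $\mathsf{A}\cap\mathsf{B}=\mathsf{UTVPI}$ with neither class equal to $\mathsf{UTVPI}$. Both classes contain $\mathsf{UTVPI}$, so each lies in the up-set of $\mathsf{UTVPI}$ in \Cref{fig:representability-hierarchy}:
\[
\{\Dmp,\mathsf{IC},\mathsf{CHF},\mathsf{HF},\mathsf{TVPI},\mathsf{Median},\mathsf{Majority},\mathsf{2Dec},\mathsf{All}\}.
\]
I would split this up-set into a left chain $L=\{\Dmp,\mathsf{IC}\}$, a right chain $R=\{\mathsf{TVPI},\mathsf{Median},\mathsf{Majority},\mathsf{2Dec}\}$, and the top part $\{\mathsf{CHF},\mathsf{HF},\mathsf{All}\}$. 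I then need to exclude every pair other than $L\times R$, with one separating set per excluded case.

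\begin{itemize}
\item \emph{Pairs within $L$, or within $R$.} These are comparable, so the intersection is the larger-in-depth class itself, which is strictly larger than $\mathsf{UTVPI}$ by strictness of the arrows.
\item \emph{Pairs involving $\mathsf{CHF}$, $\mathsf{HF}$ or $\mathsf{All}$.} Each such intersection contains $\mathsf{IC}\cap\mathsf{TVPI}$, or $\mathsf{IC}$, or a right class intersected with $\mathsf{CHF}$. This includes the pair with $\mathsf{A}=\mathsf{CHF}$ and $\mathsf{B}\in R$, whose intersection is at least $\mathsf{TVPI}\supsetneq\mathsf{UTVPI}$.
\item \emph{The pair $\mathsf{CHF}\cap\mathsf{IC}$.} This equals $\mathsf{IC}$, which is not $\mathsf{UTVPI}$.
\item \emph{A top class paired with a left class.} Here I need $\Dmp\not\subseteq\mathsf{UTVPI}$, which holds since $\Dmp\neq\mathsf{UTVPI}$; the intersection contains $\Dmp$.
\end{itemize}

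In general, whenever one class is among the top three, the intersection contains $\mathsf{TVPI}$ or $\Dmp$, both of which strictly contain $\mathsf{UTVPI}$ by strictness.

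The step I expect to be the main obstacle is correctly identifying, from the appendix, witnesses for the strictness of $\mathsf{UTVPI}\subsetneq\Dmp$ and $\mathsf{UTVPI}\subsetneq\mathsf{TVPI}$ when $n\geq3$. Once those two strict inclusions are available, every pair outside $L\times R$ has an intersection containing one of $\Dmp$ or $\mathsf{TVPI}$, so this purely order-theoretic bookkeeping completes the proof.
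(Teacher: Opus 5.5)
Your proposal is correct and follows essentially the paper's route. The if-direction is the same sandwich $\mathsf{UTVPI}\subseteq\mathsf{A}\cap\mathsf{B}\subseteq\mathsf{IC}\cap\mathsf{2Dec}=\mathsf{UTVPI}$ via \Cref{thm:main}. The only-if direction uses the complete Hasse diagram as the paper does: every proper superclass of $\mathsf{UTVPI}$ outside $\{\Dmp,\mathsf{IC}\}$ contains $\mathsf{TVPI}$, and otherwise the two classes are comparable.

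The two strictness facts you flag are already in the appendix. Example (c), $T_0$, gives $\mathsf{TVPI}\not\subseteq\mathsf{UTVPI}$. Example (h), $E_{\oplus}$, gives $\Dmp\not\subseteq\mathsf{UTVPI}$, since every Boolean set is $\avgd$-closed but $E_{\oplus}$ is not $2$-decomposable. Drop the phrase ``contains $\mathsf{IC}\cap\mathsf{TVPI}$'', because that intersection is just $\mathsf{UTVPI}$; your closing summary sentence already gives the correct argument.
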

\begin{proof}
We first prove the if-direction. 
By
\Cref{prop:representability-hierarchy}, whose strictness assertion is
established by the separating examples in
\ref{app:hierarchy-witnesses},  we have
\[
\mathsf{UTVPI}
\subsetneq
\Dmp
\subsetneq
\mathsf{IC}
\]
and
\[
\mathsf{UTVPI}
\subsetneq
\mathsf{TVPI}
\subsetneq
\mathsf{Median}
\subsetneq
\mathsf{Majority}
\subsetneq
\mathsf{2Dec}.
\]
Suppose, after possibly interchanging $\mathsf{A}$ and $\mathsf{B}$,
that the right-hand condition in the statement of the corollary
holds. Then
\[
\mathsf{UTVPI}
\subsetneq
\mathsf{A}
\subseteq
\mathsf{IC}
\ \mbox{ and } \
\mathsf{UTVPI}
\subsetneq
\mathsf{B}
\subseteq
\mathsf{2Dec}.
\]
Therefore,
\[
\mathsf{UTVPI}
\subseteq
\mathsf{A}\cap\mathsf{B}
\subseteq
\mathsf{IC}\cap\mathsf{2Dec}
=
\mathsf{UTVPI},
\]
where the equality follows from \Cref{thm:main}. Hence,
\[
\mathsf{A}\cap\mathsf{B}
=
\mathsf{UTVPI}.
\]

We next prove the ``only-if'' direction. Suppose that
\[
\mathsf{A}\cap\mathsf{B}
=
\mathsf{UTVPI}.
\]
Then
\[
\mathsf{A}\supseteq\mathsf{UTVPI}
\qquad\text{and}\qquad
\mathsf{B}\supseteq\mathsf{UTVPI}.
\]
Since neither $\mathsf{A}$ nor $\mathsf{B}$ is equal to
$\mathsf{UTVPI}$, both are proper superclasses of
$\mathsf{UTVPI}$.

We now use the completeness of the Hasse diagram in
\Cref{prop:representability-hierarchy}, established by the separating
examples in \ref{app:hierarchy-witnesses}. 
Every proper displayed
superclass of $\mathsf{UTVPI}$ other than $\Dmp$ and $\mathsf{IC}$
contains $\mathsf{TVPI}$. Hence, if neither $\mathsf{A}$ nor
$\mathsf{B}$ belongs to
$\{\Dmp,\mathsf{IC}\}$, then
\[
\mathsf{A}\cap\mathsf{B}
\supseteq
\mathsf{TVPI}
\supsetneq
\mathsf{UTVPI},
\]
a contradiction. Thus, after possibly interchanging $\mathsf{A}$ and
$\mathsf{B}$, we may assume that
\[
\mathsf{A}\in\{\Dmp,\mathsf{IC}\}.
\]

Suppose now that
\[
\mathsf{B}
\notin
\{
\mathsf{TVPI},
\mathsf{Median},
\mathsf{Majority},
\mathsf{2Dec}
\}.
\]
Since $\mathsf{B}$ is a proper displayed superclass of
$\mathsf{UTVPI}$, the complete Hasse diagram shows that
$\mathsf{A}$ and $\mathsf{B}$ are comparable. Their intersection is
therefore  a proper
superclass of $\mathsf{UTVPI}$,  
again a contradiction. Therefore,
\[
\mathsf{B}\in
\{
\mathsf{TVPI},
\mathsf{Median},
\mathsf{Majority},
\mathsf{2Dec}
\}.
\]
This completes the proof. 
\end{proof}

\begin{remark}[Extremal intersection characterizations of UTVPI]
\label{rem:extremal-UTVPI-intersections}
By \Cref{cor:complete-UTVPI-intersections}, the preceding list
exhausts all nontrivial pairwise intersection characterizations of
$\mathsf{UTVPI}$ among the displayed classes. Since
\[
\Dmp
\subsetneq
\mathsf{IC}
\ \mbox{ and } \
\mathsf{TVPI}
\subsetneq
\mathsf{Median}
\subsetneq
\mathsf{Majority}
\subsetneq
\mathsf{2Dec},
\]
the characterization
\[
\mathsf{UTVPI}
=
\mathsf{IC}\cap\mathsf{2Dec}
\]
uses the largest possible class from each of the two branches, whereas
\[
\mathsf{UTVPI}
=
\Dmp\cap\mathsf{TVPI}
\]
uses the smallest possible class from each branch. Equivalently, the
former combines the least restrictive pair of conditions, while the
latter combines the most restrictive pair.
\end{remark}

\begin{remark}[Failure of an analogous characterization for TVPI]
\label{rem:sharpness-UTVPI-intersection}
In contrast, no analogous nontrivial pairwise intersection
characterization exists for $\mathsf{TVPI}$ among the displayed
classes.
Its two immediate
proper superclasses are $\mathsf{Linear}$ and $\mathsf{Median}$, but
their intersection strictly contains $\mathsf{TVPI}$, as shown by the
following example.
\end{remark}

\begin{example}[Linearly representable and median-closed but not
TVPI-representable]
\label{ex:linear-median-not-TVPI}
Let
\[
L_0
=
\{
a=(0,0,0),\
b=(0,1,1),\
c=(2,1,2)
\}
\subseteq\mathbb Z^3.
\]
We show that
\[
L_0
\in
\mathsf{Linear}\cap\mathsf{Median}
\setminus
\mathsf{TVPI}.
\]

First, $L_0$ is linearly representable; indeed,
\[
L_0
=
\left\{
x\in\mathbb Z^3
\ \middle|\
0\leq x_1\leq 2x_2\leq 2,\quad
x_1+2x_2-2x_3=0
\right\}.
\]
The equality may be replaced by two linear inequalities.
These constraints imply $0\leq x_2\leq1$. If $x_2=0$, then
$x_1=x_3=0$, yielding $a$.  If $x_2=1$, then $0\leq x_1\leq2$ and
\[
x_1+2=2x_3.
\]
Since $x_3$ is an integer, $x_1$ must be even. Hence,
$x_1\in\{0,2\}$, yielding $b$ and $c$, respectively. Thus, the
displayed system has exactly $L_0$ as its set of integer solutions.

Moreover, $L_0$ is median-closed. Every triple of points from $L_0$
with a repeated point has the repeated point as its median, while
\[
\median(a,b,c)=b.
\]

It remains to show that $L_0$ is not TVPI-representable. By
\Cref{thm:pairwise-closed-convex-hull}, it suffices to show that
\[
L_0\neq\Psi(L_0).
\]
Consider the vector
$
z=(1,1,1)\notin L_0.
$
Then
\[
\begin{aligned}
\proj_{\{1,2\}}(z)
&=
\frac12\proj_{\{1,2\}}(b)
+
\frac12\proj_{\{1,2\}}(c),\\
\proj_{\{1,3\}}(z)
&=
\frac12\proj_{\{1,3\}}(a)
+
\frac12\proj_{\{1,3\}}(c),\\
\proj_{\{2,3\}}(z)
&=
\proj_{\{2,3\}}(b).
\end{aligned}
\]
Thus, every two-coordinate projection of $z$ belongs to the closed
convex hull of the corresponding projection of $L_0$. Hence,
$z\in\Psi(L_0)$, and therefore
\[
L_0\neq\Psi(L_0).
\]
By \Cref{thm:pairwise-closed-convex-hull}, $L_0$ is not
TVPI-representable. Consequently,
\[
L_0
\in
(\mathsf{Linear}\cap\mathsf{Median})
\setminus
\mathsf{TVPI}.
\]

For $n>3$, the zero extension
\[
L_0\times\{0\}^{n-3}
\]
is again linearly representable and median-closed. It is not
TVPI-representable either: if it had a TVPI representation, setting
the last $n-3$ variables to zero would yield a TVPI representation of
$L_0$. Since every TVPI-representable set is linearly representable
and median-closed, it follows that, for every $n\geq3$,
\[
\mathsf{TVPI}
\subsetneq
\mathsf{Linear}\cap\mathsf{Median}.
\]

As shown in
\Cref{fig:representability-hierarchy}, $\mathsf{Linear}$ and
$\mathsf{Median}$ are the two immediate proper superclasses of
$\mathsf{TVPI}$. Every proper displayed superclass of
$\mathsf{TVPI}$ therefore contains at least one of these two classes,
and hence contains their intersection. Thus, every intersection of
proper displayed superclasses contains
$\mathsf{Linear}\cap\mathsf{Median}$ and is strictly larger than
$\mathsf{TVPI}$. Consequently, no such intersection characterizes
$\mathsf{TVPI}$.
\end{example}

\subsection{The Boolean setting}
\label{subsec:boolean-setting}

We next specialize the same classes to the Boolean domain. The
resulting inclusion hierarchy is shown in
\Cref{fig:inclusion-Boolean}.

\begin{proposition}[Hierarchy in the Boolean setting]
\label{prop:hierarchy-in-Boolean}
For a set $S\subseteq\{0,1\}^n$,  the classes considered above collapse
as follows:
\begin{align}
\mathsf{All}
&=
\mathsf{HF}
=
\mathsf{CHF}
=
\mathsf{Linear}
=
\mathsf{IC}
=
\Dmp,\label{eq-boolclass1}\\
\mathsf{2Dec}
&=
\mathsf{Majority}
=
\mathsf{Median}
=
\mathsf{TVPI}
=
\mathsf{UTVPI}.\label{eq-boolclass2}
\end{align}
Moreover,
\[
\mathsf{SVPI}
\subsetneq
\mathsf{DC}
\subsetneq
\mathsf{2Dec}
\subseteq
\mathsf{All}.
\]
The last inclusion is strict for $n\geq3$ and is an equality for
$n=2$.
\end{proposition}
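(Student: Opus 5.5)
For the first chain \eqref{eq-boolclass1}, I will show $\Dmp=\mathsf{All}$ and then use the inclusions $\Dmp\subseteq\mathsf{IC}\subseteq\mathsf{CHF}=\mathsf{Linear}\subseteq\mathsf{HF}\subseteq\mathsf{All}$ from \Cref{prop:representability-hierarchy}. For $a,b\in\{0,1\}$ the directed midpoint gives $\avgd(a,b)=a$. Indeed, if $a=b$ this is immediate. If $a=1,b=0$, then $\avgd(1,0)=\lceil 1/2\rceil=1$. If $a=0,b=1$, then $\avgd(0,1)=\lfloor 1/2\rfloor=0$. Hence $\avgd(x,y)=x$ for all $x,y\in\{0,1\}^n$, so every Boolean set is $\avgd$-closed and the whole chain collapses to $\mathsf{All}$.

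For \eqref{eq-boolclass2}, the inclusions $\mathsf{UTVPI}\subseteq\mathsf{TVPI}\subseteq\mathsf{Median}\subseteq\mathsf{Majority}\subseteq\mathsf{2Dec}$ are already available. It remains to show $\mathsf{2Dec}\subseteq\mathsf{UTVPI}$ for Boolean sets, and I plan to use \Cref{thm:main}(iv). Let $S\subseteq\{0,1\}^n$ be $2$-decomposable. By the first part, $S$ is $\avgd$-closed and hence integrally convex by \Cref{lem:avgd=>int-conv}, so \Cref{thm:main} gives that $S$ is UTVPI-representable. One caveat is that UTVPI-representability here means representability as a subset of $\mathbb Z^n$, not just within the cube. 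This is consistent, because \Cref{thm:main} is stated for arbitrary $S\subseteq\mathbb Z^V$ and $\Dmp$-closedness was checked in $\mathbb Z^n$.

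For the last chain, $\mathsf{SVPI}\subseteq\mathsf{DC}\subseteq\mathsf{UTVPI}=\mathsf{2Dec}$ follows from the definitions. I will establish strictness by explicit examples, padding with zero coordinates when $n>2$:
\begin{itemize}
\item $\{(0,0),(1,1)\}$ (the diagonal) is DC, via $x_1-x_2\le0$, $x_2-x_1\le0$ and $0\le x_i\le1$. It is not box-closed, hence not SVPI by \Cref{thm:SVPI-neighborhood-characterization}.
\item $\{(0,1),(1,0)\}$ is UTVPI, via $x_1+x_2=1$ and bounds. It is not DC, because $m^+((0,1),(1,0))=(1,1)$ lies outside the set, so \Cref{thm:DC-midpoint-characterization} fails.
\item For $n\ge3$, the set $\{(0,0,0),(0,1,1),(1,0,1),(1,1,0)\}$ from \Cref{subsec:Properties-solutions}, padded with zeros, is not $2$-decomposable, which gives $\mathsf{2Dec}\subsetneq\mathsf{All}$.
\end{itemize}
For $n=2$, every set is trivially $2$-decomposable, since the join over the single pair $W=V$ is the projection $\proj_V(S)=S$. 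Hence $\mathsf{2Dec}=\mathsf{All}$ in that case.

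The only step I expect to need care is the zero-padding. I need that padding preserves $2$-decomposability, which holds because the new pairs contribute only the constraint $x_j=0$. I also need that it preserves the failure of SVPI, DC and $2$-decomposability, which follows by restricting to the slice where the padded coordinates are zero. Each of these is routine to check.
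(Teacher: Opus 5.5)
Your proposal is correct. It differs from the paper's proof essentially in one step: how you obtain $\mathsf{2Dec}\subseteq\mathsf{UTVPI}$ for Boolean sets.

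\textbf{Your route.} You derive this inclusion as a corollary of the main theorem. Every Boolean set is $\avgd$-closed, hence integrally convex by \Cref{lem:avgd=>int-conv}. So a $2$-decomposable Boolean set satisfies \Cref{thm:main}\textup{(iv)} and is therefore UTVPI-representable.

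\textbf{The paper's route.} The paper instead writes down an explicit UTVPI system. It takes the bounds $0\le x_i\le 1$, together with one inequality $(2a-1)x_i+(2b-1)x_j\le a+b-1$ for each pair $W=\{i,j\}$ and each missing pattern $(a,b)\in\{0,1\}^2\setminus\proj_W(S)$. Among Boolean points, this inequality excludes exactly the pattern $(a,b)$. The integer solution set of the system is therefore $\Join_{W}\proj_W(S)=S$.

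\textbf{Comparison.} Your route is shorter and shows directly that the Boolean collapse is a special case of \Cref{thm:main}. However, it imports that theorem's heavy ingredients: projection-stability of integral convexity and \Cref{prop:two-dim-IntConv=UTVPI}. Neither is needed here, because every subset of $\{0,1\}^2$ is trivially UTVPI-representable. The paper's argument is elementary and self-contained, and it yields an explicit representation with at most $2n+4\binom{n}{2}$ inequalities. It is in effect the Boolean instance of the lifting step in \Cref{thm:main}, namely the classical $2$-clause description of bijunctive relations.

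\textbf{Witnesses.} Your separating examples differ from the appendix's. You use $\{(0,0),(1,1)\}$ via box-closedness instead of $D_{\mathrm B}$, and $\{(0,1),(1,0)\}$ via $m^+$ instead of $U_{\mathrm B}$ via $m^-$. Both are equally valid, and your zero-padding checks are correct.
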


\begin{figure}[tb]
\centering
\begin{tikzpicture}[
  every node/.style={
    font=\small,
    align=center,
    inner sep=3pt
  },
  implication/.style={
    double equal sign distance,
    -{Implies[length=3mm]}
  },
  x=1cm,
  y=1.45cm
]
\node (all) at (0,3) {
  $\mathsf{All}
   =\mathsf{HF}   
   =\mathsf{CHF}
   =\mathsf{Linear}
   =\mathsf{IC}
   =\Dmp$
};

\node (two) at (0,2) {
  $\mathsf{2Dec}
  =\mathsf{Majority}
  =\mathsf{Median}
   =\mathsf{TVPI}
    =\mathsf{UTVPI}
   $
};

\node (dc) at (0,1) {
  $\mathsf{DC}$
};

\node (svpi) at (0,0) {
  $\mathsf{SVPI}$
};

\draw[implication] (svpi) -- (dc);
\draw[implication] (dc) -- (two);
\draw[implication] (two) -- (all);
\end{tikzpicture}
\caption{Hasse diagram of the inclusion hierarchy among the classes of
subsets of $\{0,1\}^n$ for $n\geq3$.  When $n=2$, the top two nodes coincide.}
\label{fig:inclusion-Boolean}
\end{figure}

\begin{proof}
For $a,b\in\{0,1\}$, we have
\[
\avgd(a,b)=a.
\]
Thus, $\avgd(x,y)=x$ for all $x,y\in\{0,1\}^n$, and hence every
subset of $\{0,1\}^n$ is $\avgd$-closed. Together with
\Cref{prop:representability-hierarchy}, this proves the equalities in
\eqref{eq-boolclass1}.

Next, let $S\subseteq\{0,1\}^n$ be $2$-decomposable. 
We first impose the unary bounds
\[
0\leq x_i\leq1
\qquad
(i\in V).
\]
For each $W=\{i,j\}$ with $1\leq i<j\leq n$ and each
\[
(a,b)\in\{0,1\}^2\setminus\proj_W(S),
\]
consider the inequality
\[
(2a-1)x_i+(2b-1)x_j\leq a+b-1.
\]
Then for Boolean $x_i$ and $x_j$, this inequality holds if and only if
\[
(x_i,x_j)\neq(a,b).
\]
Therefore, the set of Boolean solutions to these inequalities is
\[
\Join_{W\in\binom{[n]}{2}}\proj_W(S)=S,
\]
where $[n] = \{1,\dots, n\}$.
Thus, $S$ is UTVPI-representable. Together with
\Cref{prop:representability-hierarchy}, this proves the equalities in
\eqref{eq-boolclass2}.

For $n=2$, every subset $S$ of $\{0,1\}^2$ is $2$-decomposable, and
hence
\[
\mathsf{2Dec}=\mathsf{All}.
\]
It follows that the two nodes corresponding to
\eqref{eq-boolclass1} and \eqref{eq-boolclass2} coincide.

Finally, the examples in the appendix show that every arrow other
than the top one represents a proper inclusion for $n\geq2$, whereas
the top arrow represents a proper inclusion for $n\geq3$.
\end{proof}

\subsection{The two-dimensional setting}
\label{rmk:two-dim-01-characterization}
Finally, we specialize the same classes to subsets of $\mathbb Z^2$.
The resulting inclusion hierarchy is shown in
\Cref{fig:inclusion-two-dim}.

\begin{figure}[tb]
\centering
\begin{tikzpicture}[
  every node/.style={
    font=\small,
    align=center,
    inner sep=3pt
  },
  implication/.style={
    double equal sign distance,
    -{Implies[length=3mm]}
  },
  x=1cm,
  y=1.12cm
]
\node (all) at (0,7) {
  $\mathsf{2Dec}=\mathsf{All}$
};

\node (majority) at (0,6) {
  $\mathsf{Majority}$
};

\node (median) at (0,5) {
  $\mathsf{Median}$
};

\node (hf) at (0,4) {
  $\mathsf{HF}$
};

\node (tvpi) at (0,3) {
  $\mathsf{TVPI}=\mathsf{CHF}=\mathsf{Linear}$
};

\node (utvpi) at (0,2) {
  $\mathsf{UTVPI}=\Dmp=\mathsf{IC}$
};

\node (dc) at (0,1) {
  $\mathsf{DC}$
};

\node (svpi) at (0,0) {
  $\mathsf{SVPI}$
};

\draw[implication] (svpi) -- (dc);
\draw[implication] (dc) -- (utvpi);
\draw[implication] (utvpi) -- (tvpi);
\draw[implication] (tvpi) -- (hf);
\draw[implication] (hf) -- (median);
\draw[implication] (median) -- (majority);
\draw[implication] (majority) -- (all);
\end{tikzpicture}
\caption{Hasse diagram of the inclusion hierarchy among the classes of
subsets of $\mathbb Z^2$.}
\label{fig:inclusion-two-dim}
\end{figure}

\begin{proposition}[Complete inclusion hierarchy in dimension two]
\label{prop:hierarchy-in-2-dim}
For subsets of $\mathbb Z^2$, the complete inclusion hierarchy among
the classes considered above is given by
\[
\begin{aligned}
\mathsf{SVPI}
&\subsetneq
\mathsf{DC}
\subsetneq
\mathsf{UTVPI}
=
\Dmp
=
\mathsf{IC}\\
&\subsetneq
\mathsf{TVPI}
=
\mathsf{CHF}
=
\mathsf{Linear}
\subsetneq
\mathsf{HF}\\
&\subsetneq
\mathsf{Median}
\subsetneq
\mathsf{Majority}
\subsetneq
\mathsf{2Dec}
=
\mathsf{All}.
\end{aligned}
\]
\end{proposition}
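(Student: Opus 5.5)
The plan is to obtain the chain by first collapsing the classes that coincide in dimension two, then proving the remaining inclusions between consecutive nodes, and finally checking strictness with explicit witnesses. By \Cref{fig:inclusion-two-dim}, completeness of the hierarchy then reduces to strictness of each arrow, because the diagram is a single chain.

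\emph{Collapses.} With $n=2$ we have $\binom{V}{2}=\{V\}$, so $\Join_{W}\proj_W(S)=S$ for every $S$. Hence $\mathsf{2Dec}=\mathsf{All}$. Every linear inequality on $\mathbb Z^2$ involves at most two variables, so $\mathsf{Linear}\subseteq\mathsf{TVPI}$. Together with $\mathsf{TVPI}\subseteq\mathsf{Linear}$ and \Cref{prop:linear-holefree}, this gives $\mathsf{TVPI}=\mathsf{CHF}=\mathsf{Linear}$. \Cref{prop:two-dim-IntConv=UTVPI} gives $\mathsf{UTVPI}=\mathsf{IC}$. Combined with $\mathsf{UTVPI}\subseteq\Dmp\subseteq\mathsf{IC}$ from \Cref{prop:UTVPI-median-avg-closed} and \Cref{lem:avgd=>int-conv}, this yields $\mathsf{UTVPI}=\Dmp=\mathsf{IC}$.

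\emph{Inclusions.} The inclusions $\mathsf{SVPI}\subseteq\mathsf{DC}\subseteq\mathsf{UTVPI}\subseteq\mathsf{TVPI}$, $\mathsf{CHF}\subseteq\mathsf{HF}$ and $\mathsf{Median}\subseteq\mathsf{Majority}$ are all from \Cref{prop:representability-hierarchy}. The genuinely two-dimensional step, and the one I expect to be the main point, is $\mathsf{HF}\subseteq\mathsf{Median}$. The plan is to show that for any $p,q,r\in\mathbb R^2$ the coordinatewise median $m=\median(p,q,r)$ lies in $\conv\{p,q,r\}$. Then for hole-free $S$ we get $m\in\conv(S)\cap\mathbb Z^2=S$, since $m$ is integral.

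For the geometric claim I would argue by contradiction. If $m\notin\conv\{p,q,r\}$, a separating line gives $c\neq 0$ with $c\cdot m>c\cdot p$, $c\cdot m>c\cdot q$ and $c\cdot m>c\cdot r$. Using the symmetry $x_i\mapsto -x_i$, assume $c_1,c_2\ge 0$. Since $m_1$ is the median of the first coordinates, at least two of the three points have first coordinate $\ge m_1$; similarly at least two have second coordinate $\ge m_2$. By pigeonhole, some single point $s$ satisfies both $s_1\ge m_1$ and $s_2\ge m_2$. Then $c\cdot s\ge c\cdot m$, a contradiction. This is the only place where $n=2$ is essential: the pigeonhole fails for three coordinates, consistent with \Cref{fig:representability-hierarchy}.

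\emph{Strictness.} For each arrow I need a witness in $\mathbb Z^2$; I expect these are collected in the appendix.
\begin{itemize}
\item $\mathsf{SVPI}\subsetneq\mathsf{DC}$: the set $\{x\mid x_1\le x_2\}$.
\item $\mathsf{DC}\subsetneq\mathsf{UTVPI}$: the set $\{x\mid x_1+x_2\le 0\}$.
\item $\mathsf{UTVPI}\subsetneq\mathsf{TVPI}$: the set $\{(0,0),(1,2)\}$, which is closed-hole-free but not integrally convex.
\item $\mathsf{CHF}\subsetneq\mathsf{HF}$: an infinite set such as $\{x\in\mathbb Z^2\mid x_2\le \sqrt2\, x_1\}$ with a strict irrational boundary, or a variant with $\{x\mid x_1\ge 1\}\cup\{(0,0)\}$-type behavior whose convex hull is not closed.
\item $\mathsf{HF}\subsetneq\mathsf{Median}$: the set $\{(0,0),(2,0)\}$, which is median-closed but has a hole.
\end{itemize}
The last two arrows, $\mathsf{Median}\subsetneq\mathsf{Majority}$ and $\mathsf{Majority}\subsetneq\mathsf{All}$, need small finite relations, and I would search among three- and four-point sets. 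For $\mathsf{Median}\subsetneq\mathsf{Majority}$, a set such as $\{(0,0),(1,2),(2,1)\}$ is not median-closed, since its median is $(1,1)$, but it admits an ad hoc majority operation. For $\mathsf{Majority}\subsetneq\mathsf{All}$, I would look for a relation in which the coordinate triples force incompatible values of $g$ on a single argument triple. Finding the witness for $\mathsf{Majority}\ne\mathsf{All}$ is the secondary obstacle.
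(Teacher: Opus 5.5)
Your collapses and inclusions are correct. Your proof of $\mathsf{HF}\subseteq\mathsf{Median}$ is a valid alternative to the paper's route. The paper notes that $\conv\{x,y,z\}\cap\mathbb Z^2$ is linearly, hence (in dimension two) TVPI-representable, and then invokes \Cref{prop:TVPI-is-median-closed}. You instead prove directly, by strict separation and pigeonhole, that the coordinatewise median of three points of $\mathbb R^2$ lies in their convex hull. Your argument is self-contained and shows why this inclusion is special to dimension two.

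The genuine gap is in the strictness half of the statement, which is what makes the hierarchy \emph{complete}. You give no witness for $\mathsf{Majority}\subsetneq\mathsf{2Dec}=\mathsf{All}$, only a search strategy. This step is not routine: you must exhibit a subset of $\mathbb Z^2$ preserved by \emph{no} majority operation. The paper uses $R=\{(0,1),(0,2),(1,2),(2,0)\}$. Suppose a majority $f$ preserved $R$, and set $\alpha=f(0,1,2)$. Applying $f$ to $(2,0),(0,1),(0,2)$ and to $(0,2),(1,2),(2,0)$ gives $(0,\alpha),(\alpha,2)\in R$, forcing $\alpha=1$. The analogous triples force $\beta=f(1,2,0)=1$. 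Then $f$ applied to $(0,1),(1,2),(2,0)$ gives $(1,1)\notin R$. This is exactly the mechanism you describe, but without a concrete relation the top arrow remains unproved.

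Your first candidate for $\mathsf{CHF}\subsetneq\mathsf{HF}$ also fails as written. The set $\{x\in\mathbb Z^2\mid x_2\le\sqrt2\,x_1\}$ is the integer solution set of a single linear inequality. It is therefore linearly representable, hence closed-hole-free by \Cref{prop:linear-holefree}. The strict version $x_2<\sqrt2\,x_1$ would work, but only after a Diophantine-approximation argument placing $(0,0)$ in its closed convex hull. Your second candidate, $\{x\in\mathbb Z^2\mid x_1\ge1\}\cup\{(0,0)\}$, is correct and is the one to use. Its convex hull is $\{x\mid x_1>0\}\cup\{(0,0)\}$, whose integer points form the set itself, while the closure of that hull contains $(0,1)$.

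Finally, for $\mathsf{Median}\subsetneq\mathsf{Majority}$ you should specify the operation. For example, take the majority value when one exists and $0$ otherwise; this sends every triple of distinct points of $\{(0,0),(1,2),(2,1)\}$ to $(0,0)$, and every triple with a repeat to the repeated point.
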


\begin{proof}
By \Cref{prop:representability-hierarchy}, it remains only to establish
the following additional inclusions specific to dimension two:
\[
\mathsf{IC}
\subseteq
\mathsf{UTVPI},
\qquad
\mathsf{Linear}
\subseteq
\mathsf{TVPI},
\qquad
\mathsf{HF}
\subseteq
\mathsf{Median},
\qquad
\mathsf{All}
\subseteq
\mathsf{2Dec}.
\]

The first inclusion follows from
\Cref{prop:two-dim-IntConv=UTVPI}.
The second follows because every linear inequality in two variables
is a TVPI inequality.

To prove the third inclusion, let $S\subseteq\mathbb Z^2$ be
hole-free, and let $x,y,z\in S$. Since
$\conv\{x,y,z\}$ is a polytope, the set
\[
\conv\{x,y,z\}\cap\mathbb Z^2
\]
is linearly representable and hence TVPI-representable. It is therefore
median-closed by \Cref{prop:TVPI-is-median-closed}. Consequently,
\[
\median(x,y,z)
\in
\conv\{x,y,z\}\cap\mathbb Z^2
\subseteq
\conv(S)\cap\mathbb Z^2
=
S.
\]
Thus, $S$ is median-closed.

Finally, every subset $S$ of $\mathbb Z^2$ is $2$-decomposable, since
\[
\Join_{W\in\binom{[2]}{2}}\proj_W(S)
=
\proj_{\{1,2\}}(S)
=
S.
\]
Combining these inclusions with
\Cref{prop:representability-hierarchy} yields all the equalities and
inclusions in the displayed hierarchy.

The seven proper inclusions displayed above are witnessed by the
examples in the appendix.
\end{proof}

\section*{Acknowledgments}
The authors thank Akihisa Tamura for helpful comments and suggestions that improved the presentation of the paper.
The first author was partially supported by JST, ACT-X Grant Number JPMJAX200C and ERATO Grant Number JPMJER2301, and JSPS KAKENHI Grant Number JP21K17700, Japan.
The third author was supported by JST CREST Grant Number JPMJCR22M1, Japan.
The fourth author was supported by WISE program (MEXT) at Kyushu University.

 \bibliographystyle{abbrvnat}
 \bibliography{UTVPI}
 



\appendix

\section{Separating examples for the inclusion hierarchies}
\label{app:hierarchy-witnesses}
In this appendix, we give separating examples for the inclusion
hierarchies presented in \Cref{sec:class-hierarchy}, first for the
two-dimensional setting, then for the Boolean setting, and finally
for the general setting.

In the following examples, whenever a set
$S\subseteq\mathbb Z^d$ with $d<n$ is used in dimension $n$, we
identify it with its zero extension
\[
S\times\{0\}^{n-d}.
\]
All membership and nonmembership assertions stated below are preserved
under this extension.

We first consider the two-dimensional setting. The examples in
\textup{(a)}--\textup{(g)} witness, in order from bottom to top, the
seven proper inclusions in \Cref{prop:hierarchy-in-2-dim}.

\begin{enumerate}
\setlength{\itemsep}{5pt}
\setlength{\parsep}{0pt}

\item[\textup{(a)}]
The Boolean set
\[
D_{\mathrm B}
=
\{\,(0,0),(0,1),(1,1)\,\}
=
\{\,(x_1,x_2)\in\mathbb Z^2
\mid
0\leq x_1\leq x_2\leq1\,\}
\]
is DC-representable but not SVPI-representable. Indeed, every
SVPI-representable set is the Cartesian product of its one-coordinate
projections, whereas both one-coordinate projections of
$D_{\mathrm B}$ are $\{0,1\}$ and
$D_{\mathrm B}\neq\{0,1\}^2$.

\item[\textup{(b)}]
The Boolean set
\[
\begin{aligned}
U_{\mathrm B}
&=
\{\,(0,1),(1,0),(1,1)\,\}\\
&=
\{\,(x_1,x_2)\in\mathbb Z^2
\mid
0\leq x_1,x_2\leq1,\  x_1+x_2\geq1\,\}.
\end{aligned}
\]
is UTVPI-representable. By
\Cref{cor:DC-fixed-point-characterization}, it is not
DC-representable, since
\[
(0,1),(1,0)\in U_{\mathrm B}
\quad\text{and}\quad
m^-\bigl((0,1),(1,0)\bigr)
=
(0,0)
\notin U_{\mathrm B}.
\]

\item[\textup{(c)}]
Let
\[
T_0
=
\{\,(x_1,x_2)\in\mathbb Z^2
\mid
x_1+2x_2\geq2,\quad 0\leq x_1,x_2\leq2\,\}.
\]
Then $T_0$ is TVPI-representable. By
\Cref{thm:main}, it is not
UTVPI-representable, since
\[
(2,0),(0,1)\in T_0 \ \mbox{ and } \
\avgd\bigl((2,0),(0,1)\bigr)=(1,0)\notin T_0.
\]

\item[\textup{(d)}]
The set
\[
H_0
=
(\mathbb Z\times\{0\})\cup\{(0,1)\}
\]
is hole-free but not closed-hole-free. Indeed,
\[
\conv(H_0)
=
\bigl(\mathbb R\times[0,1)\bigr)\cup\{(0,1)\} \ \mbox{ and } \
\overline{\conv}(H_0)
=
\mathbb R\times[0,1].
\]
Consequently,
\[
\conv(H_0)\cap\mathbb Z^2
=
H_0 \ \mbox{ and } \
\overline{\conv}(H_0)\cap\mathbb Z^2
=
\mathbb Z\times\{0,1\}
\supsetneq
H_0.
\]

\item[\textup{(e)}]
The set
\[
G_0
=
\{\,(0,0),(2,0)\,\}
\]
is median-closed, as is every two-element set, but it is not
hole-free, since
\[
(1,0)\in\conv(G_0)\cap\mathbb Z^2\setminus G_0.
\]

\item[\textup{(f)}]
Let
\[
R_{\mathrm{maj}}
=
\{\,(0,0),(1,2),(2,1)\,\}.
\]
Define $f\colon\mathbb Z^3\to\mathbb Z$ to return the value occurring
at least twice whenever such a value exists, and to return $0$ when
its arguments are pairwise distinct. Then $f$ is a majority operation.
For any triple of points from $R_{\mathrm{maj}}$, its componentwise
image under $f$ is the repeated point if one exists, and is $(0,0)$
otherwise. Hence $R_{\mathrm{maj}}$ is $f$-closed. However, it is not
median-closed, because
\[
\median\bigl((0,0),(1,2),(2,1)\bigr)
=
(1,1)
\notin
R_{\mathrm{maj}}.
\]

\item[\textup{(g)}]
The set
\[
R_{\mathrm{nomaj}}
=
\{\,(0,1),(0,2),(1,2),(2,0)\,\}
\]
is $2$-decomposable, since every subset of $\mathbb Z^2$ is
$2$-decomposable. However, it is not closed under any majority
operation. Suppose, to the contrary, that a majority operation $f$
preserves $R_{\mathrm{nomaj}}$, and set
\[
\alpha=f(0,1,2),
\qquad
\beta=f(1,2,0).
\]
Applying $f$ componentwise to
\[
(2,0),(0,1),(0,2)
\qquad\text{and}\qquad
(0,2),(1,2),(2,0)
\]
gives, respectively,
\[
(0,\alpha)\in R_{\mathrm{nomaj}},
\qquad
(\alpha,2)\in R_{\mathrm{nomaj}},
\]
and hence $\alpha=1$. Similarly, applying $f$ componentwise to
\[
(0,1),(0,2),(2,0)
\qquad\text{and}\qquad
(1,2),(2,0),(0,2)
\]
gives, respectively,
\[
(0,\beta)\in R_{\mathrm{nomaj}},
\qquad
(\beta,2)\in R_{\mathrm{nomaj}},
\]
and hence $\beta=1$. Applying $f$ componentwise to
$(0,1),(1,2),(2,0)$ now gives
\[
(\alpha,\beta)
=
(1,1)
\notin
R_{\mathrm{nomaj}},
\]
contradicting the assumption that $f$ preserves
$R_{\mathrm{nomaj}}$.
\end{enumerate}

We next turn to the Boolean setting. By the zero-extension convention
above, for every $n\geq2$, the Boolean sets $D_{\mathrm B}$ and
$U_{\mathrm B}$ in \textup{(a)} and \textup{(b)} are regarded as
subsets of $\{0,1\}^n$. They witness, respectively, the bottom two
proper inclusions in \Cref{prop:hierarchy-in-Boolean}. For every
$n\geq3$, the following three-dimensional Boolean set, viewed as a
subset of $\{0,1\}^n$ by zero extension, witnesses the remaining
proper inclusion. When $n=2$, the top two classes coincide.

\begin{enumerate}
\setlength{\itemsep}{5pt}
\setlength{\parsep}{0pt}

\item[\textup{(h)}]
Consider the Boolean even-parity set
\[
E_{\oplus}
=
\{\,
x\in\{0,1\}^3
\mid
x_1\oplus x_2\oplus x_3=0
\,\}, 
\]
where $\oplus$ denotes addition modulo $2$. 
Then $E_{\oplus}$ is not $2$-decomposable, since every two-coordinate projection of $E_{\oplus}$ is
$\{0,1\}^2$, and hence the join of these projections  is
$\{0,1\}^3\supsetneq E_{\oplus}$.
Thus, $E_{\oplus}$ belongs to $\mathsf{ALL}$ but not to $\mathsf{2Dec}$.
\end{enumerate}

Finally, we consider the general setting over $\mathbb Z^n$ with
$n\geq3$. By the zero-extension convention, the preceding examples
already witness every proper inclusion in
\Cref{prop:representability-hierarchy} except
\[
\Dmp\subsetneq\mathsf{IC}.
\]
The following example provides the remaining witness.

\begin{enumerate}
\setlength{\itemsep}{5pt}
\setlength{\parsep}{0pt}

\item[\textup{(i)}]
Let
\[
I_0=\{a=(-1,1,1),b=(0,0,1),c=(0,1,0),d=(1,0,0)\}.\] All four points lie in the plane
$x_1+x_2+x_3=1$, and
\[
a+d=b+c.
\]
Hence $a,b,d,c$, in this cyclic order, are the vertices of a
parallelogram. Its diagonal $[b,c]$ divides it into two triangles,
and therefore
\[
\conv(I_0)
=
\conv\{a,b,c\}
\cup
\conv\{b,c,d\}.
\]

Let $u\in\conv(I_0)$, and choose one of the two triangles above
containing $u$. Write $u$ as a convex combination of the three
vertices of this triangle. Since, in each coordinate, the values of
these three vertices differ by at most one, every vertex having a
positive coefficient in this convex combination belongs to $N(u)$.
Consequently,
\[
u\in\conv\bigl(I_0\cap N(u)\bigr).
\]
It follows directly from the definition that $I_0$
is integrally convex.

On the other hand,
\[
\avgd\bigl((-1,1,1),(1,0,0)\bigr)
=
(0,1,1)
\notin
I_0.
\]
Thus, $I_0$ is not $\avgd$-closed, and hence
\[
I_0\in\mathsf{IC}\setminus\Dmp.
\]
This proves the remaining proper inclusion.

\end{enumerate}

\end{document}